\title{Cluster extent inference revisited: quantification and localization of brain activity} 
\author {
Jelle J. Goeman\footnote{Biomedical Data Sciences, Leiden University Medical Center, Leiden, The Netherlands} \and
Pawe\l\ G\'orecki\footnote{Institute of Informatics, Faculty of Mathematics, Informatics and Mechanics,  
University of Warsaw, Poland} \and
Ramin Monajemi$^*$ \and
Xu Chen$^*$ \and
Thomas E. Nichols\footnote{Big Data Institute, Li Ka Shing Centre for Health Information and Discovery, Nuffield Department of Population Health, University of Oxford, UK}~\footnote{Wellcome Centre for Integrative Neuroimaging, FMRIB, Nuffield Department of Clinical Neurosciences, University of Oxford, UK} \and 
Wouter Weeda\footnote{Methodology and Statistics, Psychology, Leiden University, The Netherlands}
}
\date{August 9, 2022} 
\newtheorem{lemma}{Lemma}
\newtheorem{theorem}{Theorem}
\newtheorem*{rep@theorem}{\rep@title}
\newcommand{\newreptheorem}[2]{%
\newenvironment{rep#1}[1]{%
 \def\rep@title{#2 \ref{##1}}%
 \begin{rep@theorem}}%
 {\end{rep@theorem}}}
\begin{document}
\maketitle

\begin{abstract}
    Cluster inference based on spatial extent thresholding is the most popular analysis method for finding activated brain areas in neuroimaging. However, the method has several well-known issues. While powerful for finding brain regions with some activation, the method as currently defined does not allow any further quantification or localization of signal. In this paper we repair this gap.  We show that cluster-extent inference can be used (1.)\ to infer the presence of signal in anatomical regions of interest and (2.)\ to quantify the percentage of active voxels in any cluster or region of interest. These additional inferences come for free, i.e.\ they do not require any further adjustment of the alpha-level of tests, while retaining full familywise error control. We achieve this extension of the possibilities of cluster inference by an embedding of the method into a closed testing procedure, and solving the graph-theoretic $k$-separator problem that results from this embedding. The new method can be used in combination with random field theory or permutations. We demonstrate the usefulness of the method in a large-scale application to neuroimaging data from the Neurovault database.
\end{abstract}

\section{Introduction}

Functional Magnetic Resonance Imaging (fMRI) studies aim to find brain regions that are activated in response to a mental task. The activity of the brain is measured by the proxy of changes in blood oxygenation levels (BOLD), and researchers look for areas in which these changes are associated with the pattern of the experimental stimulus, e.g.\ the alternation of task and rest \citep{Ogawa1992}.

From a statistical perspective an fMRI experiment is a huge multiple testing problem. The brain is partitioned into around 200,000 voxels, 3-dimensional equivalents of pixels. For each such voxel a $z$-score test statistic is calculated that combines the evidence from the BOLD measurements of the experimental subjects. Inference based on these test statistics can be done at the voxel level, resulting in a multiple testing problem with around 200,000 null hypotheses. More commonly, however, fMRI researchers are interested in inference at the level of clusters, sets of connected voxels, with the aim of relating these clusters of activation to certain anatomical areas in the brain (e.g.\ "listening to sounds is related to increased activation in the left auditory cortex"). 

The standard method for cluster inference is cluster extent thresholding \citep{Friston1994,Forman1995,Nichols2012}. The researcher chooses a $z$-score cut-off $z$, finds all voxels with a $z$-score above $z$, and identifies supra-threshold connected voxels as clusters. Next, all clusters with an extent (number of voxels) larger than the extent threshold $k$ are declared significant. To control the cluster familywise error rate (FWER), the extent threshold $k$ must be the $(1-\alpha)$-quantile of the distribution of the maximal extent of such clusters under the global null hypothesis. It can be determined either analytically, using the assumption that the $z$-scores come from a Gaussian random field \citep{Worsley1996, Friston1994, Eklund2016}, or more robustly by permutations \citep{Hayasaka2003}.  Alternatively, the cluster false discovery rate can be controlled, by submitting uncorrected cluster p-values to the Benjamini-Hochberg procedure \citep{Chumbley2010}; other proposals have included controlling the expected number of false positive clusters \citep{Bullmore1999}. 

Although the FWER extent threshold $k$ is calculated under the complete null hypothesis, it has been shown that cluster inference has strong control of the FWER  \citep{Worsley1992}. This implies that, regardless of the amount of signal present in the data, with probability at least $1-\alpha$ no cluster null hypothesis is falsely rejected. The cluster null hypothesis is the hypothesis that none of the voxels in the cluster is truly ``active'', i.e.\ associated with the experimental stimulus. The inferential statement that can be made from cluster inference is, therefore, that, with $1-\alpha$ simultaneous confidence, every significant cluster contains at least one active voxel. 

While this cluster-level FWER control is the de facto approach to cluster inference, it has been criticized as insufficient to support the conclusions researchers would typically like to draw from neuroimaging experiments.  For example, \cite{Woo2014} argued that, especially at low $z$ thresholds, clusters can become too large and span multiple brain areas, challenging the interpretation of the results.  
The following three inferential conclusions are often (implicitly or explicitly) drawn from cluster inference result, though they are not supported by the theory.

\begin{enumerate}
\item \emph{``A large significant cluster contains a substantial number of active voxels.''} Cluster-level FWER control only supports the statement that at least one voxel in the cluster is confidently active, not that many, or let alone, all voxels are active. This is perhaps one of the most frequent misunderstandings of the current state-of-the-art in cluster inference \citep{Woo2014}. 

\item \emph{``A large significant cluster is a more substantial scientific finding than a small significant cluster.''} In fact, the assertion that at least one voxel in a large cluster is active, is a less precise, and therefore weaker finding than the same assertion in a small cluster. This counter-intuitive property is known as the Spatial Specificity Paradox \citep{Woo2014}.

\item \emph{``Substantial overlap between a significant cluster and an anatomical brain area indicates evidence for the presence of activity in that anatomical brain area.''} A significant cluster confidently contains at least one active voxel, but unless that cluster is completely contained in the anatomical area, such activity may lie outside the anatomical brain area \citep{Woo2014}.
\end{enumerate}

Despite its widespread use, cluster-level FWER provides very weak inferences on the nature of non-null signal within a cluster. Still, the three desired conclusions from cluster inference, sketched above, are intuitively quite reasonable. If a cluster exceeds the minimal size $k$ for a significant cluster by a large margin, it is natural to suppose that there is a substantial amount of signal in the cluster, and at least more than in another cluster with an extent just over $k$. If the large cluster largely overlaps with an anatomical region, it is reasonable to suppose that some of the signal in the cluster must be in the anatomical region. 

This paper strengthens cluster inference by presenting an improvement of the method that allows much stronger and more informative conclusions to be drawn, avoiding the problems sketched above. Rather than returning a $p$-value for each supra-threshold cluster, the new method returns a \emph{true discovery proportion} (TDP) for every region, a simultaneous lower confidence bound for the proportion of truly active voxels in the region \citep{Genovese2006, Goeman2011}. By quantifying how widely spread a signal is within a brain region, TDP-based inference avoids the spatial specificity paradox \citep{Rosenblatt2018}. Moreover, TDP can be calculated for any brain region, not just for supra-threshold clusters; this way also the amount of signal in anatomical regions may be assessed.

Analysis of neuroimaging data in terms of TDP rather than $p$-values was pioneered by \cite{Rosenblatt2018}, who proposed the ARI method based on closed testing with the Simes test \citep{Goeman2019}. Other methods for TDP inference suitable for brain imaging include \citet{Blanchard2020, Andreella2020, Vesely2021, Blain2022}. The proposed method differs from these methods because it is based on classic extent-based cluster inference, and therefore aligns much more closely with standard practice. Unlike these methods, the new method will always yield TDP $>0$ for any cluster that is significant according to classic cluster-based inference. In fact, there is no power loss when switching from classic cluster-based inference to the method proposed in this paper; the new method is a uniform improvement \citep[in the sense of][]{Goeman2020} of classic cluster-based inference. Moreover, the new method retains strict FWER control over all reported findings: with probability at least $1-\alpha$ no reported TDP is greater than the proportion of truly active voxels in the corresponding region.

We construct the improvement of cluster inference by remarking that cluster inference is a special case of a true discovery guarantee method, as defined by \cite{Goeman2020}. Viewed in this way, cluster inference is not admissible, but can be uniformly improved by embedding it into a closed testing procedure, which we will construct. The local test of this procedure rejects the null hypothesis of no activity in a subset of the brain whenever that subset contains a connected subset of size at least $k$ for which all voxel $z$-scores are above $z$. 

A major challenge of constructing closed testing procedures is, as always, computational. We will show that calculating TDP for a brain region amounts to solving an instance of a graph-theoretic $k$-separator problem \citep{ben2015separator}. We propose two novel and fast algorithms to solve the $k$-separator problem in the lattice graph induced by brain connectivity, in order to find shortcuts for the closed testing procedure. 

To illustrate the performance of the method we will apply the novel lower bound on 818 data sets from the Neurovault database \citep{Gorgolewski2015}. We will first illustrate the intended workflow of the new method using an $n$-back working memory data set \citep{Barch2013}, which we will introduce in the next section as a motivating example.

\section{Motivating example} \label{sec motivate}

We will first illustrate and preview the new method with a concrete motivating example. The Human Connectome Project \citep[HCP;][]{VanEssen2013} consists of neuroimaging data of over 5000 subjects performing multiple cognitive tasks. In our example we will use fMRI data obtained from 80 unrelated individuals, each performing an $n$-back working memory task \citep{Barch2013}. During this task participants are sequentially shown a series of letters (e.g.\ ``E", ``D", ``Z", ``X", ``M"). After the sequence is shown participants are asked to recall letters from a specific position in the sequence. For example, in the 0-back condition this is the last letter shown (``M"), in the 2-back condition this is the letter in second-to-last position (``Z"). In $n$-back tasks, higher values of $n$ are theoretically associated with larger memory load for the participants. We focused on the 2-back versus 0-back contrast, for which the null hypothesis of interest per voxel was that the BOLD signal was identically distributed between the 2-back and 0-back conditions. For the calculation of per-voxel test statistics, we followed a standard processing pipeline \citep{Glasser2013} using FSL \citep{Woolrich2001}, a popular software package for cluster extent inference. This is a two-stage analysis, in which the 2-back versus 0-back contrast is first analyzed for each subject separately and the results are subsequently aggregated across subjects into a group-level $z$-statistic for each of the 257,659 voxels in the brain, using standard methods described by \citet{Beckmann2003}. Each of these $z$-statistics is standard normal under their respective per-voxel null hypothesis.

Before seeing the data, a cluster-forming threshold of $z=3.1$ was chosen. Clusters were formed by all connected neighboring supra-threshold voxels. Using standard theory, which we will revisit in Section \ref{sec classical}, a permutation-based extent threshold of 72 was found, indicating that all clusters consisting of more than 72 voxels are significant. This led to 6 significant clusters and several non-significant clusters. The details of the significant clusters are shown in Figure \ref{fig motivate} and Table \ref{tbl motivate}. 

\begin{figure}[ht]
\begin{center}
\includegraphics[trim = 0mm 0mm 0mm 0mm, clip=true, width=\textwidth]{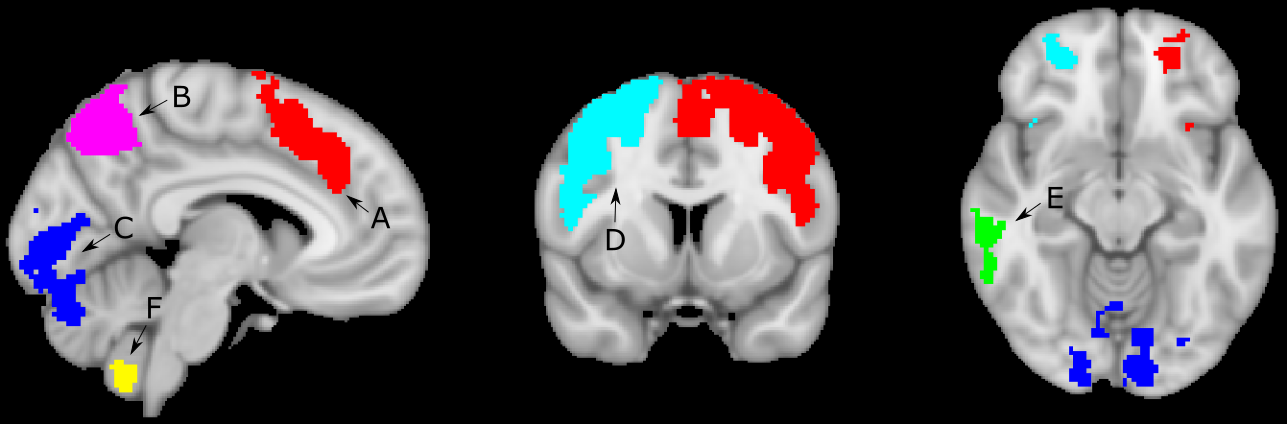}
\caption{Task-related brain activation for the 2-back versus 0-back contrast across all subjects. Six significant clusters A, B, C, D, E, F are displayed using different colors.}
\label{fig motivate}
\end{center}
\end{figure}

\begin{table}[ht]
\caption{Task-related brain activation for the 2-back versus 0-back contrast across all subjects. Columns show the size, p-value, maximum $z$-statistic, and coordinates of the maximum for all clusters.}
\label{tbl motivate}
\centering
\begin{tabular}{c c c c c c c}
\toprule
Cluster & Size & $p$-value & $\max(z)$ & X & Y & Z \\
\midrule
A & $8870$ & $<0.001$ & $8.87$ & $44$ & $72$ & $60$ \\
B & $8526$ & $<0.001$ & $9.51$ & $19$ & $42$ & $61$ \\
C & $7956$ & $<0.001$ & $9.20$ & $63$ & $33$ & $20$ \\
D & $6652$ & $<0.001$ & $9.73$ & $31$ & $67$ & $64$ \\
E & $350$  & $0.004$ & $5.18$ & $15$ & $46$ & $28$ \\
F & $100$  & $0.027$ & $6.56$ & $49$ & $35$ & $10$ \\
\bottomrule
\end{tabular}
\end{table}

With classic cluster inference, the analysis ends here. The researchers may claim that some signal is present in each significant cluster, but the amount of signal is undetermined. This is especially tantalizing for the biggest cluster A, that visually consists of several sub-regions. No statement can be made about the presence of signal in these sub-clusters. Cluster C overlaps for a large part with the cerebellum, but since it is not fully contained in the cerebellum, the researcher may not confidently claim the presence of signal here from the overlap with Cluster C. In contrast, Cluster F, which is relatively small and would not attract the most attention in the publication, does substantiate a claim about the presence of signal in the cerebellum since it is completely contained in it. Paradoxically, Cluster F is the most precise finding, since it localizes the presence of signal to a precision of no more than 100 voxels. 

The theory developed in this paper will allow much more informative statements to be made about clusters A, to F.
\begin{enumerate}
\item We calculate a true discovery proportion (TDP) per cluster, a lower bound to the number of truly active voxels. Clusters A, B, C, D, E, F get TDPs of 37\%, 40\%, 33,\%, 37\%, 19\% and 10\%, respectively. This indicates that clusters A to D are the main findings of the experiment, but shows that the localization of the signal is only moderately precise. 
\item We also find TDPs for any other (anatomical) brain regions of interest. We find, for example, significant evidence of signal in cerebellum, mostly from the overlap with cluster C, though with a small TDP of 5.8\%.
\end{enumerate}

The TDP values we find are guaranteed to be consistent with the cluster $p$-values in the sense that $p\leq 0.05$ if and only if TDP is positive. Compared to the $p$-values, the TDP is more informative since it quantifies the pervasiveness of the signal within the cluster. The full analysis results are given in Section \ref{sec appl}.

\section{Classic cluster inference} \label{sec classical}

We start by briefly revisiting classic cluster inference. We will follow the notational conventions used in \cite{Goeman2020} that, except for the probability distribution $\mathrm{P}$, all capitals are sets and all lower case variables are scalars or vectors. Random variables are in boldface.

\subsection{Voxels and clusters}

The brain is partitioned into hundreds of thousands of voxels, forming a rectangular grid. With suitable coordinates each voxel can be identified as a point in $\mathbb{Z}^d$. We will usually think of $d=3$, but we will write our theory for general $d\geq 1$. The brain $B \subset \mathbb{Z}^d$ is an irregularly shaped, finite collection of voxels. It is not always the entire brain that is of interest to the researcher, and a mask $M \subseteq B$ is chosen, before seeing the data, limiting all inference to voxels in $M$.

We define a neighbor relationship between voxels, saying that voxels $v, w \in \mathbb{Z}^d$ are neighbors if $v-w \in \{-1,0,1\}^d$. This neighborhood definition is known as 26-connectivity in neuroimaging since it gives each voxel 27 neighbors (26 plus itself) if $d=3$. 

The voxels and the neighbor relation together induce an undirected graph when the voxels are seen as nodes and the neighbor relationships as edges. We call a voxel set $V\subseteq \mathbb{Z}^d$ a \emph{cluster} if its induced graph is connected, i.e.\ if we can traverse from every voxel in $V$ to every other voxel in $V$ by passing from neighbor to neighbor.  We call voxel sets $V$ and $W$ \emph{disconnected} if no voxel of $V$ is a neighbor of a voxel of $W$. 

\subsection{Voxel null hypotheses and $z$-scores}

Let $\Omega$ be our statistical model and $\mathrm{P} \in \Omega$ the unknown probability distribution of the data. For each voxel $v \in M$ we define a voxel-wise null hypothesis $H_v \subseteq \Omega$ stating that the voxel $v$ is not active, i.e.\ that the BOLD signal for that voxel is not related to the experimental stimulus. Note that in general a hypothesis $H$ is true if and only if $\mathrm{P} \in H$. 

An fMRI experiment typically involves several subjects that are measured for a prolonged time period, leading to a huge data set with a BOLD observation per subject per voxel per time point. In the first steps of the analysis, for every $v \in B$, these data are aggregated to a single $z$-score $\mathbf{z}_v$ per voxel that represents the evidence against the voxel null hypothesis from the experiment. We refer to \cite{Lindquist2008} for a description of the analysis steps involved. In this paper we assume that the first steps of the analysis have already been done, and we start from $z$-scores $(\mathbf{z}_v)_{v\in B}$. The $z$-score $\mathbf{z}_v$ is expected to be small in absolute value if $H_v$ is true and large if $H_v$ is false.

\subsection{Voxel set null hypotheses and the cluster extent threshold}

Researchers are usually not particularly interested in individual voxels, since these are considered too small to represent relevant brain processes. Instead, researchers look at clusters of neighboring voxels. For every voxel set $V \subseteq M$, we define the voxel set null hypothesis as $H_V = \bigcap_{v \in V} H_v$. This hypothesis states that all of the voxelwise null hypotheses for voxels in $V$ are true, i.e.\ that none of the voxels in $V$ are active. The hypothesis $H_\emptyset = \Omega$ is always true.

Cluster inference uses the voxel $z$-scores to make inference at the cluster level. First, before seeing the data the researcher selects a $z$-score cut-off $z$. Next, the researcher finds the set of all supra-threshold voxels in the mask, $M \cap \mathbf{Z}$, where 
\begin{equation} \label{def Z}
\mathbf{Z} = \{v \in B\colon \mathbf{z}_v > z\}
\end{equation}
is the collection of all supra-threshold voxels. Equation (\ref{def Z}) uses one-sided tests. Two-sided tests can be done either using $|\mathbf{z}_v| > z$ in \eqref{def Z} or by repeating the analysis twice: once with $\mathbf{z}_v$ and once with $-\mathbf{z}_v$, using half the $\alpha$-level. 

The supra-threshold voxel set $\mathbf{Z} \cap M$ is not in general a cluster, but it is always a union of clusters. We can uniquely write $\mathbf{Z} \cap M = \mathbf{C}_1 \cup \cdots \cup \mathbf{C}_\mathbf{n}$, where $\mathbf{C}_1,\ldots, \mathbf{C}_\mathbf{n}$ are disconnected clusters. Cluster inference now claims the presence of signal in every $\mathbf{C}_i$ for which $|\mathbf{C}_i|>k_M$, where $|\cdot|$ is the cardinality of a set, and $k_M$ is the cluster extent threshold calculated for mask $M$. The cluster extent threshold is defined as the $(1-\alpha)$-quantile of the maximum size of a supra-threshold cluster under the global null. Formally, the size of the largest supra-threshold voxel is $\chi_{M \cap \mathbf{Z}}$, where
\[
\chi_V = \max\{|C|\colon \textrm{$C\subseteq V$ is a cluster}\}.
\]
This maximum is always defined since the empty set is a cluster. The cluster extent threshold $k_M$ therefore has the property that, for every $\mathrm{P} \in H_M$, 
\begin{equation} \label{eq CI weak FWER}
\mathrm{P}(\chi_{M \cap \mathbf{Z}} > k_M) \leq \alpha.
\end{equation}
We remark that $k_M$ is allowed to be random, as it would be e.g.\ in permutation approaches. We also remark that we deviate slightly from the usual definition of $k_M$, which uses $\geq$ in the first inequality in (\ref{eq CI weak FWER}).

To achieve (\ref{eq CI weak FWER}) various assumptions have been proposed. \cite{Friston1994} assumes that $(z_v)_{v\in M}$ follows a stationary Gaussian random field on $M$, and that each $H_v$, $v \in B$, is the hypothesis that $z_v$ has zero mean. In this case, $k_M$ can be approximated using the expected Euler characteristic of the field, and (\ref{eq CI weak FWER}) holds as long as $z$ is large enough and the field is sufficiently smooth \citep{Worsley1996, Eklund2016}. Alternatively, a $k_M$ achieving (\ref{eq CI weak FWER}) may be calculated from other assumptions, e.g.\ using permutations \citep{Hayasaka2003}, $t$-fields, ${\chi}^2$-fields, or $F$-fields \citep{Worsley1996}. In the rest of the paper we will not use any specific set of distributional assumptions. We will simply assume $k_M$ can be calculated for every $M \subseteq B$ such that (\ref{eq CI weak FWER}) holds.

Larger masks allow larger supra-threshold clusters, and therefore larger cluster extent thresholds. We will assume that if $M \subseteq N$, then,
\begin{equation} \label{ass h monotone}
k_M \leq k_N. 
\end{equation}
This relationship is natural since $\chi_{M \cap \mathbf{Z}} \leq \chi_{N \cap \mathbf{Z}}$, surely. It can be verified that (\ref{ass h monotone}) holds for all ways of calculating $k_M$ described above, provided in Gaussian random fields the smoothness is estimated once based on the largest mask.

\section{Closed testing for cluster inference}

Having described classic cluster inference we can now construct its embedding into a closed testing procedure. We will use the theory of \cite{Goeman2020}, who provide a general method to construct a closed testing procedure from an existing multiple testing procedure. 
The proofs of all Lemmas and Theorems are in the Supplemental Information, Section A.

\subsection{Local test}

A closed testing procedure is built from local tests, which are hypothesis tests for a voxel set null hypothesis $H_V$. We will define such a local test for every voxel set $V\subseteq M$. For $V=\emptyset$ we may take $k_V =0$ without loss of generality.

Following \cite{Goeman2020} we note that in the discussion in the previous section the mask $M \subseteq B$ was arbitrary, and that the conclusions of that section hold for any fixed $M\subseteq B$. Following \cite{Goeman2020}, Theorem 2, we define as the local test for $H_V$ the test that rejects when cluster inference with mask $M=V$ rejects at least one voxel set null hypothesis. This test rejects when $\boldsymbol\phi_V=1$, where
\begin{equation} \label{def local test}
\boldsymbol\phi_V = \mathds{1}\{\chi_{V\cap\mathbf{Z}} > k_V\}.
\end{equation}
This is a valid local test due to the assumption that (\ref{eq CI weak FWER}) holds for every $M \subseteq B$, and therefore for $M=V$: we have for every $\mathrm{P} \in H_V$ that $\mathrm{P}(\boldsymbol\phi_V=1)\leq\alpha$. If $V=\emptyset$, then $\boldsymbol\phi_V = 0$, so the test never rejects. We will use the local test (\ref{def local test}) for every $V \subseteq M$ as the building block for the new closed testing procedure.

\subsection{Effective local test}

The local test ${\boldsymbol\phi}_V$ is a valid hypothesis test for the presence of signal in $V$ if the researcher restricted attention to $V$ before seeing the data. If the researcher chooses $V \subseteq M$ after seeing the data, a multiple testing correction needs to be performed over all $2^{|M|}$ hypothesis choices $(H_V)_{V\subseteq M}$. This is what closed testing does. 

\cite{Marcus1976} proved that such correction for multiple testing can be achieved by the effective local test, defined for any local test as
\[
\boldsymbol{\psi}_V = \min\{{\boldsymbol\phi}_W\colon V \subseteq W \subseteq M\}.
\]
The effective local test controls voxel set-level FWER over all $(H_V)_{V\subseteq M}$, having the property that for every $\mathrm{P} \in \Omega$,
\begin{equation} \label{eq CT FWER}
\mathrm{P}(\textrm{$\boldsymbol{\psi}_V = 0$ for all $V \subseteq M$ with $\mathrm{P} \in H_V$}) \geq 1-\alpha.
\end{equation}
Remembering that $\mathrm{P} \in H_V$ if and only if $H_V$ is true, we see that with probability at least $1-\alpha$ no true voxel set null hypothesis is rejected even when $\boldsymbol{\psi}_V$ is applied on all $V \subseteq M$.

\subsection{Shortcut} \label{sec short 1}

However, $\boldsymbol{\psi}_V$ is difficult to calculate, since it involves calculating $\boldsymbol\phi_W$, and therefore $k_W$, for exponentially many $V\subseteq W\subseteq M$. We propose to approximate ${\boldsymbol\psi}_V$ for every $V \subseteq M$ by an alternative test that is easier to compute:
\[
\underline{\boldsymbol\psi}_V = \mathds{1}\{\chi_{V \cap \mathbf{Z}} > k_M\}.
\]
For every $V\subseteq M$, the test $\underline{\boldsymbol\psi}_V$ rejects at most as often as ${\boldsymbol\psi}_V$, as Lemma \ref{lem use h} states. 

\begin{lemma} \label{lem use h}
For every $V \subseteq M$, we have $\underline{\boldsymbol\psi}_V \leq{\boldsymbol\psi}_V$.
\end{lemma}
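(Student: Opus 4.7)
The plan is to show that whenever the simpler test $\underline{\boldsymbol\psi}_V$ rejects, the closed-testing-based test $\boldsymbol\psi_V$ also rejects. Since both quantities take values in $\{0,1\}$, it suffices to verify the implication $\underline{\boldsymbol\psi}_V = 1 \Longrightarrow \boldsymbol\psi_V = 1$. Unpacking the definition of $\boldsymbol\psi_V = \min\{\boldsymbol\phi_W : V \subseteq W \subseteq M\}$, this reduces to showing that $\boldsymbol\phi_W = 1$ for every $W$ with $V \subseteq W \subseteq M$, i.e.\ $\chi_{W \cap \mathbf{Z}} > k_W$ for all such $W$.

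The argument then combines two monotonicity facts. First, from $V \subseteq W$ we get $V \cap \mathbf{Z} \subseteq W \cap \mathbf{Z}$, and because every cluster contained in $V \cap \mathbf{Z}$ is still a cluster contained in $W \cap \mathbf{Z}$, the maximum cluster size can only grow, giving $\chi_{V \cap \mathbf{Z}} \leq \chi_{W \cap \mathbf{Z}}$. Second, from $W \subseteq M$ and assumption (\ref{ass h monotone}) on the extent threshold, we have $k_W \leq k_M$. Chaining these with the assumption $\underline{\boldsymbol\psi}_V = 1$, i.e.\ $\chi_{V \cap \mathbf{Z}} > k_M$, yields
\[
\chi_{W \cap \mathbf{Z}} \;\geq\; \chi_{V \cap \mathbf{Z}} \;>\; k_M \;\geq\; k_W,
\]
so $\boldsymbol\phi_W = 1$.

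Since this holds uniformly over all $W$ with $V \subseteq W \subseteq M$, the minimum $\boldsymbol\psi_V$ equals $1$, finishing the proof. There is no real obstacle here: the whole argument is a transitivity of inequalities, and the only substantive inputs are the two monotonicities (of $\chi_\cdot$ in its argument and of $k_\cdot$ in the mask). The one point worth being careful about is the cluster-monotonicity $\chi_{V \cap \mathbf{Z}} \leq \chi_{W \cap \mathbf{Z}}$, which is immediate from the definition of $\chi$ as a maximum over connected subsets, together with the fact that connectedness of a subset of voxels does not depend on the ambient superset containing it.
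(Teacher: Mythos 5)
Your proof is correct and follows essentially the same route as the paper's: both arguments rest on exactly the two monotonicity facts, $\chi_{V \cap \mathbf{Z}} \leq \chi_{W \cap \mathbf{Z}}$ for $V \subseteq W$ and $k_W \leq k_M$ from assumption (\ref{ass h monotone}), the only difference being that you phrase the conclusion as the implication $\underline{\boldsymbol\psi}_V = 1 \Rightarrow \boldsymbol\phi_W = 1$ for all $V \subseteq W \subseteq M$, whereas the paper writes the same content as a chain of inequalities between indicator functions.
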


The alternative test $\underline{\boldsymbol\psi}_V$ is a shortcut for the effective local test $\boldsymbol\psi_V$: it sacrifices some power for ease of computation. By Lemma \ref{lem use h}, $\underline{\boldsymbol\psi}_V$ retains the error guarentees of $\boldsymbol\psi_V$. Combining the lemma with (\ref{eq CT FWER}) we obtain voxel set-level FWER for $\underline{\boldsymbol\psi}_V$. For every $\mathrm{P} \in \Omega$,
\[
\mathrm{P}(\textrm{$\underline{\boldsymbol\psi}_V = 0$ for all $V \subseteq M$ with $\mathrm{P} \in H_V$}) \geq 1-\alpha.
\]

We can check that the test $\underline{\boldsymbol\psi}_V$ reproduces all the results of classic cluster inference. Classic cluster inference rejects all clusters $\mathbf{C} \subseteq M \cap \mathbf{Z}$ with $|\mathbf{C}| > k_M$. For such $\mathbf{C}$, we have $\chi_{\mathbf{C} \cap \mathbf{Z}} = \chi_{\mathbf{C}} = |C| > k_M$, so that $\underline{\boldsymbol\psi}_{\mathbf{C}} = 1$. 

However, $\underline{\boldsymbol\psi}_V$ allows useful additional conclusions that are not endorsed by classic cluster inference. If $A \subseteq B$ is an anatomical region of interest, we may reject $H_A$ and claim the presence of activity in $A$ if $\chi_{A\cap\mathbf{Z}} > k_M$, that is when there are at least $k_M$ connected supra-threshold voxels within $A$. This provides a partial solution to the desired inference problem 3 in the introduction to this paper, since it defines precisely how large a `substantial overlap' between a significant cluster and an anatomical region must be to allow a claim of activity in the region: the overlap must contain a connected area of size at least $k_M$. Note that the region of interest $A$ does not have to be chosen before seeing the data for such inference to be valid, since FWER control is over all $V \subseteq M$.

\subsection{True discovery proportions from closed testing}

The major gain of the closed testing formulation is not in voxel-set level FWER control, but in simultaneous TDP lower bounds for every cluster. We will use the methods of \cite{Genovese2006} and \cite{Goeman2011}.

Let $A_\mathrm{P} = \{v \in B\colon \mathrm{P} \notin H_v\}$ be the set of all truly active voxels in the brain. For voxel set $V \subseteq B$ the number of truly active voxels in $V$ is \[
a_\mathrm{P}(V) = |V \cap A_\mathrm{P}|.\] If the researcher would claim that voxel set $V$ is active, the researcher would be right about $a_\mathrm{P}(V)$ voxels, and wrong about $|V|-a_\mathrm{P}(V)$ of them. We call 
\[
\pi_\mathrm{P}(V) = \frac{a_\mathrm{P}(V)}{|V|},
\]
or 0 if $V=\emptyset$, the true discovery proportion (TDP) of set $V$. This is our target of inference. We will infer on $\pi_\mathrm{P}(V)$ through $a_\mathrm{P}(V)$, which is easier to work with.

\cite{Goeman2011} proved that, for any closed testing procedure with effective local tests $(\boldsymbol{\psi}_V)_{V\subseteq M}$, random variables defined, for all $V \subseteq M$, as
\begin{equation} \label{eq bound}
\mathbf{a}(V) = \min\{|V\setminus W|\colon W \subseteq V,\ \boldsymbol{\psi}_W = 0\},
\end{equation}
have the property that, for all $\mathrm{P} \in \Omega$, 
\begin{equation} \label{eq simultaneous tilde}
\mathrm{P}(\textrm{$\mathbf{a}(V) \leq a_\mathrm{P}(V)$ for all $V \subseteq M$}) \geq 1-\alpha.   
\end{equation}
A lower bound for the TDP follows immediately: $\boldsymbol\pi(V) = \mathbf{a}(V)/|V|$, or 0 if $V=\emptyset$, is a simultaneous lower bound for the TDP all $V \subseteq M$. By (\ref{eq simultaneous tilde}), for all $\mathrm{P} \in\Omega$, we have
\[
\mathrm{P}(\textrm{$\boldsymbol\pi(V) \leq \pi_\mathrm{P}(V)$ for all $V \subseteq M$}) \geq 1-\alpha.
\]

As argued by \cite{Goeman2011}, the lower bound $\mathbf{a}(V)$, and its companion $\boldsymbol\pi(V)$ provide much stronger statements than the effective local test. Where $\boldsymbol{\psi}_V$ only gives confidence whether or not there is signal present in $V$, $\mathbf{a}(V)$ gives confidence for the amount of signal. There is no information lost in reporting the TDP $\mathbf{a}(V)$ rather than rejection or non-rejection $\boldsymbol\psi_V$, since $\mathbf{a}(V) \geq \boldsymbol{\psi}_V$, as follows immediately from the definition. The simultaneity of (\ref{eq simultaneous tilde}) implies familywise error control over all $V \subseteq M$ considered or reported: with probability at least $1-\alpha$ no reported $\mathbf{a}(V)$, $V \subseteq M$, overestimates the number of truly active voxels $a_\mathrm{P}(V)$ in $V$, even if $V$ was chosen after seeing the data.

\subsection{Applying the shortcut}

Since $\mathbf{a}(V)$ involves the expression $\boldsymbol\psi_V$, which is difficult to calculate, we use the shortcut $\underline{\boldsymbol\psi}_V$ to get a partial shortcut for $\mathbf{a}(V)$. We write
\[
\check{\mathbf{a}}(V) = \min\{|V\setminus W|\colon W \subseteq V,\ \underline{\boldsymbol\psi}_W = 0\}.
\]
By Lemma \ref{lem use h}, $\check{\mathbf{a}}(V) \leq \mathbf{a}(V)$, so $\check{\mathbf{a}}(V)$ inherits the property (\ref{eq simultaneous tilde}). Moreover, $\check{\mathbf{a}}(V)$ can be rewritten in a relatively simple form. The formulation of $\check{\mathbf{a}}(V)$ and its property are our first main result. We formulate it as a theorem.

\begin{theorem} \label{thm first TDP bound}
Let 
\begin{equation} \label{def aV}
\check{\mathbf{a}}(V)= s_{k_M}(V \cap \mathbf{Z}),
\end{equation}
where $s_k(V) = \min\{|R|\colon \chi_{V\setminus R} \leq k\}$. Then, for all $\mathrm{P} \in\Omega$,
\begin{equation} \label{eq simultaneous}
\mathrm{P}(\textrm{$\check {\mathbf{a}}(V) \leq a_\mathrm{P}(V)$ for all $V \subseteq M$}) \geq 1-\alpha.   
\end{equation}
\end{theorem}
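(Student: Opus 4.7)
The plan is to split the theorem into two near-independent steps: an algebraic rewriting that gives the shortcut formula (\ref{def aV}), and a probabilistic coverage guarantee inherited from the coverage of $\mathbf{a}(V)$ in (\ref{eq simultaneous tilde}).

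For the rewriting, I would start from the definition $\check{\mathbf{a}}(V) = \min\{|V\setminus W|\colon W \subseteq V,\ \underline{\boldsymbol\psi}_W = 0\}$ and substitute $R = V \setminus W$, equivalently $W = V \setminus R$ with $R \subseteq V$. Under this substitution $|V\setminus W| = |R|$ and, crucially, $W \cap \mathbf{Z} = (V \setminus R) \cap \mathbf{Z} = (V \cap \mathbf{Z}) \setminus R$. Since $\underline{\boldsymbol\psi}_W = 0$ is exactly $\chi_{W \cap \mathbf{Z}} \leq k_M$, the minimization becomes $\check{\mathbf{a}}(V) = \min\{|R|\colon R \subseteq V,\ \chi_{(V \cap \mathbf{Z}) \setminus R} \leq k_M\}$. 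The only difference between this and $s_{k_M}(V \cap \mathbf{Z}) = \min\{|R|\colon \chi_{(V \cap \mathbf{Z}) \setminus R} \leq k_M\}$ is the side constraint $R \subseteq V$. I would dispose of this by noting that for any $R$ feasible in the unconstrained problem, replacing $R$ with $R \cap V \cap \mathbf{Z}$ leaves $(V \cap \mathbf{Z}) \setminus R$ unchanged and cannot increase $|R|$, so the two minima agree.

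For the coverage, I would invoke Lemma \ref{lem use h}, which gives $\underline{\boldsymbol\psi}_W \leq \boldsymbol\psi_W$. Consequently $\{W \subseteq V\colon \underline{\boldsymbol\psi}_W = 0\} \supseteq \{W \subseteq V\colon \boldsymbol\psi_W = 0\}$, so the minimum defining $\check{\mathbf{a}}(V)$ is taken over a larger feasible set and $\check{\mathbf{a}}(V) \leq \mathbf{a}(V)$ for every $V \subseteq M$. Combining this pointwise inequality with (\ref{eq simultaneous tilde}) shows that the event $\{\mathbf{a}(V) \leq a_\mathrm{P}(V) \text{ for all } V \subseteq M\}$ is contained in $\{\check{\mathbf{a}}(V) \leq a_\mathrm{P}(V) \text{ for all } V \subseteq M\}$, so the latter has probability at least $1-\alpha$, which is (\ref{eq simultaneous}).

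The main obstacle, to the extent there is one, is purely clerical: verifying the set-theoretic identity $W \cap \mathbf{Z} = (V \cap \mathbf{Z}) \setminus R$ under the substitution and justifying that the apparent constraint $R \subseteq V$ in the $s_k$ formulation is harmless. Everything else is a direct invocation of Lemma \ref{lem use h} and the simultaneity property (\ref{eq simultaneous tilde}) of closed-testing TDP bounds.
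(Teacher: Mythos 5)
Your proposal is correct and follows essentially the same route as the paper's own proof: the substitution $W = V\setminus R$ (with the observation that the constraint $R\subseteq V$ is harmless since an optimal $R$ can be taken inside $V\cap\mathbf{Z}$) identifies $s_{k_M}(V\cap\mathbf{Z})$ with $\min\{|V\setminus W|\colon W\subseteq V,\ \underline{\boldsymbol\psi}_W=0\}$, and then Lemma \ref{lem use h} gives $\check{\mathbf{a}}(V)\leq\mathbf{a}(V)$ so that the coverage statement follows from (\ref{eq simultaneous tilde}). No gaps.
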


Although $\check{\mathbf{a}}(V)$ may yield smaller TDP than $\mathbf{a}(V)$, the resulting TDP lower bounds are still at least as powerful as the statements of classic cluster inference, as the next theorem asserts: all clusters found by classic cluster inference have a strictly positive TDP bound.

\begin{theorem} \label{thm uniform improvement}
If $\mathbf{C} \subseteq (\mathbf{Z} \cap M)$, with $|\mathbf{C}|>k_M$, is a cluster, then $\check{\mathbf{a}}(\mathbf{C})>0$.
\end{theorem}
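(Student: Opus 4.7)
The plan is to unwind the definitions and observe that the hypothesis $|\mathbf{C}|>k_M$ directly forbids the empty removal set $R=\emptyset$ from being feasible in the optimization defining $s_{k_M}$, forcing the minimum to be at least $1$.

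First I would simplify $\check{\mathbf{a}}(\mathbf{C})$. Since $\mathbf{C} \subseteq \mathbf{Z} \cap M$, we have $\mathbf{C} \cap \mathbf{Z} = \mathbf{C}$, and hence by (\ref{def aV}),
\[
\check{\mathbf{a}}(\mathbf{C}) = s_{k_M}(\mathbf{C}) = \min\{|R|\colon R \subseteq \mathbf{C},\ \chi_{\mathbf{C}\setminus R} \leq k_M\}.
\]
Next I would observe that $\mathbf{C}$ is itself a cluster, so it is the largest connected subset of itself, giving $\chi_{\mathbf{C}} = |\mathbf{C}|$. Combined with the hypothesis $|\mathbf{C}|>k_M$, this yields $\chi_{\mathbf{C}\setminus\emptyset} = |\mathbf{C}| > k_M$, so $R=\emptyset$ is not feasible in the minimization above. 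Any feasible $R$ must therefore satisfy $|R|\geq 1$, and the minimum over a nonempty set of positive integers is positive. Hence $\check{\mathbf{a}}(\mathbf{C}) \geq 1 > 0$.

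There is no real obstacle here beyond bookkeeping; the only small point worth stating explicitly is the identification $\chi_{\mathbf{C}}=|\mathbf{C}|$ when $\mathbf{C}$ is a cluster, which follows immediately from the definition of $\chi$ as the maximum cardinality of a connected subset. The theorem then serves its intended role: it confirms that the shortcut bound $\check{\mathbf{a}}$ does not lose any rejections of classic cluster inference, so the TDP-based method is at least as powerful on every cluster that would have been reported in the classical procedure.
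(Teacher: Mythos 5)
Your proof is correct and follows essentially the same route as the paper's: both reduce $\check{\mathbf{a}}(\mathbf{C})$ to $s_{k_M}(\mathbf{C})$ using $\mathbf{C}\subseteq\mathbf{Z}$, note that $\chi_{\mathbf{C}}=|\mathbf{C}|>k_M$ rules out $R=\emptyset$ as a feasible separator, and conclude the minimum is positive. No gaps.
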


\section{Calculating true discovery proportions}

The shortcut (\ref{def aV}) reduces a computation time of $\mathbf{a}(V)$ that is exponential in $|M|$ to a computation time for $\check{\mathbf{a}}(V)$ that is exponential in $|V|$. This is still prohibitive for most regions $V$. In this section we discuss algorithms for $\check{\mathbf{a}}(V)$. We show that this calculation is equivalent to solving a problem known as the $k$-separator problem in graph theory. For the specific case of that problem in the voxel graph with 26-connectivity, we obtain a lower bound to $\check{\mathbf{a}}(V)$ that has computation time $O(|V|^{1+1/d})$, and a fast heuristic algorithm, coupled with simulated annealing,  that approaches $\check{ \mathbf{a}}(V)$ from above. Both the lower bound and the simulated annealing algorithm rely on a duality between our $k$-separator problem and tiling problem on a slightly larger object, which we will derive and explain.

\subsection{The $k$-separator problem}

From Theorem \ref{thm first TDP bound} we see that we have efficient computation of $\check{\mathbf{a}}(V)$ whenever we can efficiently compute $s_k(V)$, for $V \subseteq \mathbf{Z}$. The value of $s_k(V)$ is the minimum number of voxels that must be removed from $V$ in order that the remainder falls apart into disconnected components of size $k$.
The quantity $s_k(V)$ can be defined for any graph, and is known in graph theory literature as the $k$-separator problem~\citep{ben2015separator}.
The $k$-separator problem is NP-hard, even for small fixed values of $k$. 
For example, with $k=1$ we have a classic vertex cover problem (NP-hard),
while for $k=2$ the problem is equivalent to the computation of dissociation number which is NP-complete for a class of bipartite graphs~\citep{yannakakis1981node}.
\cite{ben2015separator} proposed polynomial time solutions 
to several constrained variants of the $k$-separator problem; however, none of them is applicable in our case. 
In the next few sections we present novel solutions tailored to the specific type of graph induced by the neuroimaging context.

\subsection{Preliminaries}

Any voxel set $V$ can always be written as a union of disconnected clusters. 
The next lemma says that it is sufficient to calculate $s_k$ for these clusters.

\begin{lemma} \label{lem decompose clusters}
If $V = C_1 \cup \cdots \cup C_n$, where $C_1\ldots, C_n$ are disconnected clusters, then
\[
s_k(V) = \sum_{i=1}^n s_k(C_i).
\]
\end{lemma}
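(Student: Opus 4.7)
The plan is to prove both inequalities by exploiting the fact that disconnected components behave independently with respect to $\chi$. The central observation I would establish first is a structural lemma: for any $R \subseteq V$, if we set $R_i = R \cap C_i$, then
\[
\chi_{V \setminus R} = \max_{1 \leq i \leq n} \chi_{C_i \setminus R_i}.
\]
This holds because the $C_i$ are pairwise disconnected (no neighbor pairs across them), so any cluster $C \subseteq V \setminus R$ must be entirely contained in a single $C_i \setminus R_i$ — otherwise $C$ would contain voxels from two disconnected $C_i, C_j$, contradicting that $C$ is connected. Conversely, every cluster of $C_i \setminus R_i$ is also a cluster of $V \setminus R$, so taking the max over $i$ recovers $\chi_{V \setminus R}$.

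For the inequality $s_k(V) \leq \sum_{i=1}^n s_k(C_i)$, I would choose, for each $i$, a set $R_i \subseteq C_i$ attaining $|R_i| = s_k(C_i)$ with $\chi_{C_i \setminus R_i} \leq k$. Setting $R = R_1 \cup \cdots \cup R_n$, the structural observation gives $\chi_{V \setminus R} = \max_i \chi_{C_i \setminus R_i} \leq k$. Since the $C_i$ are disjoint (being disconnected forces disjointness once we fix the ambient graph), $|R| = \sum_i |R_i| = \sum_i s_k(C_i)$, which bounds $s_k(V)$ from above.

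For the reverse inequality, let $R$ attain the minimum in the definition of $s_k(V)$, so $\chi_{V \setminus R} \leq k$ and $|R| = s_k(V)$. Define $R_i = R \cap C_i$. By the structural observation, $\chi_{C_i \setminus R_i} \leq \chi_{V \setminus R} \leq k$, so $R_i$ is a feasible $k$-separator of $C_i$, and therefore $|R_i| \geq s_k(C_i)$. Summing over $i$ and using disjointness of the $C_i$ yields $s_k(V) = |R| = \sum_i |R_i| \geq \sum_i s_k(C_i)$, completing the proof.

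The only subtle step is the structural observation; the main obstacle, such as it is, is simply articulating cleanly that a connected subset of $V$ cannot straddle two components that have no neighbor pairs between them. Everything after that is bookkeeping with sums over disjoint sets.
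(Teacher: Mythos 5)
Your proof is correct and follows essentially the same route as the paper: decompose any separator via $R_i = R \cap C_i$, use that a connected set cannot straddle two disconnected components (the paper invokes this directly in one direction and a monotonicity lemma, $\chi_{C_i\setminus R_i}\leq\chi_{V\setminus R}$, in the other), and sum cardinalities over the disjoint $C_i$. Your packaging of both directions through the single identity $\chi_{V\setminus R}=\max_i \chi_{C_i\setminus R_i}$ is a minor stylistic difference, not a different argument.
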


Without loss of generality, therefore, we can focus on calculating $s_k(V)$ only for $V \subseteq B$ that are clusters. However, the results in the remainder of this section are for general voxel sets $V$.

\subsection{Positive neighbors}

For our solutions to the $k$-separator problem we will exploit a duality between $k$-separating $V$ and tiling a somewhat larger object. To construct this duality we first need to introduce to $\mathbf{Z}^d$ the directed relationship of being `positive neighbors'.

We say that $w \in \mathbb{Z}^d$ is a \emph{positive neighbor} of $v \in \mathbb{Z}^d$ if $w-v \in \{0,1\}^d$. We write
\[
\{v\}^+ = \{v+e\colon e\in \{0,1\}^d\}
\]
for the voxel set of all positive neighbors of $v$. If $w \in \{v\}^+$ we call $v$ a \emph{negative neighbor} of $w$, since $v-w \in \{-1,0\}^d$. Note that the positive and negative neighbors do not partition the neighbors. For example, if $d=2$, $w=(-1,1)$, though a neighbor of $v=(0,0)$, is neither its positive or its negative neighbor. Moreover, every $v$ is always both a positive and a negative neighbor of itself.

The concept of the positive neighbors allows the definition of three useful derived voxel sets from every finite voxel set $V \subset \mathbb{Z}^d$. We define the \emph{cover} $V^+$ of $V$ as 
\[
V^+ = \{v+e\colon v \in V, e \in \{0,1\}^d\} = \bigcup_{v\in V} \{v\}^+
\]
the set of all voxels in $V$ and their positive neighbors. The \emph{interior} $V^-$ of $V$ is 
\[
V^- = \{v \in V\colon \textrm{\ $v+e \in V$ for all $e \in \{0,1\}^d$}\}.
\]
the set of all $v\in V$ that only have positive neighbors in $V$. Finally, the \emph{shave} of $V$ is $V^0 = V \setminus V^-$. This is the `positive edge' of $V$, the set of voxels in $V$ that have at least one positive neighbor outside $V$. These three derived voxel sets will allow us to rewrite the $k$-separator problem into a tiling problem.

\subsection{Tiling}

To calculate $s_k(V)$ we are interested in $k$-separators, defined as voxel sets $R\subseteq V$ with the property that $\chi_{V\setminus R} \leq k$. The value of $s_k(V)$ is the minimum $|R|$ over all $k$-separators. In this section we will show that minimizing $|R|$ over all $k$-separators is equivalent to minimizing a function $t_k(T_1,\ldots, T_n)$ over all tilings $T_1,\ldots, T_n$ of $V^+$. The latter will turn out to be an easier problem formulation to work with.

Define a \emph{tiling} of $V^+$ as a collection of pairwise disjoint voxel sets $T_1,\ldots,T_n$, called \emph{tiles}, such that
$
\bigcup_{i=1}^n T_i = V^+.
$
Note that every two distinct tiles from a tilling are disjoint as sets but their voxels may induce a connected graph. Given a tiling 
$T_1,\ldots,T_n$ of $V^+$, we will be interested in the function
\begin{equation} \label{def t}
t_k(T_1,\ldots,T_n) = \sum_{i=1}^n |T_i^0 \cap V| + \sum_{i=1}^n(|T_i^- \cap V| - k)_+,
\end{equation}
where $(\cdot)_+$ is the positive part function. This function is the link between tilings and $k$-separators, as the following two lemmas state.

\begin{lemma} \label{lem T to R}
For every tiling $T_1,\ldots,T_n$ of $V^+$ there exists a $k$-separator $R$ of $V$ such that
\[
|R| = t_k(T_1,\ldots,T_n).
\]
\end{lemma}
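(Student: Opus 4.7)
The plan is to construct the separator $R$ explicitly, one piece per tile, and then verify its cardinality and its $k$-separator property in turn.

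For each $i$ I would take $R_i$ to consist of all of $T_i^0\cap V$ together with an arbitrarily chosen subset $U_i\subseteq T_i^-\cap V$ of size $(|T_i^-\cap V|-k)_+$, and set $R=\bigcup_i R_i$. Because the tiles are pairwise disjoint, so are the $R_i$, and summing cardinalities immediately yields $|R|=t_k(T_1,\dots,T_n)$ as defined in \eqref{def t}. This bookkeeping is routine.

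The real work lies in showing $\chi_{V\setminus R}\leq k$. Since $V\subseteq V^+$ and the tiles partition $V^+$, each voxel of $V$ lies in exactly one $T_i$, and unwinding the definitions gives $V\setminus R=\bigsqcup_i S_i$ with $S_i=(T_i^-\cap V)\setminus U_i$. By the choice of $U_i$ we have $|S_i|\leq k$, so it suffices to prove that no voxel of $S_i$ is a graph-neighbor of a voxel of $S_j$ whenever $i\neq j$; this guarantees that every connected component of $V\setminus R$ is contained in a single $S_i$ and therefore has at most $k$ voxels.

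This cross-tile non-adjacency is the main obstacle, and I plan to establish it via a coordinate-wise maximum trick. Given distinct neighbors $v\in T_i^-$ and $w\in T_j^-$, define $u\in\mathbb{Z}^d$ by $u_\ell=\max(v_\ell,w_\ell)$. Since $w-v\in\{-1,0,1\}^d$, both $u-v$ and $u-w$ lie in $\{0,1\}^d$, so $u\in\{v\}^+\cap\{w\}^+$. Interior membership $v\in T_i^-$ forces $\{v\}^+\subseteq T_i$, hence $u\in T_i$; symmetrically $u\in T_j$. Disjointness of the tiles now gives $i=j$, contradicting the assumption. This step is where the definitions pay off: the cover $V^+$ is designed to be exactly large enough to host such coordinate-wise maxima, and the asymmetric, directional notion of interior $T_i^-$ is precisely what propagates tile membership along the positive-neighbor edges needed to reach the contradiction.
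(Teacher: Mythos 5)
Your proposal is correct and follows essentially the same route as the paper: the separator $R$ is constructed tile by tile exactly as in the paper's proof, and your cross-tile non-adjacency argument is the same positive-neighbor duality the paper uses, only inlined --- your coordinate-wise maximum $u$ is precisely the common positive neighbor $u = w + (v-w)_+$ of the paper's Lemma~\ref{lem positive neighbors}, and the step ``$v\in T_i^-$ forces $\{v\}^+\subseteq T_i$'' replaces the paper's appeal to Lemma~\ref{lem -+} and Lemma~\ref{lem disjoint disconnected}. No gaps; the argument is sound.
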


\begin{lemma} \label{lem R to T}
For every $k$-separator $R$ of $V$ there exists a tiling $T_1,\ldots,T_n$ of $V^+$ such that $T_1, \ldots, T_n$ are clusters, and
\[
|R| \geq t_k(T_1,\ldots,T_n).
\]
\end{lemma}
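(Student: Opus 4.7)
My plan is to construct the tiling explicitly from the components of $V\setminus R$. Let $C_1,\ldots,C_m$ be the connected components of $V\setminus R$; each satisfies $|C_i|\leq k$ since $R$ is a $k$-separator. As tiles I would take the positive covers $T_i=C_i^+$ for $i=1,\ldots,m$, together with a singleton tile $\{u\}$ for each $u\in V^+\setminus\bigcup_{i=1}^m C_i^+$.

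The first step is to verify that this is a valid tiling of $V^+$ by clusters. The union is $V^+$ by construction and singletons are trivially connected. Each $C_i^+$ is connected because every $v+e\in C_i^+\setminus C_i$ is a neighbor of $v\in C_i$, so $C_i^+$ inherits connectivity from $C_i$. The one subtle point is pairwise disjointness of the $C_i^+$: if $\{v\}^+\cap\{w\}^+\neq\emptyset$ then $v-w\in\{-1,0,1\}^d$, so $v$ and $w$ are 26-neighbors; since $v\in C_i$, $w\in C_j$ with $i\neq j$ both lie in $V\setminus R$, any such adjacency would merge the two components, a contradiction. The same argument also shows that for every $i$, $C_i^+\cap V = C_i\cup(C_i^+\cap R)$ as a disjoint union.

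To bound $t_k(T_1,\ldots,T_n)$, I would exploit the inclusion $C_i\subseteq(C_i^+)^-$, which holds because every positive neighbor of $v\in C_i$ lies in $\{v\}^+\subseteq C_i^+$. Combined with the decomposition above and with $|C_i|\leq k$, a short count gives
\[
|T_i^0\cap V| + (|T_i^-\cap V|-k)_+ \;\leq\; \bigl(|C_i^+\cap R|-|(C_i^+)^-\cap R|\bigr)+|(C_i^+)^-\cap R| \;=\; |C_i^+\cap R|.
\]
Summing over $i$ and using disjointness of the $C_i^+$ gives a total tile contribution of $|R\cap\bigcup_i C_i^+|$. Each singleton $\{u\}$ with $u\in V$ contributes $1$ to $t_k$ and $0$ to the oversize term, and since $\bigcup_i C_i\subseteq\bigcup_i C_i^+$ any such $u$ must lie in $R$; the singleton contribution is therefore $|R\setminus\bigcup_i C_i^+|$. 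The two pieces sum to $|R|$.

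The main conceptual obstacle, I think, is guessing the right tiling in the first place: the naive choice of using each $C_i$ and each $\{r\}$ as its own tile is too wasteful, as the toy example $d=1$, $V=\{0,\ldots,5\}$, $R=\{2,5\}$, $k=2$ shows (it produces $t_k=4>2=|R|$). The key idea is to ``grow'' each component into its full cover $C_i^+$, which absorbs neighboring voxels of $R$ into the positive boundary of $T_i$ so that each $r\in R$ is charged at most once; 26-connectivity is precisely the geometric feature that prevents two distinct components from grabbing the same voxel during this growing.
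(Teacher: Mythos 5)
Your proposal is correct and follows essentially the same route as the paper: take the covers $C_i^+$ of the components of $V\setminus R$ as tiles, use the positive-neighbor property to get their disjointness and $C_i\subseteq (C_i^+)^-$, and charge each voxel of $R$ to at most one tile to get $t_k\leq |R|$. The only (immaterial) difference is that you fill the leftover part of $V^+$ with singleton tiles, whereas the paper decomposes it into its connected components; both choices are clusters and yield the same count.
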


To get some intuition why these lemmas are true, it is helpful to consider a property of neighbors and positive neighbors proven as Lemma 9 in the Supplemental Information, Section A: two voxels are neighbors if and only if they have a common positive neighbor. It follows that voxel sets $V$ and $W$ are disconnected if and only if $V^+$ and $W^+$ are disjoint. It is this connection between disconnectedness of sets and simple disjointness of slightly larger sets that is exploited in Lemmas \ref{lem T to R} and \ref{lem R to T}. Loosely, if $R$ cuts $V$ as $V\setminus R = C_1 \cup\ldots\cup C_n$, with $C_1, \ldots, C_n$ pairwise disconnected, then $C_1^+, \ldots, C_n^+ \subseteq V^+$ are pairwise disjoint tiles. Vice versa if $T_1, \ldots, T_n \subseteq V^+$ are pairwise disjoint tiles, then their interiors $T_1^-, \ldots, T_n^-\subseteq V$ are paiwise disconnected; if these interiors are of size a most $k$, then $R=(T^0_1\cup \ldots \cup T^0_n)\cap V$ separates $V$. We illustrate the link between $k$-separator and tiling with an example in Figure \ref{fig tiling}.

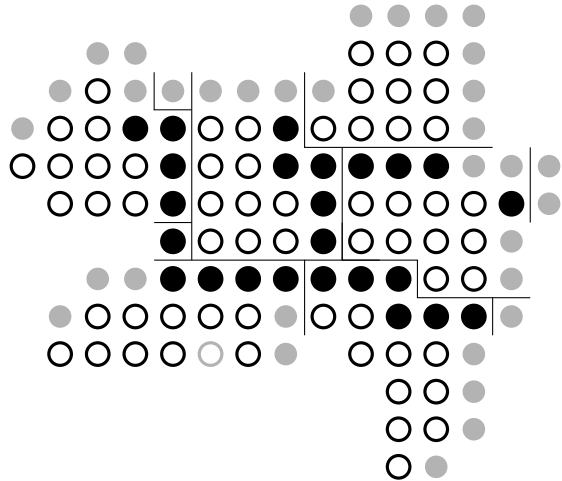
\begin{figure}[!ht]
\centering
\begin{tikzpicture}[x=-1cm, scale=.5]
\begin{scope}[very thick]
\foreach \x in {4} \draw (\x,1) circle(.3cm);
\foreach \x in {3,4} \draw (\x,2) circle(.3cm);
\foreach \x in {3,4} \draw (\x,3) circle(.3cm);
\foreach \x in {3,4,5,8,10,11,...,13} \draw (\x,4) circle(.3cm);
\foreach \x in {2,3,...,6,8,9,...,12} \draw (\x,5) circle(.3cm);
\foreach \x in {2,3,...,10} \draw (\x,6) circle(.3cm);
\foreach \x in {2,3,...,10} \draw (\x,7) circle(.3cm);
\foreach \x in {1,2,...,13} \draw (\x,8) circle(.3cm);
\foreach \x in {3,4,...,14} \draw (\x,9) circle(.3cm);
\foreach \x in {3,4,...,13} \draw (\x,10) circle(.3cm);
\foreach \x in {3,4,5,12} \draw (\x,11) circle(.3cm);
\foreach \x in {3,4,5} \draw (\x,12) circle(.3cm);
\end{scope}

\begin{scope}[black]
\foreach \x in {2,3,4} \fill (\x,5) circle(.3cm);
\foreach \x in {4,5,...,10} \fill (\x,6) circle(.3cm);
\foreach \x in {6,10} \fill (\x,7) circle(.3cm);
\foreach \x in {1,6,10} \fill (\x,8) circle(.3cm);
\foreach \x in {3,4,...,7,10} \fill (\x,9) circle(.3cm);
\foreach \x in {7,10,11} \fill (\x,10) circle(.3cm);
\end{scope}

\begin{scope}[fill=gray!60]
\foreach \x in {3} \fill (\x,1) circle(.3cm);
\foreach \x in {2} \fill (\x,2) circle(.3cm);
\foreach \x in {2} \fill (\x,3) circle(.3cm);
\foreach \x in {2,7} \fill (\x,4) circle(.3cm);
\foreach \x in {1,7,13} \fill (\x,5) circle(.3cm);
\foreach \x in {1,11,12} \fill (\x,6) circle(.3cm);
\foreach \x in {1} \fill (\x,7) circle(.3cm);
\foreach \x in {0} \fill (\x,8) circle(.3cm);
\foreach \x in {0,1,2} \fill (\x,9) circle(.3cm);
\foreach \x in {2,14} \fill (\x,10) circle(.3cm);
\foreach \x in {2,6,7,...,11,13} \fill (\x,11) circle(.3cm);
\foreach \x in {2,11,12} \fill (\x,12) circle(.3cm);
\foreach \x in {2,3,4,5} \fill (\x,13) circle(.3cm);
\end{scope}

\begin{scope}[gray!60, very thick]
\foreach \x in {9} \draw (\x,4) circle(.3cm);
\end{scope}

\draw (1.5,9.5) -- (6.5,9.5) -- (6.5, 11.5);
\draw (5.5,9.5) -- (5.5,6.5) --
		(3.5,6.5) -- (3.5,5.5) -- (0.5,5.5);
\draw (1.5,5.5) -- (1.5,4.5);
\draw (4.5,6.5) -- (6.5, 6.5) -- (6.5,4.5);
\draw (6.5, 6.5) -- (10.5,6.5);
\draw (5.5,6.5) -- (5.5, 7.5);
\draw (9.5,6.5) -- (9.5, 11.5);
\draw (9.5, 7.5) -- (10.5, 7.5);
\draw (9.5, 10.5) -- (10.5, 10.5) -- (10.5, 11.5);
\draw (0.5, 7.5) -- (0.5, 9.5);

\end{tikzpicture}
\caption{Illustration of a $k$-separator and a corresponding tiling, with $d=2$ and $k=10$. The voxel set $V$ comprises of all black voxels (open and filled). The set $V^+$ comprises of $V$ and all the gray voxels (open and closed). The $k$-separator $R$ is the set of all filled black voxels. The corresponding tiling is indicated by the lines. All filled voxels are part of the shave $T^0$ for their respective tile $T$; open voxels are part of the interior $T^-$.} \label{fig tiling}
\end{figure}

Combining Lemmas \ref{lem T to R} and \ref{lem R to T}, it follows that minimizing $|R|$ over all $k$-separators is equivalent to minimizing $t_k(T_1,\ldots,T_n)$ over all tilings. We formulate this result as a theorem.

\begin{theorem} \label{thm tiling}
We have
\[
s_k(V) = \min\{t_k(T_1,\ldots,T_n)\colon \textrm{\ $T_1,\ldots,T_n$  is a tiling of $V^+$}\}.
\]
The minimum is attained for a tiling for which $T_1,\ldots,T_n$ are all clusters.
\end{theorem}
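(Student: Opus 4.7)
The proof is essentially a sandwich argument between Lemmas \ref{lem T to R} and \ref{lem R to T}, so the main work has already been done and this theorem is a short combination. The plan is to show the two inequalities $s_k(V)\leq\min_T t_k$ and $s_k(V)\geq\min_T t_k$ separately, and then observe that the minimizing tiling can be taken to consist of clusters.

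First I would establish $s_k(V) \leq \min\{t_k(T_1,\ldots,T_n) : T_1,\ldots,T_n \text{ is a tiling of } V^+\}$. Take any tiling $T_1,\ldots,T_n$ of $V^+$ that achieves this minimum. By Lemma \ref{lem T to R}, there exists a $k$-separator $R$ of $V$ with $|R|=t_k(T_1,\ldots,T_n)$. Since $s_k(V)$ is the minimum of $|R|$ over all $k$-separators, we immediately get $s_k(V)\leq |R| = t_k(T_1,\ldots,T_n)$, which gives the first inequality.

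Next I would establish $s_k(V) \geq \min\{t_k(T_1,\ldots,T_n) : T_1,\ldots,T_n \text{ is a tiling of } V^+\}$. Let $R^\ast$ be a $k$-separator of $V$ attaining $|R^\ast|=s_k(V)$ (such $R^\ast$ exists since $V$ is finite). By Lemma \ref{lem R to T}, there exists a tiling $T_1,\ldots,T_n$ of $V^+$ whose tiles are all clusters and which satisfies $|R^\ast|\geq t_k(T_1,\ldots,T_n)$. Hence $s_k(V)=|R^\ast|\geq t_k(T_1,\ldots,T_n)\geq \min_T t_k$, which is the reverse inequality. Combining the two inequalities yields the claimed equality, and since the witnessing tiling produced by Lemma \ref{lem R to T} consists entirely of clusters, the minimum is attained on such a tiling, giving the second assertion.

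I do not expect any real obstacle here: all the geometric content has been absorbed into Lemmas \ref{lem T to R} and \ref{lem R to T}. The only mild subtlety to be careful about is the direction of each inequality in the two lemmas (one provides equality, the other only a bound), and so it matters which side of the equality one uses each lemma on; I would take the same care when writing out the argument to ensure the cluster refinement statement is obtained from Lemma \ref{lem R to T} and not from Lemma \ref{lem T to R}. Finiteness of $V$ (hence of the collections of $k$-separators and of finite tilings of $V^+$) is what justifies talking about actual minima instead of infima.
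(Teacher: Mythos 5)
Your proposal is correct and follows essentially the same argument as the paper: both directions of the equality are obtained by the same sandwich between Lemma \ref{lem T to R} (applied to a minimizing tiling) and Lemma \ref{lem R to T} (applied to an optimal $k$-separator), with the cluster property of the minimizer read off from Lemma \ref{lem R to T} exactly as in the paper's proof.
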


Theorem \ref{thm tiling} rewrites the $k$-separator problem but does not simplify it. There is no obvious way to minimize $t_k(T_1,\ldots,T_n)$ in polynomial time.
However, we will exploit this theorem in the next three sections to construct a lower bound to $s_k(V)$, and a heuristic approximation to it.

\subsection{A lower bound} \label{sec short 2}

First, we construct a lower bound to $s_k(V)$. Replacing $s_k(V)$ by its lower bound in Theorem \ref{thm first TDP bound} retains the TDP guarantee implied by that theorem. As a consequence, the lower bound will be a shortcut to the closed testing procedure: it retains the guarantee on the TDP, but sacrifices some inferential power for computational reasons. We will derive this shortcut in two stages. First, in this section, we will calculate a shortcut with $O(|V|)$ time complexity. Next, in Section \ref{sec prune}, we will construct a more powerful shortcut in $O(|V|^{1+1/d})$ time.

The rationale behind the shortcut is that to minimize the expression (\ref{def t}) we should favor tiles $T$ with $|T^- \cap V| \leq k$, since for such tiles the second term of (\ref{def t}) disappears. For such tiles, minimizing $t$ amounts to finding tiles $T$ with as small as possible edge ratio $|T^0|/|T|$. However, if $|T^-| \leq k$, the edge ratio is bounded from below by the most efficient such ratio possible. This optimal edge ratio $r_k$ can be used to bound $s_k(V)$. We formulate this result as Theorem \ref{thm shortcut 2}.

\begin{theorem} \label{thm shortcut 2}
\[
s_k(V) \geq  r_k\cdot |V^+|-|V^+\setminus V|,
\]
where
\begin{equation} \label{def rk}
r_k = \min\{|V^0|/|V|\colon \emptyset \neq V \subset \mathbb{Z}^d,\ |V^-| \leq k\}.
\end{equation}
\end{theorem}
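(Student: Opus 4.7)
The plan is to use Theorem \ref{thm tiling} to turn the bound on $s_k(V)$ into a statement about tilings of $V^+$, then to reduce to a per-tile inequality that is handled directly from the definition of $r_k$. First, I fix any tiling $T_1, \ldots, T_n$ of $V^+$ and manipulate $t_k(T_1, \ldots, T_n)$ to separate the terms involving $V$ from the ones involving $V^+$. Using $|T_i^0 \cap V| = |T_i^0| - |T_i^0 \setminus V|$ and the elementary bound $(a-b)_+ \geq a_+ - b$ valid for $b \geq 0$ applied to $(|T_i^- \cap V| - k)_+$, and then summing over $i$ while exploiting that $T_i^0 \sqcup T_i^- = T_i$ and that the tiles partition $V^+$, the ``lost'' terms collapse to exactly $|V^+ \setminus V|$, yielding
\[
t_k(T_1, \ldots, T_n) \geq \sum_{i=1}^n \bigl[|T_i^0| + (|T_i^-| - k)_+\bigr] - |V^+ \setminus V|.
\]
Since $\sum_i |T_i| = |V^+|$, combining this with Theorem \ref{thm tiling} reduces the theorem to the per-tile inequality
\[
|T^0| + (|T^-| - k)_+ \geq r_k\, |T|
\]
for every finite non-empty $T \subset \mathbb{Z}^d$.

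The per-tile inequality has an easy half and a hard half. When $|T^-| \leq k$ the positive-part term vanishes and $|T^0|/|T| \geq r_k$ is precisely the definition of $r_k$. When $|T^-| > k$ the positive part equals $|T^-| - k$, so the inequality simplifies to $|T| \geq k/(1-r_k)$. This size bound is the main obstacle, since $T$ itself is not a feasible element for the minimum defining $r_k$.

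To establish this size bound, I would shrink $T$ to a feasible set while decreasing $|T^-|$ by exactly one per step. The key observation is that any componentwise-minimal element $u$ of $T^-$ (which exists because $T^-$ is a finite non-empty subset of $\mathbb{Z}^d$ in the componentwise partial order) has no other $v \in T^-$ with $u - v \in \{0,1\}^d$, since such a $v$ would be componentwise $\leq u$ and distinct from $u$, contradicting minimality. A direct computation of $(T\setminus\{u\})^-$ then shows it equals $T^- \setminus \{u\}$, so removing $u$ drops both $|T|$ and $|T^-|$ by one. Iterating this operation $m = |T^-| - k$ times produces a set $T_m$ with $|T_m^-| = k$ and $|T_m| = |T^0| + k \geq 1$, hence non-empty and feasible for the definition of $r_k$. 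That definition gives $|T_m^-|/|T_m| \leq 1 - r_k$, i.e.\ $|T_m| \geq k/(1-r_k)$, and since $|T| \geq |T_m|$ the auxiliary bound follows, closing the per-tile inequality and the theorem.
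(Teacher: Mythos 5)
Your proof is correct, and it takes a noticeably different route from the paper's. Both arguments pass through the tiling duality (Theorem \ref{thm tiling}), but the paper first proves the bound under the extra hypothesis $(V^+)^- \subseteq V$ — using Lemma \ref{lem tidy} to tidy the optimal tiling so that all tile interiors have size at most $k$, and Lemma \ref{lem no residue} to place those interiors inside $V$ — and then transfers to general $V$ by replacing $V$ with $W=(V^+)^-$ and invoking the subadditivity property $s_k(V\cup W)\leq s_k(V)+|W|$ (Lemma \ref{lem coherent}). You instead handle a completely arbitrary tiling in one pass: the bound $(a-b)_+\geq a_+-b$ together with $T_i^0\sqcup T_i^-=T_i$ makes the discrepancy terms collapse exactly to $|V^+\setminus V|$, and everything reduces to the per-tile inequality $|T^0|+(|T^-|-k)_+\geq r_k|T|$, whose hard case you settle by deleting componentwise-minimal interior points one at a time until the set becomes feasible for the minimum defining $r_k$. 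That deletion step — removal of a minimal interior point drops $|T^-|$ by exactly one — is the same combinatorial device the paper uses inside its proof of Lemma \ref{lem tidy} (there the minimizer of the coordinate sum), so the core trick is shared, but your architecture avoids the special-case/pruning detour and the auxiliary Lemmas \ref{lem no residue} and \ref{lem coherent} altogether, which makes the argument more self-contained; the paper's version, in exchange, isolates reusable structural facts about tilings and $s_k$. One cosmetic point: your reformulation of the hard case as $|T|\geq k/(1-r_k)$ implicitly assumes $r_k<1$, i.e.\ $k\geq 1$; for $k=0$ one has $r_0=1$ and the per-tile inequality reads $|T^0|+|T^-|\geq|T|$, which is an identity, so the edge case is harmless but deserves a sentence.
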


Define $\underline{s}_k(V) = r_k\cdot |V^+|-|V^+\setminus V|$. How can we interpret this lower bound? We see that $\underline{s}_k(V)$ is large if its size $|V|$ is large relative to the size $|V^+|$ of its cover. It takes large values therefore for large and compact $V$, and small values for smaller or irregular sets $V$. The calculation of $r_k$ is given in Lemma \ref{lem calculate rk}. We plot $r_k$ for $k=1,\ldots, 100$ and $d=2,3,4$ in figure \ref{fig rk}.

\begin{figure}[!ht]
\includegraphics[width=\textwidth]{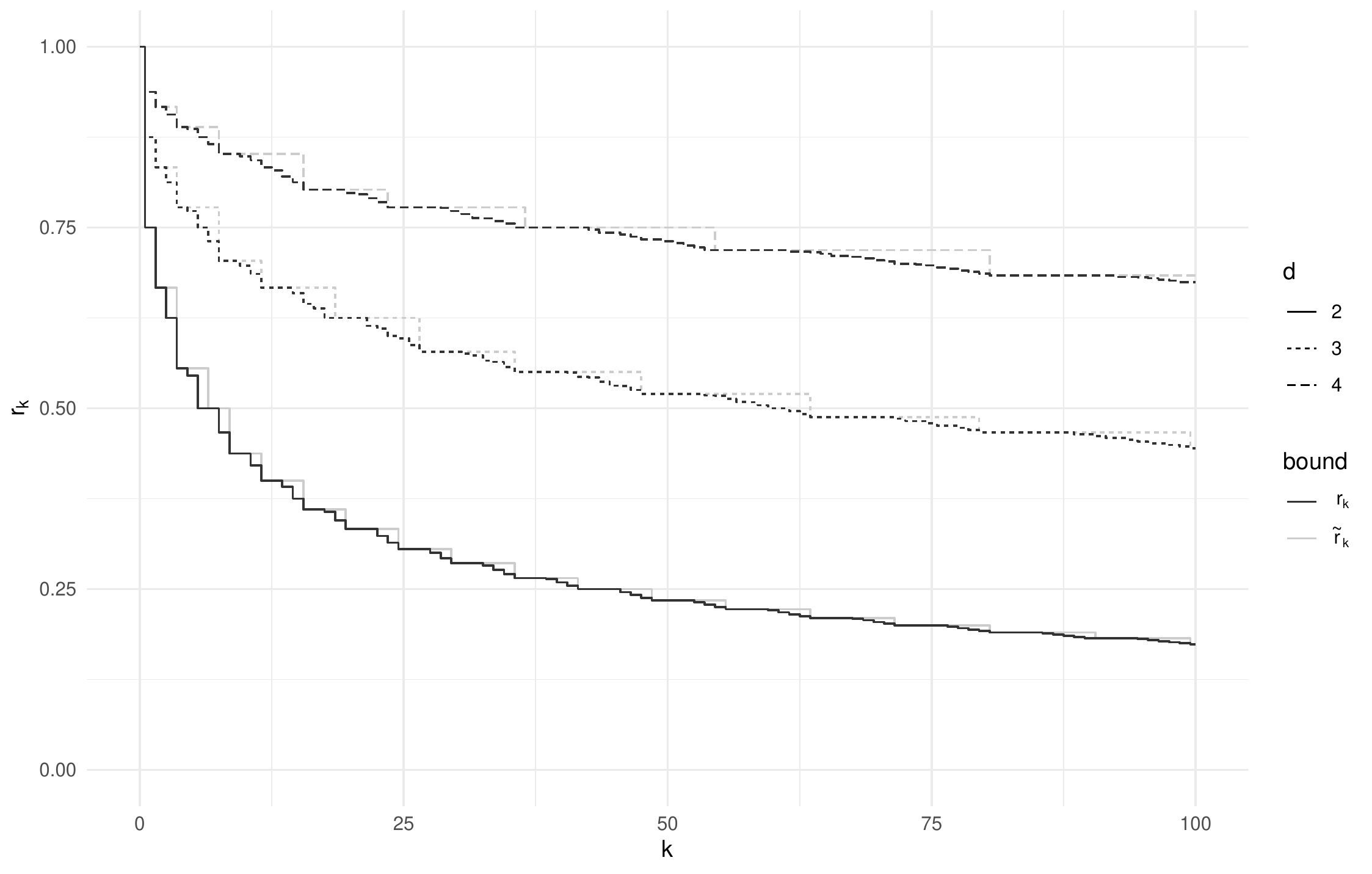}
\caption{The thresholds $r_k$ and $\tilde r_k$, defined in Theorem \ref{thm shortcut 2} and Lemma \ref{lem tilde r}, respectively, as a function of the extent threshold $k$ for dimensions $d=2,3,4$.} \label{fig rk} 
\end{figure}

\begin{lemma} \label{lem calculate rk}
If $k=0$, we have $r_k=1$. If $k>0$, we have 
\[
r_k = \min_{1\leq j \leq k} \frac{f_{d, j} - j}{f_{d, j}},
\]
where $f_{d,k}=0$ if $d=0$ or $k=0$, and, for $d>1$, we have recursively \[ f_{d,k} = b^+_{d,k} + f_{d-1, k- b_{d,k}}.\] Here, 
\[
b_{d,k} = \big(\lfloor k^{1/d}\rfloor\big)^{d-l_{d,k}} \big(\lfloor k^{1/d}\rfloor+1\big)^{l_{d,k}},
\]
and 
\[
b^+_{d,k} = \big(\lfloor k^{1/d}\rfloor+1\big)^{d-l_{d,k}} \big(\lfloor k^{1/d}\rfloor+2\big)^{l_{d,k}},
\]
where 
\[
l_{d,k} = \Big\lfloor \frac{\log(k) - d\log(\lfloor k^{1/d}\rfloor)}{\log(\lfloor k^{1/d}\rfloor+1) - \log(\lfloor k^{1/d}\rfloor)} \Big\rfloor.
\]
\end{lemma}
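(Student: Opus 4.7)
The plan is to reduce the computation of $r_k$ to an isoperimetric problem for the positive-cover operator on the integer lattice, and then to solve that problem by induction on the dimension $d$. Since $V^0=V\setminus V^-$, the target ratio equals $1-|V^-|/|V|$. For $k=0$ the constraints $|V^-|\leq 0$ and $V\neq\emptyset$ force this ratio to equal $1$, attained by any singleton, giving $r_0=1$. For $k\geq 1$, I partition the feasible set by $j=|V^-|\in\{1,\ldots,k\}$ (the case $j=0$ yields ratio $1$ and is always dominated); since $1-j/|V|$ is strictly increasing in $|V|$, the minimum over fixed $j$ equals $(f_{d,j}-j)/f_{d,j}$, where
\[
f_{d,j}=\min\{|V|\colon V\subset\mathbb{Z}^d,\ |V^-|\geq j\}.
\]
Hence $r_k=\min_{1\leq j\leq k}(f_{d,j}-j)/f_{d,j}$. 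Noting that $\{u\}^+\subseteq V$ for every $u\in V^-$ yields $(V^-)^+\subseteq V$, and that $\{u\}^+\subseteq U^+$ for every $u\in U$ yields $U\subseteq (U^+)^-$, the extremal $V$ in the above may always be taken of the form $V=U^+$ with $|U|=j$, whence
\[
f_{d,j}=\min\{|U^+|\colon U\subset\mathbb{Z}^d,\ |U|=j\}.
\]

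The core of the proof is the inductive computation of this isoperimetric quantity, with trivial base cases $d=0$ and $j=0$ giving $f_{d,j}=0$. For the inductive step, a coordinate-wise compression argument --- in every axis-parallel fibre of $U$ replace the fibre by its leftmost contiguous segment of the same size --- preserves $|U|$ and cannot increase $|U^+|$, because the positive cover of a contiguous segment in a fibre is minimal among sets of that size. Iterating over all $d$ axes turns $U$ into a coordinate down-set, and a further layered rebalancing reduces $U$ to an axis-aligned rectangle of side lengths $m_1,\ldots,m_d$ with $|U|=\prod m_i$ and $|U^+|=\prod(m_i+1)$. A standard convexity argument (using that $\log(m+1)$ is convex in $\log m$) then shows that $\prod(m_i+1)$ for fixed $\prod m_i$ is minimised by the most balanced choice of side lengths, $m_i\in\{a,a+1\}$ with $a=\lfloor j^{1/d}\rfloor$ and $l_{d,j}$ sides equal to $a+1$; this yields $\prod m_i=b_{d,j}$ and $\prod(m_i+1)=b^+_{d,j}$, so that $f_{d,j}=b^+_{d,j}$ whenever $j=b_{d,j}$. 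When $j>b_{d,j}$, the residual interior mass $j-b_{d,j}$ is realised by attaching a $(d-1)$-dimensional extremiser on one face of the rectangle as a slab of thickness one in the $d$-th direction, whose projection to the first $d-1$ coordinates is an optimal $(d-1)$-dim configuration of size $j-b_{d,j}$; this slab adds exactly $f_{d-1,j-b_{d,j}}$ voxels to the cover, giving the upper bound. The matching lower bound follows by compressing $U$ and peeling off its topmost layer, then recursing on the $(d-1)$-dim structure below. Together these give $f_{d,j}=b^+_{d,j}+f_{d-1,j-b_{d,j}}$.

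The main obstacle I anticipate is the isoperimetric step. Coordinate compression naturally produces only down-sets rather than rectangles, so a layered rebalancing within the compressed set is required before the convexity argument can be applied. In addition, one has to verify that the integer-valued definitions of $l_{d,k}$, $b_{d,k}$ and $b^+_{d,k}$ given in the lemma really do pick out the most balanced rectangle of volume at most $k$: this amounts to checking that the logarithmic expression for $l_{d,k}$ is equivalent to the discrete condition $a^{d-l}(a+1)^l\leq k<a^{d-l-1}(a+1)^{l+1}$, and to handling degenerate cases such as $j=b_{d,k}$, when the recursion terminates, and $a=1$, when several factors coincide. These are careful but routine bookkeeping tasks, not requiring new ideas.
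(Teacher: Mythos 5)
Your reduction of $r_k$ to the lattice isoperimetric quantity $f_{d,j}=\min\{|U^+|\colon U\subset\mathbb{Z}^d,\ |U|=j\}$ is exactly the paper's first step (Lemma \ref{lem rewrite rk}, using $(V^-)^+\subseteq V$ and $V\subseteq (V^+)^-$), and the extremal configuration you describe --- a balanced box with a $(d-1)$-dimensional extremiser attached to one face --- is the paper's construction (Lemma \ref{lem exist}), so your upper bound is sound. The gap is in the lower bound, which is the heart of the lemma. First, the claim that compression plus ``layered rebalancing'' reduces the minimiser to an axis-aligned rectangle with $|U|=\prod m_i=j$ cannot be right for general $j$: most integers are not products of $d$ nearly equal factors, so the extremiser is not a rectangle, and a convexity argument over rectangles only identifies the optimum among volumes of the special form $b_{d,j}$; it cannot produce the recursion term $f_{d-1,\,j-b_{d,j}}$ that the lemma asserts.

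Second, and more seriously, the step ``peel off the topmost layer and recurse'' conceals exactly the hard part. After slicing (the paper's Lemma \ref{lem project}) you are left with a bound of the form $|U^+|\geq f_{d-1,n_0}+\sum_i f_{d-1,n_i}$ for an \emph{arbitrary} vector of slice sizes $(n_i)$, and to conclude $|U^+|\geq f_{d,j}$ you need that these lower-dimensional contributions recombine at least as efficiently as the canonical configuration does --- concretely, superadditivity $f_{d,k+l}\leq f_{d,k}+f_{d,l}$ together with a rearrangement inequality that moves interior mass into slices of the canonical sizes. These are precisely the statements of the paper's Lemma \ref{lem merge}, proved there by a delicate double induction on $d$ and (downward) on $k$ with a four-case analysis, and they are then combined with explicit extremal constructions in the induction of Lemma \ref{lem gdk fdk}. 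Your sketch neither states nor proves such inequalities, and without them the induction on $d$ does not close; calling what remains ``routine bookkeeping'' misjudges where the difficulty lies. (The compression step itself --- that fibre-wise compression does not increase $|U^+|$ --- is plausible and could be made rigorous, but it only delivers a down-set, not the quasi-box structure, so it does not substitute for the missing inequalities.)
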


In the example object of Figure \ref{fig tiling}, we find from Lemma \ref{lem calculate rk} that with $d=2$ and $k=10$ we have $r_k=7/16$. With $|V|=84$ and $|V^+| = 118$, we get $\underline s_k(V) = 17.6$, so $s_k(V) \geq 18$.

\subsection{Pruning} \label{sec prune}

Irregularly shaped objects $V$ have low $\underline s_k(V)$. It can therefore pay to prune $V$ to $V' \subseteq V$ in order to bound $s_k(V)$ from below by $\underline s_k(V') \leq s_k(V') \leq s_k(V)$. We will use this to get an improved bound on $s_k(V)$.

Suitable choices are $V' = (V^-)^+$, $V'' = (((V^-)^-)^+)^+$, etc., which prune away increasingly broad extremities of $V$. We illustrate $V'$ in Figure \ref{fig prune}. In this example we have $|V'| = 78$, $|(V')^+| = 106$, and we find $\underline s_k(V') = 18.4$, so $s_k(V) \geq 19$. Further pruning to $V''$ leads to $|V''| = 69$ and $|(V'')^+| = 92$ for $\underline s_k(V'') = 17.3$ (see figures in the Supplemental Information, Section C). Further pruning does not lead to better bounds. In any case, as Lemma \ref{lem prune stop} states, pruning more than $|V|^{1/d}$ times is never necessary.

\begin{lemma} \label{lem prune stop}
If $i \geq \lfloor |V|^{1/d}\rfloor$, then $V^{(i)} = \emptyset$.  
\end{lemma}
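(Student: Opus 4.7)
The plan is to reduce the claim to an elementary cube-counting argument. Let me write $V^{(i)} = (V^{-i})^{+i}$, where $V^{-i}$ denotes the $i$-fold interior and $V^{+i}$ the $i$-fold cover, in accordance with the notation $V' = (V^-)^+$ and $V'' = (((V^-)^-)^+)^+$ used in the text. Since the cover of the empty set is empty, it suffices to show that $V^{-i} = \emptyset$ whenever $i \geq \lfloor |V|^{1/d} \rfloor$.

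The key step is to establish, by induction on $m$, the characterization
\[
V^{-m} = \{v \in \mathbb{Z}^d : v + \{0,1,\ldots,m\}^d \subseteq V\}.
\]
The base case $m=0$ is immediate. For the inductive step, unfold the definition: $v \in V^{-(m+1)} = (V^{-m})^-$ iff $v+e \in V^{-m}$ for every $e \in \{0,1\}^d$, which by the inductive hypothesis is equivalent to $v + e + \{0,1,\ldots,m\}^d \subseteq V$ for all $e \in \{0,1\}^d$, which in turn is equivalent to $v + \{0,1,\ldots,m+1\}^d \subseteq V$.

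With this characterization, $V^{-m} \neq \emptyset$ implies that $V$ contains the $(m+1)$-cube $v + \{0,1,\ldots,m\}^d$ for some $v$, which has $(m+1)^d$ distinct elements, and hence $|V| \geq (m+1)^d$. Now if $i \geq \lfloor |V|^{1/d}\rfloor$, then $i+1 > |V|^{1/d}$ (since $\lfloor x \rfloor \leq x < \lfloor x \rfloor + 1$), so $(i+1)^d > |V|$. This contradicts $V^{-i} \neq \emptyset$, so $V^{-i} = \emptyset$ and consequently $V^{(i)} = \emptyset$.

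No real obstacle is expected; the one substantive step is the inductive characterization of $V^{-m}$ as the set of ``lower corners'' of axis-aligned $(m+1)$-cubes contained in $V$, after which the lemma reduces to the trivial bound $|V| \geq (m+1)^d$ whenever such a cube fits.
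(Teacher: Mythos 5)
Your proof is correct and follows essentially the same cube-counting idea as the paper: a voxel surviving $i$ interior operations witnesses an axis-aligned cube of $(i+1)^d$ points, which is impossible since $i\geq\lfloor|V|^{1/d}\rfloor$ forces $(i+1)^d>|V|$. The only cosmetic difference is that you locate the cube directly inside $V$ via your inductive characterization of the iterated interior, whereas the paper places it inside $V^{(i)}$ and then invokes $V^{(i)}\subseteq V$ (its Lemma 17).
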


\begin{figure}[!ht]
\centering
\begin{tikzpicture}[x=-1cm, scale=.5]
\begin{scope}[black]
\foreach \x in {3,4} \fill (\x,2) circle(.3cm);
\foreach \x in {3,4} \fill (\x,3) circle(.3cm);
\foreach \x in {3,4,5,10,11,...,12} \fill (\x,4) circle(.3cm);
\foreach \x in {2,3,...,6,8,9,...,12} \fill (\x,5) circle(.3cm);
\foreach \x in {2,3,...,10} \fill (\x,6) circle(.3cm);
\foreach \x in {2,3,...,10} \fill (\x,7) circle(.3cm);
\foreach \x in {2,...,13} \fill (\x,8) circle(.3cm);
\foreach \x in {3,4,...,13} \fill (\x,9) circle(.3cm);
\foreach \x in {3,4,...,13} \fill (\x,10) circle(.3cm);
\foreach \x in {3,4,5} \fill (\x,11) circle(.3cm);
\foreach \x in {3,4,5} \fill (\x,12) circle(.3cm);
\end{scope}

\begin{scope}[fill=gray!60]
\foreach \x in {2} \fill (\x,2) circle(.3cm);
\foreach \x in {2} \fill (\x,3) circle(.3cm);
\foreach \x in {2,9} \fill (\x,4) circle(.3cm);
\foreach \x in {1,7} \fill (\x,5) circle(.3cm);
\foreach \x in {1,11,12} \fill (\x,6) circle(.3cm);
\foreach \x in {1} \fill (\x,7) circle(.3cm);
\foreach \x in {1} \fill (\x,8) circle(.3cm);
\foreach \x in {1,2} \fill (\x,9) circle(.3cm);
\foreach \x in {2} \fill (\x,10) circle(.3cm);
\foreach \x in {2,6,7,...,13} \fill (\x,11) circle(.3cm);
\foreach \x in {2} \fill (\x,12) circle(.3cm);
\foreach \x in {2,3,4,5} \fill (\x,13) circle(.3cm);
\end{scope}

\begin{scope}[very thick]
\foreach \x in {4} \draw (\x,1) circle(.3cm);
\foreach \x in {8,13} \draw (\x,4) circle(.3cm);
\foreach \x in {1} \draw (\x,8) circle(.3cm);
\foreach \x in {14} \draw (\x,9) circle(.3cm);
\foreach \x in {12} \draw (\x,11) circle(.3cm);
\end{scope}

\begin{scope}[gray!60, very thick]
\foreach \x in {3} \draw (\x,1) circle(.3cm);
\foreach \x in {7} \draw (\x,4) circle(.3cm);
\foreach \x in {13} \draw (\x,5) circle(.3cm);
\foreach \x in {0} \draw (\x,8) circle(.3cm);
\foreach \x in {0} \draw (\x,9) circle(.3cm);
\foreach \x in {14} \draw (\x,10) circle(.3cm);
\foreach \x in {11,12} \draw (\x,12) circle(.3cm);
\end{scope}

\end{tikzpicture}
\caption{Illustration of the pruning $V'$ of the voxel set $V$ from Figure \ref{fig tiling}. The voxel set $V$ consists of all black voxels (open and filled); the set $V^+$ additionally comprises of the gray  voxels (open and filled). The pruned set $V'=(V^-)^+$ consists of the filled black voxels, and its cover $(V')^+$ of all filled grey voxels. We see that each voxel removed to obtain $V'$ nets a reduction in size of 2 voxels for $(V')^+$, resulting in a net gain in $\underline s_k(V')$ relative to $\underline s_k(V)$, since $r_k \leq 1/2$.} \label{fig prune}
\end{figure}
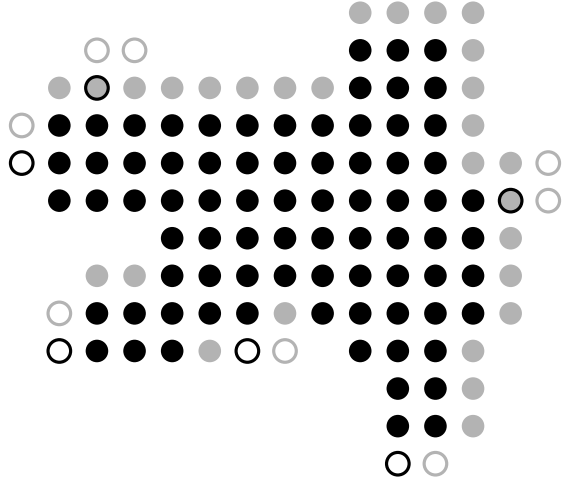
 
Taking pruning into account, and using that $s_k(V)>0$ if $|V|>k$, we define the improved bound
\[
\check s_k(V) = \mathds{1}\{\chi_V>k\} \vee \max \Big\{\big\lceil\underline  s_k\big(V^{(i)}\big)\big\rceil\colon i=0,1,\ldots, |V|^{1/d}\Big\},
\] 
where $V^{(i)}$ is obtained from $V$ by performing the $(\cdot)^-$ operation $i$ times, followed by the $(\cdot)^+$ operation $i$ times.

Taking everything together, the proposed procedure and its TDP guarantee property are summarized in the following theorem, which proves that the lower bound is a shortcut to the closed testing procedure.

\begin{theorem} \label{thm final}
For every $V \subseteq M$, let 
\[
\underline{\mathbf{a}}(V) = \sum_{i=1}^\mathbf{n} \check s_{k_M} (\mathbf{C}_i),
\]
where $\mathbf{C}_1,\ldots,\mathbf{C}_\mathbf{n}$ are disconnected clusters such that $\mathbf{C}_1 \cup \cdots \cup \mathbf{C}_\mathbf{n} = V \cap \mathbf{Z}$. Then, for all $\mathrm{P} \in\Omega$,
\[
\mathrm{P}(\textrm{$\underline {\mathbf{a}}(V) \leq a_\mathrm{P}(V)$ for all $V \subseteq M$}) \geq 1-\alpha.   
\]
\end{theorem}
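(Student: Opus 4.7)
\textbf{Proof plan for Theorem \ref{thm final}.} My plan is to reduce the statement to Theorem \ref{thm first TDP bound} by establishing the pointwise, deterministic inequality $\underline{\mathbf{a}}(V) \leq \check{\mathbf{a}}(V)$ for every $V \subseteq M$. Because the inequality is sample-path-wise, the event $\{\check{\mathbf{a}}(V) \leq a_\mathrm{P}(V)\ \text{for all } V \subseteq M\}$ is contained in $\{\underline{\mathbf{a}}(V) \leq a_\mathrm{P}(V)\ \text{for all } V \subseteq M\}$, so the $1-\alpha$ probability bound carries over. This is the only random ingredient in the argument; everything else is combinatorial and can be verified for arbitrary fixed sets.

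To compare the two bounds, I would first apply Lemma \ref{lem decompose clusters} to the decomposition $V \cap \mathbf{Z} = \mathbf{C}_1 \cup \cdots \cup \mathbf{C}_\mathbf{n}$ into disconnected clusters, giving
\[
\check{\mathbf{a}}(V) = s_{k_M}(V \cap \mathbf{Z}) = \sum_{i=1}^\mathbf{n} s_{k_M}(\mathbf{C}_i).
\]
Since $\underline{\mathbf{a}}(V) = \sum_{i=1}^\mathbf{n} \check s_{k_M}(\mathbf{C}_i)$, it suffices to show cluster by cluster that $\check s_{k_M}(\mathbf{C}_i) \leq s_{k_M}(\mathbf{C}_i)$.

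Bounding the two ingredients in the definition of $\check s_k$ is then routine. The indicator term $\mathds{1}\{\chi_{\mathbf{C}_i} > k_M\}$ is at most $s_{k_M}(\mathbf{C}_i)$: if $\chi_{\mathbf{C}_i} > k_M$, then no separator of size zero works, so $s_{k_M}(\mathbf{C}_i) \geq 1$; if $\chi_{\mathbf{C}_i} \leq k_M$, the indicator vanishes. For the pruning term, I would first record two elementary facts. \emph{Monotonicity:} $s_k(V') \leq s_k(V)$ whenever $V' \subseteq V$, because for any $k$-separator $R$ of $V$ the set $R \cap V'$ separates $V'$ since $\chi_{V' \setminus R} \leq \chi_{V \setminus R} \leq k$. \emph{Pruning nesting:} for any $W$, $W^- \subseteq W$ is immediate, and $(W^-)^+ \subseteq W$ because each $v \in W^-$ has \emph{all} its positive neighbours in $W$ by definition; iterating gives $\mathbf{C}_i^{(j)} \subseteq \mathbf{C}_i$ for every $j \geq 0$. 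Combining these facts with Theorem \ref{thm shortcut 2} applied to $\mathbf{C}_i^{(j)}$ yields
\[
\underline s_{k_M}\bigl(\mathbf{C}_i^{(j)}\bigr) \leq s_{k_M}\bigl(\mathbf{C}_i^{(j)}\bigr) \leq s_{k_M}(\mathbf{C}_i).
\]
Since $s_{k_M}(\mathbf{C}_i)$ is an integer, the ceiling preserves the inequality, and taking a maximum over $j$ and a logical or with the indicator preserves it as well; hence $\check s_{k_M}(\mathbf{C}_i) \leq s_{k_M}(\mathbf{C}_i)$, as required.

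The main conceptual checkpoint is the monotonicity of $s_k$ under restriction together with the verification that the double pruning $V \mapsto (V^-)^+$ really does produce a nested subset; both are short but are the glue that makes the pruning-based shortcut $\check s_k$ sandwich below $s_k$. Once these are in hand, the theorem follows by chaining Lemma \ref{lem decompose clusters}, Theorem \ref{thm shortcut 2}, and Theorem \ref{thm first TDP bound}.
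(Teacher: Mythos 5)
Your proposal is correct and follows essentially the same route as the paper: the paper also establishes the pointwise inequality $\underline{\mathbf{a}}(V) \leq \check{\mathbf{a}}(V)$ by combining the cluster decomposition (Lemma \ref{lem decompose clusters}) with a per-cluster bound $\check s_{k}(\mathbf{C}_i) \leq s_{k}(\mathbf{C}_i)$ (its Lemma \ref{lem sqrt bound}, proved via $V^{(i)} \subseteq V$, monotonicity of $s_k$, Theorem \ref{thm shortcut 2}, and integrality of $s_k$), and then invokes Theorem \ref{thm first TDP bound}. Your treatment of the indicator term, the nesting of the pruning operation, and the ceiling step matches the paper's argument in all essentials.
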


Computational complexity for $\underline s_k(V)$ is $O(|V|)$, and for $\check s(V)$ is $O(|V|^{1+1/d})$, so that is also the computational complexity of $\underline{\mathbf{a}}(V)$ if $V$ is a supra-threshold cluster. For general $V$, complexity is the sum of the complexity of its comprising clusters, which is  $O(|V|^{1+1/d})$ in the worst case that $V$ is a supra-threshold cluster.

It is easy to verify that the shortcut of Theorem \ref{thm final} also retains the property of Theorem \ref{thm uniform improvement} that it uniformly improves classic cluster inference. Still, it sacrifices some power, since the lower bound $\check s_k(V)$ may be (much) smaller than $s_k(V)$. The difference between $s_k(V)$ and $\check s(V)$ can be expected to be relatively large especially if $|V|/k$ is small and if $V$ is irregularly shaped.

\subsection{Heuristic algorithms to minimize $k$-separators} \label{sec simul anneal}

The strength of the shortcut of the previous paragraph is its guaranteed TDP control, as expressed in Theorem \ref{thm final}. To obtain this control the shortcut sacrifices power in exchange for computational efficiency. In this section we present an alternative computational approach that aims to approximate  $s_k(V)$ heuristically as closely as possible, instead of bounding it from below. The algorithm has two parts. First, a heuristic algorithm finds a good separator. Next, an attempt is made to find a local improvement of the solution using simulated annealing. The second phase of the algorithm uses Theorem \ref{thm tiling}.

The first heuristic algorithm finds clusterings with acceptable sizes of separator sets. The algorithm consists of two phases: inferring an initial clustering, and improving regions consisting of a small number of neighbouring clusters. In the first phase, the algorithm starts from an empty clustering. It generates a small number of candidate clusters, where the number is a small integer, usually between 1 and 10. Each candidate cluster is created starting from a randomly chosen available voxel by a sequence of insertions of adjacent voxels such that the induced size of its separator is kept small. Then, the best candidate cluster, i.e., the cluster with the separator's minimal size, is inserted into the current clustering. The procedure is repeated until there is no space to insert a new cluster. The second phase consists of repetitions of local improvements. The algorithm randomly takes a small number of neighbouring clusters, removes them from the current clustering, and applies a procedure similar to the first phase to find a better setting of clusters.  

We follow up on the optimal heuristic separator using a simulated annealing algorithm, as follows. The separator of $V$ is translated to a tiling of $V^+$ according to Lemma \ref{lem R to T}. In each step, the algorithm chooses a random voxel $v \in V^+$ and a random neighbor $w \in V^+$ of $v$. If $v$ and $w$ are part of the same tile $T$ with interior size $|T^- \cap V|> k$, the algorithm proposes to start a new tile $\{v\}$; otherwise it proposes to reassign $v$ from its old tile to the tile of $w$. If the target function $t'$ of the proposed tiling is lower than or equal to the target function $t$ of the previous step, the proposal is always accepted. Otherwise, the proposal is accepted with a probability that is a decreasing function of $t'-t$ and of the current iteration number. After a maximum number of iterations is reached, the algorithm returns the best solution it found during its travels through the search space. 

The first algorithm was implemented in C and the simulated annealing in Python. The algorithms are usually invoked with a time limit setting. Pseudo-code for both heuristic algorithms are given in the Supplemental Information, Section B.



The heuristic algorithms are not guaranteed to find the global minimum with a finite running time. 
If the algorithm did not find the correct solution, the value found is larger than the actual minimum $s_k(V)$, so there is no formal guarantee of TDP control comparable to Theorem \ref{thm final}. Still, the overstatement of $s_k(V)$ may often be less than the understatement of $s_k(V)$ due to the lower bound (\ref{thm shortcut 2}). The heuristic approach may therefore be the preferred solution in practice if computation time is not an issue and a small overstatement of TDP is acceptable. 

\subsection{Heuristic algorithm performance}

A heuristic algorithm for a computationally hard problem cannot guarantee to find the optimal solution. Also estimating the error of such approaches is usually a difficult task. One way to proceed is to use exact solution approaches such as exhaustive enumeration, dynamic programming, or integer linear programming formulations. However, in the case of intractable problems, they can only be applied to small instances. Here, we propose a different approach. First, we show that some instances of the k-separator problem are tractable by showing their exact solution. Next, to estimate an error of the heuristic algorithm given the input consisting of multiple datasets, we generate a number of tractable instances matching properties of the input and jointly apply the heuristic algorithm under the same parameter setting. Finally, knowing the exact solution of tractable instances, we can estimate the solution error of the input datasets. The main result is formulated below in Lemma~\ref{lemma:heurbounds}

\begin{lemma}
\label{lemma:heurbounds}
Let $k=n^d$ and $c$ be a vector of $d$ positive integers. If the dimensions of a hyperrectangle $R$ are $(n+1)c_i - 1$ for $i = 1,\ldots,d$, then the bound of Theorem \ref{thm shortcut 2} is exact, so that the optimal $k$-separator of $R$ has $|R|-n^d\Pi c_i$ voxels.
\end{lemma}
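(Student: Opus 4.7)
I would prove the lemma by establishing matching upper and lower bounds on $s_k(R)$, with the upper bound coming from an explicit construction and the lower bound obtained by specializing Theorem~\ref{thm shortcut 2}.

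\emph{Upper bound via explicit tiling.} Tile the cover $R^+$, which is the hyperrectangle of dimensions $(n+1)c_i$, by the natural arrangement of $\prod c_i$ axis-aligned hypercubes of side $n+1$ on the block grid: the tile indexed by $b\in\prod\{1,\ldots,c_i\}$ occupies coordinates $((b_i-1)(n+1)+1,\ldots,b_i(n+1))$ in dimension~$i$. Each such tile $T$ has interior $T^-$ equal to a side-$n$ hypercube lying entirely inside $R$, so $|T^-\cap R|=n^d=k$ and the second sum in~(\ref{def t}) vanishes. The shaves $T^0$ across all tiles partition $R^+\setminus\bigcup_b T_b^-$, and intersecting with $R$ yields $\sum_b|T_b^0\cap R|=|R|-n^d\prod c_i$. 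Hence $t_k$ of this tiling equals $|R|-n^d\prod c_i$, and Lemma~\ref{lem T to R} produces a $k$-separator of $R$ of exactly this size, so $s_k(R)\leq|R|-n^d\prod c_i$.

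\emph{Lower bound from Theorem~\ref{thm shortcut 2}.} Since $|R^+|=(n+1)^d\prod c_i$, the theorem gives
\[
s_k(R)\;\geq\;r_{n^d}\,(n+1)^d\prod c_i \;-\;\bigl((n+1)^d\prod c_i-|R|\bigr),
\]
which matches the upper bound precisely when $r_{n^d}=((n+1)^d-n^d)/(n+1)^d$. Evaluating Lemma~\ref{lem calculate rk} at $j=n^d$: $\lfloor j^{1/d}\rfloor=n$ and $l_{d,j}=0$, so $b_{d,j}=n^d$ and $b^+_{d,j}=(n+1)^d$; unrolling the recursion $f_{d,j}=b^+_{d,j}+f_{d-1,j-b_{d,j}}$ gives $f_{d,n^d}=(n+1)^d$, and the ratio $(f_{d,j}-j)/f_{d,j}$ at $j=n^d$ is exactly $((n+1)^d-n^d)/(n+1)^d$. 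This certifies $r_{n^d}\leq((n+1)^d-n^d)/(n+1)^d$.

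\emph{Main obstacle.} The principal subtlety is verifying that this value is in fact the \emph{minimum} over $j\in[1,n^d]$ in Lemma~\ref{lem calculate rk}, not merely an upper bound. I would argue this by induction on $d$: for $d=1$, $f_{1,j}=j+1$ gives ratio $1/(j+1)$, strictly decreasing in $j$, so the minimum over $j\in[1,k]$ is at $j=k$. For the inductive step, the recursion $f_{d,j}=b^+_{d,j}+f_{d-1,j-b_{d,j}}$ supports a monotonicity argument: as $j$ grows along each ``shell'' determined by $b_{d,j}$ the ratio weakly decreases, and the per-shell infimum sits at the next perfect $d$-th power, so the overall minimum on $[1,n^d]$ is at $j=n^d$. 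Once this monotonicity is in hand, $r_{n^d}=((n+1)^d-n^d)/(n+1)^d$, the lower and upper bounds coincide, and we conclude $s_k(R)=|R|-n^d\prod c_i$, so the bound of Theorem~\ref{thm shortcut 2} is exact on~$R$.
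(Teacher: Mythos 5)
Your proposal is correct and follows essentially the same route as the paper's proof: an explicit construction attaining $|R|-n^d\prod c_i$ (the paper writes down the grid separator directly, whereas your tiling of $R^+$ into side-$(n+1)$ hypercubes plus Lemma \ref{lem T to R} produces the same separator), combined with the lower bound of Theorem \ref{thm shortcut 2} evaluated at $r_{n^d}=\bigl((n+1)^d-n^d\bigr)/(n+1)^d$. The step you single out as the main obstacle --- that the minimum in Lemma \ref{lem calculate rk} is attained at $j=n^d$ --- is simply asserted in the paper's proof, so completing your sketched monotonicity/induction argument would in fact make your write-up more detailed than the published one on this point.
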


Since our algorithm is not utilizing the information on the shape of the input voxel sets, nor the clusters are formed as cubes in the sampling, we believe that the benchmark of correctness based on hyperrectangles is a good indicator of how scores from the heuristic differ from the optimal ones.

To estimate the error of the heuristic algorithm, we inferred a collection of hyperrectangle tests based on the three-dimensional datasets from the Neurovault repository (see~Section~\ref{sec neurovault}). Our goal was to cover the whole range of $k$ values and datasets sizes from the input repository. Therefore, we set $k$ bounded above $1000$, and the hyperrectangle, i.e., cuboid, sizes to maximum $18000$ voxels. Being consistent with the notation from Lemma~\ref{lemma:heurbounds}, each test is uniquely determined by four integer parameters $n, c_1, c_2, c_3 \leq 10$, where $k$ is $n^3$, and the corresponding cuboid has dimensions $(n+1)c_i-1$, for each $i$.
After rejecting too large cuboids, we obtained $1064$ tests, which enlarged the input repository by nearly $9\%$. 

The experiment indicated that nearly $50\%$ of tests were completed with no error, and the worst errors of $5-6\%$ has only $\sim5\%$ of tests. A more detailed summary is depicted in Figure~\ref{fig hist err} with boxplots of errors for each value of $k$, where the $0\%$ represents no error. 
The results obtained on cuboid tests indicate that the sizes of separators inferred by our heuristic algorithm are optimal in nearly half of the cases. For the rest of the cases, the error is usually below 4\% with high confidence and the median error is below $2\%$. 


\begin{figure}[!ht]
\includegraphics[width=\textwidth]{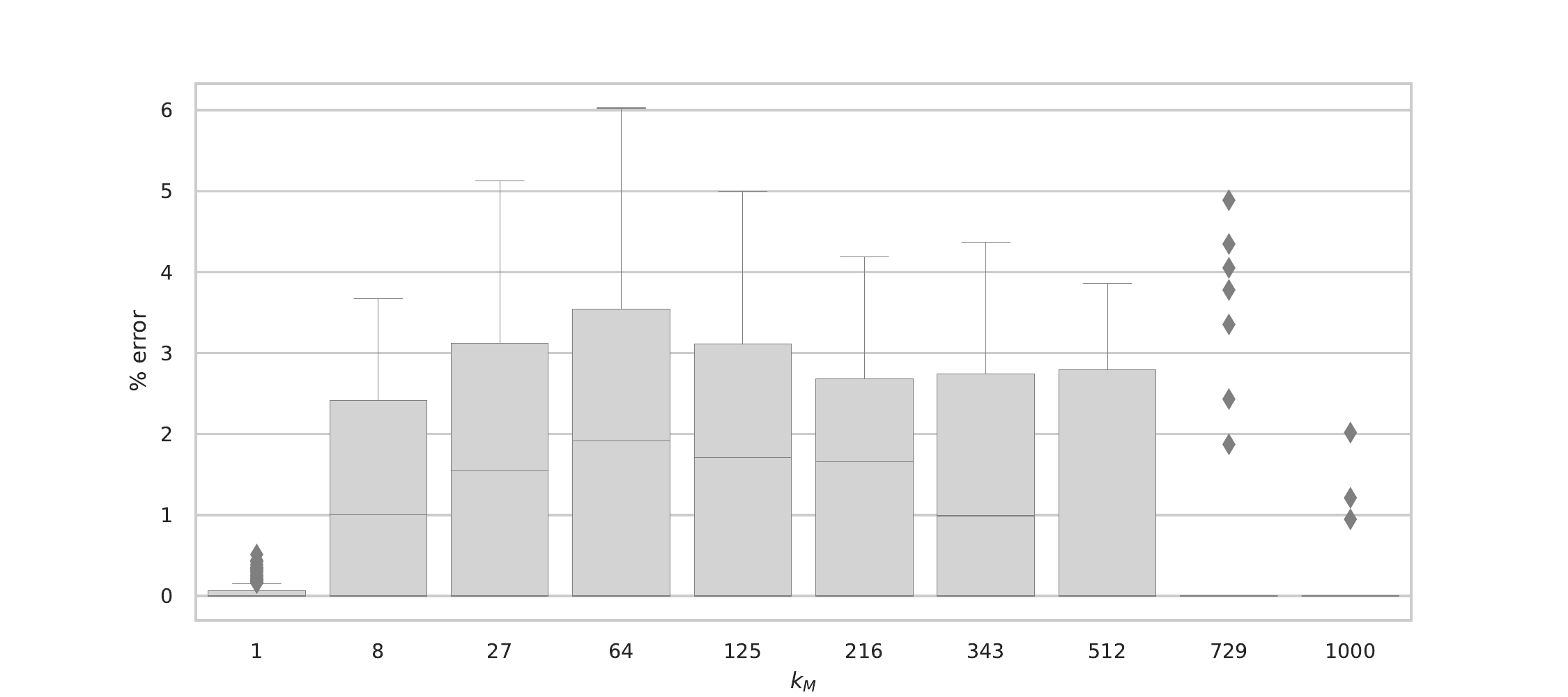}\caption{Upper bound heuristic performance: a boxplots of errors as a percentage of the true $k$-separator. Note that all errors are overestimates by construction.} \label{fig hist err} 
\end{figure}

\section{Choosing thresholds}

\subsection{Voxel-wise inference}

An alternative to cluster extent inference is classic voxel-wise inference. In voxel-wise inference, FWER is controlled over all voxel-wise null hypotheses. This is achieved by finding the $(1-\alpha)$-quantile of the distribution of the maximal $z$-score under the global null hypothesis $H_M$, and rejecting the null hypothesis whenever a voxel's $z$-score exceeds this threshold. 
Though cluster extent inference is often contrasted sharply with voxelwise inference, suggesting that these are two very different modes of operation. It was noted by \cite{Poline1997} and \cite{Friston1994} that classic voxelwise inference is simply a special case of cluster extent inference, obtained by choosing $k_M=0$. It follows that we can get a TDP per cluster from voxelwise inference.

In classic voxelwise inference, we reject $H_v$ for all voxels $v \in \mathbf{Z} = \{v\in M\colon \mathbf{z}_v \geq z\}$, where $z$ is chosen as the smallest value such that 
\begin{equation} \label{eq voxelwise FWER}
\mathrm{P}(|M \cap \mathbf{Z}| > 0) \leq \alpha
\end{equation}
holds for all $\mathrm{P} \in H_M$. It has been shown \citep{Worsley1992, Friston1991} that voxelwise inference controls voxelwise FWER, i.e., for all $\mathrm{P} \in \Omega$, 
\[
\mathrm{P}(\mathbf{Z}\not\subseteq A_\mathrm{P}) \leq \alpha.
\]

We can embed voxelwise inference into the closed testing procedure we have constructed by remarking that $|M \cap \mathbf{Z}| > 0$ if and only if $\chi_{M \cap \mathbf{Z}} >0$. Therefore (\ref{eq voxelwise FWER}) is equivalent to
\[
\mathrm{P}(\chi_{M \cap \mathbf{Z}} > 0) \leq \alpha,
\]
which is simply (\ref{eq CI weak FWER}) with $k_M=0$, and the latter is a valid choice for $k_M$. The closed testing procedure resulting from this choice is a relatively simple one, as the following theorem states.

\begin{theorem} \label{thm kM0}
If $k_M=0$, then for all $V \subseteq M$ we have 
\[
\underline{\mathbf{a}}(V) = \check{\mathbf{a}}(V) = \mathbf{a}(V) = |V \cap \mathbf{Z}|.
\]
\end{theorem}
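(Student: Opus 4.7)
The plan is to show the three equalities by unpacking each definition at $k_M = 0$ and exploiting the fact that, for any voxel set $X$, we have $\chi_X > 0$ if and only if $X \neq \emptyset$. The preparatory step is to use monotonicity assumption (\ref{ass h monotone}) to conclude that $k_W \leq k_M = 0$, hence $k_W = 0$, for every $W \subseteq M$. Consequently $\boldsymbol\phi_W = \mathds{1}\{\chi_{W \cap \mathbf{Z}} > 0\} = \mathds{1}\{W \cap \mathbf{Z} \neq \emptyset\}$ for all such $W$.

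For $\mathbf{a}(V)$, I first note that since $\boldsymbol\psi_W \leq \boldsymbol\phi_W$ (by taking $U = W$ in the min defining $\boldsymbol\psi_W$) and conversely $W \cap \mathbf{Z} = \emptyset$ forces $U \cap \mathbf{Z}$ to be non-empty for every $U \supseteq W$ that contains a supra-threshold voxel, a short argument gives $\boldsymbol\psi_W = \mathds{1}\{W \cap \mathbf{Z} \neq \emptyset\}$. Then $\mathbf{a}(V) = \min\{|V \setminus W| : W \subseteq V,\ W \cap \mathbf{Z} = \emptyset\}$, and the choice $W = V \setminus \mathbf{Z}$ is both feasible and maximal, yielding $\mathbf{a}(V) = |V \cap \mathbf{Z}|$. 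For $\check{\mathbf{a}}(V) = s_0(V \cap \mathbf{Z})$ from Theorem \ref{thm first TDP bound}, I observe that $\chi_{(V \cap \mathbf{Z}) \setminus R} \leq 0$ forces $(V \cap \mathbf{Z}) \setminus R = \emptyset$, so the smallest admissible $R$ is $V \cap \mathbf{Z}$ itself, giving $|V \cap \mathbf{Z}|$.

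For $\underline{\mathbf{a}}(V)$, I decompose $V \cap \mathbf{Z}$ into disconnected clusters $\mathbf{C}_1, \ldots, \mathbf{C}_\mathbf{n}$ and show $\check s_0(\mathbf{C}_i) = |\mathbf{C}_i|$ for each cluster. By Lemma \ref{lem calculate rk}, $r_0 = 1$, so $\underline s_0(C) = |C^+| - |C^+ \setminus C| = |C|$ for any cluster $C$. The definition of $\check s_k$ then yields $\check s_0(C) \geq |C|$ by taking $i = 0$ in the max. The reverse inequality $\check s_0(C) \leq |C|$ follows from $\check s_k(C) \leq s_k(C) \leq |C|$, since removing all voxels trivially makes $\chi$ zero. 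Summing over clusters and invoking Lemma \ref{lem decompose clusters} gives $\underline{\mathbf{a}}(V) = |V \cap \mathbf{Z}|$.

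The main obstacle, if any, is not deep but bookkeeping-related: one must verify that the pruning step in $\check s_k$ does not hurt the lower bound in the $k = 0$ case and that $\underline s_0$ equals $|V|$ on clusters once $r_0 = 1$ is established. Once these are pinned down, all three quantities collapse to $|V \cap \mathbf{Z}|$ as claimed.
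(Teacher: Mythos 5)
Your proposal is correct and uses essentially the same ingredients as the paper's proof: $k_W=0$ for all $W\subseteq M$ so every (effective) local test reduces to the indicator $\mathds{1}\{W\cap\mathbf{Z}\neq\emptyset\}$, $s_0(W)=|W|$, and $r_0=1$ giving $\underline s_0(C)=|C|$ with $\check s_0(C)\leq s_0(C)\leq|C|$; the only cosmetic difference is that you evaluate $\mathbf{a}(V)$, $\check{\mathbf{a}}(V)$ and $\underline{\mathbf{a}}(V)$ each directly, whereas the paper first shows $\boldsymbol\psi_V=\underline{\boldsymbol\psi}_V$ (hence $\mathbf{a}=\check{\mathbf{a}}$) and then sandwiches $\underline{\mathbf{a}}(V)$ between $|V\cap\mathbf{Z}|$ and $\mathbf{a}(V)$. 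One sentence of yours is garbled (``$W \cap \mathbf{Z} = \emptyset$ forces $U \cap \mathbf{Z}$ to be non-empty\ldots''), but the intended monotonicity argument and the conclusion $\boldsymbol\psi_W=\mathds{1}\{W\cap\mathbf{Z}\neq\emptyset\}$ are clearly right.
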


The theorem says how to calculate TDP for clusters when doing voxelwise inference: the TDP lower bound for a set $V$ is simply the fraction of voxelwise significant voxels among the voxels in $V$. Supra-threshold clusters obtained with $k_M=0$ always have a TDP of 100\%.

\subsection{Choosing $k_M$} \label{sec thresholds}

Cluster extent inference assumes that $z$ and $k_M$ are chosen in such a way that (\ref{eq CI weak FWER}) holds. It is common in cluster extent inference to fix the $z$-score threshold $z$, and to calculate $k_M$ as the smallest value such that (\ref{eq CI weak FWER}) is satisfied \citep{Friston1994}. However we saw in the previous section that the order is reversed in voxelwise inference: there $k_M=0$ is fixed, and $z$ is chosen as the smallest value of $z$ satisfying (\ref{eq CI weak FWER}). In this section, we argue that the order of fixing $k_M$ calculating $z$ should be generally preferred, both from the perspective of power and obtaining a good TDP bound.

It is perfectly valid to choose $k_M$ first, and to find a value of $z$ that corresponds to this $k_M$, as previously proposed by \citet{Bullmore1999}. The relationship between $z$ and $k_M$ depends only on the null model $H_M$, and not on the observed $z$-scores. For cluster inference based on random field theory, the relationship between $z$ and $k_M$ depends on the smoothness of the field, which is estimated from the independent residuals. For cluster inference based on permutations, $k_M$ is calculated from the matrix of all permutation $z$-scores, and can be calculated without knowing which permutation corresponds to the real data. We present a fast algorithm for finding $z$ based on $k_M$ using permutations in the Supplemental Information, Section E.

It is generally (slightly) more powerful to choose $k_M$ rather than $z$. The reason for this is that $k_M$ is discrete, while $z$ is continuous. When fixing $z$ and calculating $k_M$ there is almost always a smaller value of $z$ that would result in the same value of $k_M$. Using this value instead of the previously chosen $z$ would result in a uniformly more powerful method but still controls TDP, since (\ref{eq CI weak FWER}) still holds. We may therefore, after choosing $z$ and finding $k_M$, always re-calibrate our $z$.

Alternatively, we may simply choose $k_M$ and find $z$ as the smallest value such that (\ref{eq CI weak FWER}) holds, as is done in voxelwise inference. This has the important advantage that the achievable TDP can be better controlled.

\section{Upper bounds}

In this section we present two upper bound results that impose hard limits on the TDP that can be achieved with closed testing based on  cluster extent inference. The first bound, in Section \ref{sec thresholds}, limits what can be achieved using the lower bound; this result helps to choose the settings of that method. The second bound, in Section \ref{sec upper bound}, limits what can be achieved in terms of TDP by the full closed procedure (\ref{eq bound}). Since closed testing procedures can only be uniformly improved by improving their local tests \citep{Goeman2020}, and that the room for such improvements is limited if (\ref{eq CI weak FWER}) is tight, this sets a limit on the potential of any method that is consistent with classic cluster extent inference.

The maximal achievable TDP from the shortcut can be calculated as a function of $k_M$ and cluster size $|\mathbf{C}|$ by the following theorem. 

\begin{theorem} \label{thm bound rk}
For every cluster $\mathbf{C} \subseteq \mathbf{Z}$, we have 
\[
\underline{\mathbf{a}}(\mathbf{C}) \leq \Big\lceil \frac{r_{k_M}-r_{|\mathbf{C}|}}{1-r_{|\mathbf{C}|}} \cdot |\mathbf{C}| \Big\rceil \vee \mathds{1}\{|\mathbf{C}| > k_M\}.
\]
\end{theorem}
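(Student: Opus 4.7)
The plan is to bound $\underline{\mathbf{a}}(\mathbf{C})=\check s_{k_M}(\mathbf{C})$ by separately controlling each $\lceil\underline s_{k_M}(\mathbf{C}^{(i)})\rceil$ inside the maximum that defines $\check s_{k_M}$, and then taking the $\vee$ with the indicator. Since $\mathbf{C}$ is a cluster one has $\chi_{\mathbf{C}}=|\mathbf{C}|$, so the indicator in $\check s_{k_M}(\mathbf{C})$ is exactly $\mathds{1}\{|\mathbf{C}|>k_M\}$. Writing $\underline s_{k_M}(V)=|V|-(1-r_{k_M})|V^+|$, the whole task reduces to finding a sufficiently strong lower bound on $|V^+|$ in terms of $|V|$.

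The key step is an ``inverted'' use of Theorem~\ref{thm shortcut 2}. For any voxel set $V$, the choice $R=\emptyset$ certifies $s_{\chi_V}(V)=0$, so Theorem~\ref{thm shortcut 2} applied with $k=\chi_V$ yields
\[
0\ \geq\ r_{\chi_V}|V^+|-|V^+\setminus V|\ =\ r_{\chi_V}|V^+|-(|V^+|-|V|),
\]
which rearranges to $|V^+|\geq|V|/(1-r_{\chi_V})$. Substituting this lower bound into $\underline s_{k_M}(V)$ and using $1-r_{k_M}\geq 0$ gives
\[
\underline s_{k_M}(V)\ \leq\ |V|\cdot\frac{r_{k_M}-r_{\chi_V}}{1-r_{\chi_V}}.
\]

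Now take $V=\mathbf{C}^{(i)}$. Pruning only shrinks ($\mathbf{C}^{(i)}\subseteq\mathbf{C}$), so $\chi_{\mathbf{C}^{(i)}}\leq|\mathbf{C}^{(i)}|\leq|\mathbf{C}|$; monotonicity of $r_k$ in $k$ (clear from Lemma~\ref{lem calculate rk}) then gives $r_{\chi_{\mathbf{C}^{(i)}}}\geq r_{|\mathbf{C}|}$. The function $f(r)=(r_{k_M}-r)/(1-r)$ has derivative $(r_{k_M}-1)/(1-r)^2\leq 0$ and is therefore non-increasing, so $f(r_{\chi_{\mathbf{C}^{(i)}}})\leq f(r_{|\mathbf{C}|})$. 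In the main case $k_M\leq|\mathbf{C}|$ the right-hand side $f(r_{|\mathbf{C}|})$ is nonnegative, and a quick sign check (if $f(r_{\chi_{\mathbf{C}^{(i)}}})<0$ the left side is already non-positive; otherwise one multiplies two nonnegative inequalities) gives
\[
\underline s_{k_M}(\mathbf{C}^{(i)})\ \leq\ |\mathbf{C}^{(i)}|\cdot f(r_{\chi_{\mathbf{C}^{(i)}}})\ \leq\ |\mathbf{C}|\cdot\frac{r_{k_M}-r_{|\mathbf{C}|}}{1-r_{|\mathbf{C}|}}.
\]
Taking the ceiling, the max over $i$, and the $\vee$ with $\mathds{1}\{|\mathbf{C}|>k_M\}$ then yields the claimed bound.

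The only case that does not fit this pattern directly is $k_M\geq|\mathbf{C}|$, where both $f(r_{\chi_{\mathbf{C}^{(i)}}})$ and $f(r_{|\mathbf{C}|})$ are negative and the product inequality above can fail. In that regime the theorem's right-hand side equals $0$ (the ceiling of a non-positive number, \emph{vee}'d with the zero indicator), and a separate direct check disposes of it: $\chi_{\mathbf{C}^{(i)}}\leq|\mathbf{C}|\leq k_M$ forces $r_{\chi_{\mathbf{C}^{(i)}}}\geq r_{k_M}$, which via the sharpened bound on $\underline s_{k_M}(V)$ above makes every $\underline s_{k_M}(\mathbf{C}^{(i)})\leq 0$, hence $\check s_{k_M}(\mathbf{C})=0$. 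The main obstacle is recognizing the inverted use of Theorem~\ref{thm shortcut 2}: choosing $k=\chi_V$ so that $s_k(V)$ trivially vanishes converts its lower bound on $s_k$ into a lower bound on $|V^+|$; once that trick is identified, the remainder is routine monotonicity bookkeeping.
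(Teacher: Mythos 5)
Your proof is correct, and while its outer skeleton matches the paper's (bound $\underline s_{k_M}$ on each pruned set $\mathbf{C}^{(i)}\subseteq\mathbf{C}$, use monotonicity of $r_k$ and of $c\mapsto\frac{r_{k_M}-r_c}{1-r_c}$, then take ceilings, the max over $i$, and the $\vee$ with the indicator, noting $\chi_{\mathbf{C}}=|\mathbf{C}|$ and $\underline{\mathbf{a}}(\mathbf{C})=\check s_{k_M}(\mathbf{C})$ via Theorem \ref{thm final}), the key inner inequality is obtained by a genuinely different route. The paper proves a standalone lemma (Lemma \ref{lem upper bound of lower bound}) by writing $\underline s_k(V)=|V|-(1-r_k)|V^+|$ and bounding $|V^+|\geq f_{d,|V|}\geq |V|/(1-r_{|V|})$, which leans on the extremal cover-size identity of Lemma \ref{lem gdk fdk}/\ref{lem calculate rk}. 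You instead invert Theorem \ref{thm shortcut 2} at $k=\chi_V$, where $s_{\chi_V}(V)=0$ trivially, to get $|V^+|\geq|V|/(1-r_{\chi_V})$; this avoids any direct appeal to the $f_{d,k}$ machinery, reuses an already-established theorem, and is in fact marginally sharper at the intermediate stage since $\chi_V\leq|V|$ implies $r_{\chi_V}\geq r_{|V|}$ (both routes collapse to the same final bound after the monotonicity step). You are also more careful than the paper about signs: the paper's displayed chain $\underline s_{k_M}(V^{(i)})\leq(\cdots)|V^{(i)}|\leq(\cdots)|V|$ can fail as stated when $k_M>|\mathbf{C}|$ (both factors negative), whereas you isolate that regime and verify directly that every $\underline s_{k_M}(\mathbf{C}^{(i)})\leq 0$, so the right-hand side $0$ still dominates. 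The only cosmetic omission is the pruned sets that become empty (where the division by $1-r_{\chi_V}$ is not available), but there $\underline s_{k_M}(\emptyset)=0$ is trivially below the bound in either case, so nothing is lost.
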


By Theorem \ref{thm bound rk}, to achieve a TDP of $\gamma$, for some $\gamma > 1/{k_M}$, we need a cluster $\mathbf{C}$ with 
\[
r_{|\mathbf{C}|} \leq \frac{r_{k_M} - \gamma}{1-\gamma}.
\]
The maximal TDP according to $\underline a$ for different values of $k_M$ and different cluster size $|\mathbf{C}|$ is given in Figure \ref{fig max tdp}. Since $r_{|\mathbf{C}|} \to 0$ as $|\mathbf{C}| \to \infty$, the TDP lower bound $\underline{\mathbf{a}}(\mathbf{C}) / |\mathbf{C}|$ achieved by the shortcut of Theorem \ref{thm final} is at most $r_{k_M}$ for very large clusters, and much smaller than that for small and irregular clusters. The maximal TDP values converge to $r_k$ as the cluster size increases. Clusters may achieve the maximal TDP if they are highly compact. Irregular clusters tend to have (much) smaller TDP. 

\begin{figure}[!ht]
\includegraphics[width=\textwidth]{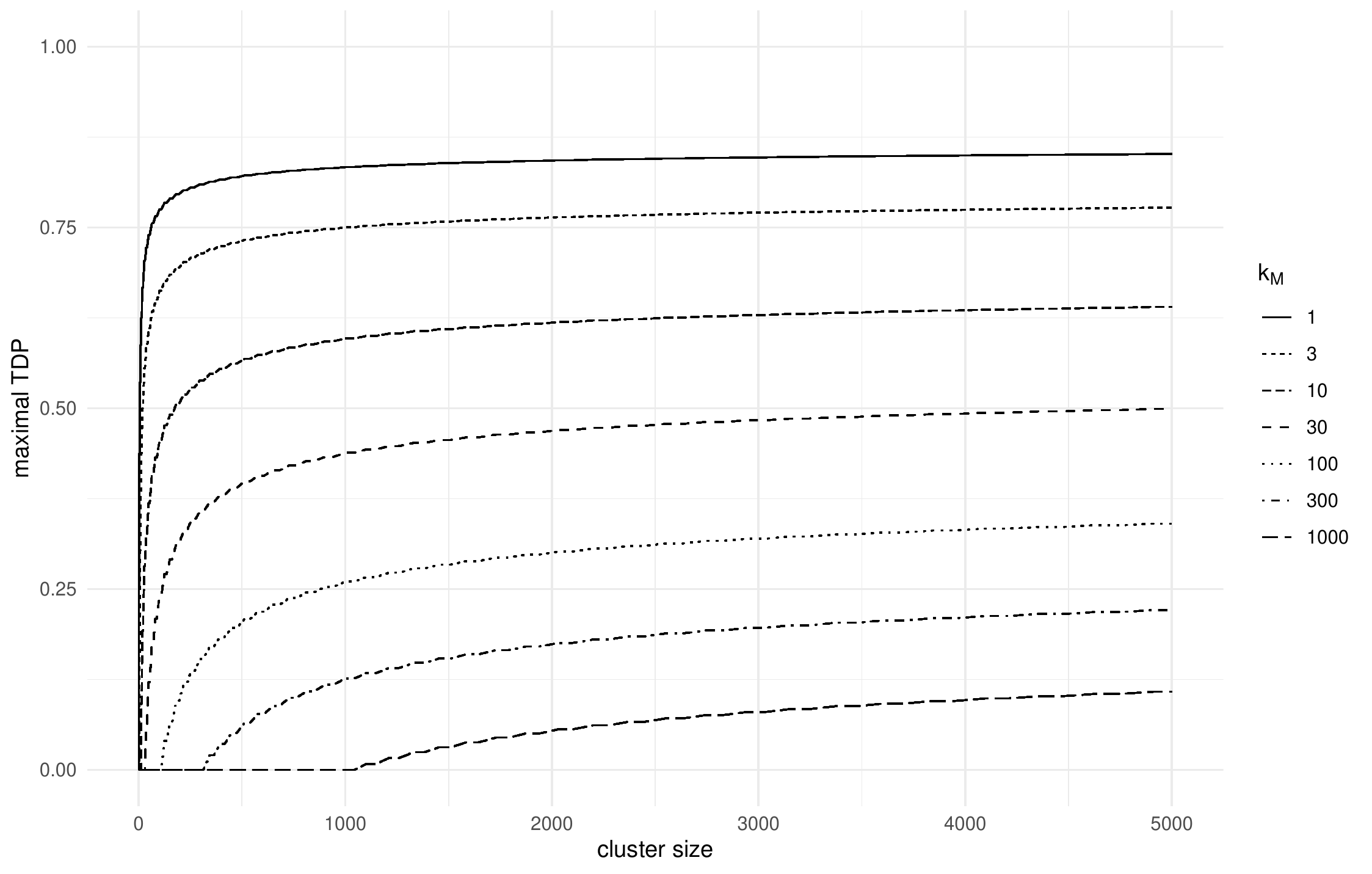}
\caption{The maximal TDP according to the shortcut $\underline a (V)$, defined in Theorem \ref{thm final}, as a function of the extent threshold $k_M$ and cluster size, for dimensions $d=3$.} \label{fig max tdp} 
\end{figure}

We see from Figure \ref{fig max tdp} that, with large values of $k_M$, it is difficult or even impossible to achieve good TDP even for large clusters, so a small value of $k_M$ is recommended if large TDP is desired. Assuming that we are interested in finding clusters with TDP $\geq 1/2$, a sweet spot with $d=3$ seems to be $k_M=14$, for which $r_{k_M}=2/3$. To achieve TDP $\geq 1/2$, clusters need to have $r_{|V|}\leq 1/3$, which implies $|V|\geq 339$.

For $\check{\mathbf{a}}(V)$ we have a weaker bound $\check{\mathbf{a}}(V) \leq \tilde r_{k_M}\cdot |V|$, from Lemma \ref{lem tilde r}, below, that bounds TDP by $\tilde r_k \approx r_k$. The value of $\tilde r_k$ is illustrated in Figure \ref{fig rk} in Section \ref{sec short 2}. This bound suggests that also when using the heuristic approximation to the $k$-separator problem, a researcher would want to use a value of $k_M$ that yields $\tilde r_{k_M}$ substantially above the target TDP. E.g., getting a TDP over 0.5 is impossible if $k_M > 64$, and remains unlikely unless $k_M$ is substantially smaller than 64, since the bound of Lemma \ref{lem tilde r} is not very tight.

\begin{lemma} \label{lem tilde r}
We have $s_k(V) \leq \tilde r_k \cdot |V|$, where $\tilde r_k = (b^+_{d,k} - b_{d,k})/b^+_{d,k}$.
\end{lemma}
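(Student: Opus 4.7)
The plan is to exhibit an explicit $k$-separator $R \subseteq V$ with $|R| \leq \tilde r_k |V|$, constructed from a tiling of $\mathbb{Z}^d$ by axis-aligned hyperrectangles, and then apply an averaging argument over translates of the tiling.

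Set $n = \lfloor k^{1/d}\rfloor$ and let $C$ be a hyperrectangle with dimensions $c_1,\ldots,c_d$, where $c_i = n+1$ for $d - l_{d,k}$ of the indices and $c_i = n+2$ for the remaining $l_{d,k}$ indices. Then $|C| = b^+_{d,k}$ and $|C^-| = \prod_i (c_i - 1) = b_{d,k} \leq k$, the final inequality being the defining property of $l_{d,k}$ in Lemma \ref{lem calculate rk}. For each shift $s \in \prod_i \{0,\ldots,c_i-1\}$, the translates of $C$ under $s$ partition $\mathbb{Z}^d$ into cells $\{T^s_j\}_j$; define the candidate separator $R_s = V \setminus \bigcup_j (T^s_j)^-$.

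Next I would verify that $R_s$ is a valid $k$-separator of $V$. Each connected component of $V \setminus R_s$ is contained in a single $(T^s_j)^-$, a hyperrectangle of size $b_{d,k} \leq k$; and for $j \neq j'$ the interiors $(T^s_j)^-$ and $(T^s_{j'})^-$ contain no pair of neighboring voxels. Indeed, if the two cells differ in coordinate $i$, then by the definition of $V^-$ the top slice $\{v_i \equiv s_i - 1 \pmod{c_i}\}$ of each cell is removed, leaving a coordinate-$i$ gap of at least $2$ between the two interiors, which exceeds the step allowed by the neighbor relation. Hence every cluster of $V \setminus R_s$ has size at most $k$, so $R_s$ is a $k$-separator of $V$.

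Finally I would sum $|R_s|$ over the $b^+_{d,k}$ distinct shifts. A voxel $v \in V$ belongs to $R_s$ iff $v_i \equiv s_i - 1 \pmod{c_i}$ for some $i$, which holds for exactly $b^+_{d,k} - b_{d,k}$ of the shifts. Therefore $\sum_s |R_s| = (b^+_{d,k} - b_{d,k})|V|$ and the average over shifts is $\tilde r_k |V|$. By pigeonhole some shift $s^*$ achieves $|R_{s^*}| \leq \tilde r_k |V|$, giving $s_k(V) \leq \tilde r_k |V|$. The main subtle step is the non-adjacency of distinct cell interiors, which hinges critically on the asymmetric (positive-only) definition of $V^-$; the remainder is routine counting.
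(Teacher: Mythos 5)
Your proof is correct and is essentially the paper's own argument: both exhibit a periodic lattice $k$-separator (removing a one-voxel-thick slice every $c_i$ steps so that the remaining blocks have size $b_{d,k}\leq k$) and then average over the $b^+_{d,k}$ translates to find, by pigeonhole, a shift whose separator meets $V$ in at most $\tilde r_k\cdot|V|$ voxels. The only difference is cosmetic — you phrase the construction via tilings and the $(\cdot)^-$ operation, while the paper removes coordinates divisible by $q_i+1$ — so no further comment is needed.
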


Note that $b_{d,k}$ and $b^+_{d,k}$ are defined in Lemma \ref{lem calculate rk}.

\subsection{The limits of cluster extent thresholding} \label{sec upper bound}

In the previous section we considered upper bounds for the shortcuts to the closed testing procedure. Such bounds are useful for researchers intending to use these shortcuts. In this section we consider an upper bound to the full closed testing procedure (\ref{eq simultaneous tilde}), given below in Theorem \ref{thm bound aV}. This bound is of fundamental and practical interest, as we will explain. 

\begin{theorem} \label{thm bound aV}
Let $\overline{\mathbf{a}}(V) = s_{k_{M\setminus \mathbf{Z}}}(V\cap \mathbf{Z})$, then, for every $V \subseteq M$,
\[
\mathbf{a}(V) \leq \overline{\mathbf{a}}(V).
\]
\end{theorem}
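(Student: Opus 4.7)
The plan is to exhibit, for a given $V \subseteq M$, an explicit witness $W \subseteq V$ in the definition $\mathbf{a}(V) = \min\{|V\setminus W|\colon W \subseteq V,\ \boldsymbol{\psi}_W = 0\}$ that realizes the bound $s_{k_{M\setminus\mathbf{Z}}}(V \cap \mathbf{Z})$. Specifically, let $R$ be a minimum $k_{M\setminus\mathbf{Z}}$-separator of $V \cap \mathbf{Z}$, so $|R| = s_{k_{M\setminus\mathbf{Z}}}(V\cap \mathbf{Z}) = \overline{\mathbf{a}}(V)$ and $\chi_{(V \cap \mathbf{Z}) \setminus R} \leq k_{M\setminus\mathbf{Z}}$, with $R \subseteq V \cap \mathbf{Z} \subseteq V$. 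Set $W = V \setminus R$, so $|V \setminus W| = |R|$.

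Next I would verify that $\boldsymbol\psi_W = 0$ by producing a specific $U$ with $W \subseteq U \subseteq M$ and $\boldsymbol\phi_U = 0$. The crucial choice is
\[
U = W \cup (M \setminus \mathbf{Z}),
\]
which augments $W$ with every sub-threshold voxel in the mask. Two properties of this $U$ combine to do the work. First, since adding sub-threshold voxels does not change the supra-threshold intersection,
\[
U \cap \mathbf{Z} = (W \cap \mathbf{Z}) \cup ((M\setminus\mathbf{Z}) \cap \mathbf{Z}) = (V \setminus R) \cap \mathbf{Z} = (V \cap \mathbf{Z}) \setminus R,
\]
so $\chi_{U \cap \mathbf{Z}} \leq k_{M\setminus\mathbf{Z}}$ by choice of $R$. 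Second, because $M\setminus\mathbf{Z} \subseteq U$, the monotonicity assumption (\ref{ass h monotone}) gives $k_U \geq k_{M\setminus\mathbf{Z}}$. Chaining these inequalities yields $\chi_{U \cap \mathbf{Z}} \leq k_U$, hence $\boldsymbol\phi_U = 0$, hence $\boldsymbol\psi_W = 0$ by the definition of the effective local test.

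Putting this together, $W$ is a valid candidate in the minimization defining $\mathbf{a}(V)$, so
\[
\mathbf{a}(V) \leq |V \setminus W| = |R| = s_{k_{M\setminus\mathbf{Z}}}(V \cap \mathbf{Z}) = \overline{\mathbf{a}}(V).
\]
The boundary case $V \cap \mathbf{Z} = \emptyset$ is covered without modification, since then $R = \emptyset$, $W = V$, and the same $U$ works.

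The main obstacle is not technical difficulty but spotting the right $U$: the proof hinges on the observation that padding $W$ with all sub-threshold mask voxels leaves $U \cap \mathbf{Z}$ untouched while maximally raising $k_U$ via monotonicity. This is exactly the asymmetry between $\mathbf{Z}$ and $M \setminus \mathbf{Z}$ that the shortcut $\underline{\boldsymbol\psi}_V$ throws away by always using $k_M$; recovering it is what makes the full closed procedure tighter than the shortcut and explains why $k_{M\setminus\mathbf{Z}}$ rather than $k_M$ appears in the upper bound.
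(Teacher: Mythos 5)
Your proof is correct: $W=V\setminus R$ with $R$ an optimal $k_{M\setminus\mathbf{Z}}$-separator of $V\cap\mathbf{Z}$, together with the witness $U=W\cup(M\setminus\mathbf{Z})$, does give $\chi_{U\cap\mathbf{Z}}=\chi_{(V\cap\mathbf{Z})\setminus R}\leq k_{M\setminus\mathbf{Z}}\leq k_U$ by (\ref{ass h monotone}), hence $\boldsymbol\phi_U=0$, $\boldsymbol\psi_W=0$, and $\mathbf{a}(V)\leq|R|=\overline{\mathbf{a}}(V)$. The core trick is the same one the paper uses --- augment the candidate set with the part of the mask that carries no rejections, so that monotonicity of the extent threshold bites --- but your route is genuinely more direct. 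The paper does not prove the theorem head-on: it first establishes a slightly tighter bound (Lemma \ref{lem step down}), in which $k_{M\setminus\mathbf{Z}}$ is replaced by $k_{M\setminus\mathbf{D}}$ with $\mathbf{D}\subseteq\mathbf{Z}$ the set obtained by iteratively removing significant clusters and recomputing the threshold; there the padding set is $M\setminus\mathbf{D}$, which can reintroduce supra-threshold voxels, so the argument needs an extra containment step (any sufficiently large cluster of $(V\cup(M\setminus\mathbf{D}))\cap\mathbf{Z}$ lies inside $\mathbf{D}$, hence inside $V\cap\mathbf{Z}$) and is run by contradiction at the level of effective local tests; the stated theorem then follows from $k_{M\setminus\mathbf{D}}\geq k_{M\setminus\mathbf{Z}}$. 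Your choice of padding with $M\setminus\mathbf{Z}$ leaves the supra-threshold intersection untouched, which is exactly why your argument is shorter and needs no cluster-containment reasoning; the trade-off is that it yields only the stated bound and not the sharper $s_{k_{M\setminus\mathbf{D}}}(V\cap\mathbf{Z})$ that the paper's detour provides.
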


In the proof of this theorem in the Supplemental information (Section A) we will prove a slightly tighter bound. Note the similarity of $\overline{\mathbf{a}}(V)$ with $\check{\mathbf{a}}(V)$, the only difference being that $k_M$ is replaced by $k_{M\setminus \mathbf{Z}}$. This difference will be small unless $|\mathbf{Z}|$ is large relative to $|M|$. 

Practically, Theorem \ref{thm bound aV} can be used to bound the loss $\mathbf{a}(V) - \underline{\mathbf{a}}(V)$ of the shortcut $\underline{\mathbf{a}}(V)$ relative to the full closed testing procedure $\mathbf{a}(V)$. It limits the potential for further computational improvements. In practice, unless $|\mathbf{Z}|$ is large relative to $|M|$ we will have $k_M \approx k_{M\setminus \mathbf{Z}}$, so that $\check{\mathbf{a}}(V) \approx  \overline{\mathbf{a}}(V)$, and $\check{\mathbf{a}}(V) \approx  \mathbf{a}(V)$.

More fundamentally, we can combine Theorem \ref{thm bound aV} with the insights from \cite{Goeman2020}. We have constructed $\mathbf{a}(V)$ as the unique closed testing procedure induced by cluster extent inference. By \cite{Goeman2020} closed testing procedures are optimal, so there is no room for improvement of the method outside the closed testing framework. Moreover, improvement within the closed testing framework is limited to improvement of the local test, and there is hardly room for that if $z$ and $k_M$ are optimized for (\ref{eq CI weak FWER}). It follows that Theorem \ref{thm bound aV} gives a clear upper bound to the TDP arising from any method that is based on cluster extent thresholding. Any method that achieves the result of Theorem \ref{thm uniform improvement} would also be constrained by the result of Theorem \ref{thm bound aV}.

\section{Simulation} \label{sec sim}

In this section, Monte Carlo simulation is conducted to demonstrate the validity of our proposed methods, to investigate the tightness the TDP, and to see the gap between the upper and lower TDP bounds.

\subsection{Set-up}

2D images, each with $128 \times 128$ pixels, were simulated. Two spatial signal configurations were considered, shown in Figure \ref{fig: signal}: (1) a focal configuration with a single large circle of signal in the middle and (2) a distributed configuration with 9 small circular regions of signal spread out. The number of pixels with signal was 716 for both configurations. The simulated images were created by filling each pixel with spatially correlated noise, starting from i.i.d.\ standard Gaussian noise and smoothing with a spatial Gaussian smoothing kernel with full width at half maximum (FWHM) of 4 pixels, i.e., with $\sigma=1.7$ pixels. Signal was added according to the chosen configuration at a fixed signal amplitude of $d=0.1$ and $d=0.05$, respectively. We considered 20 sample sizes $n$ between 10 and 200 with an increment of 10, and a total of 1000 images were generated for each simulation setting. We calculated $z$-scores for each voxel using a one-sample $t$-test. Clusters of interest were defined as all connected components of $\mathbf{Z}$ as defined in Section \ref{sec classical}, using $z$-score thresholds $z=0.348 \times \sqrt{n}$ for each sample size $n$. 

To calculate the $k_M$ threshold at $\alpha=0.05$ fulfilling (\ref{eq CI weak FWER}) we simulated a second independent null field without signal for each combination of each sample size and threshold, smoothed in the same way. We calculated $k_M$ as the 95\% quantile of the empirical distribution of the maximum cluster size in this null field. Clusters of size $k_M$ or smaller were discarded in accordance with standard practice. Subsequently, the TDP bound was calculated using both the heuristic algorithm of Section \ref{sec simul anneal} and the lower bound of Theorem \ref{thm final}.
 
\begin{figure}
\begin{subfigure}[b]{0.5\textwidth}
\centering
\includegraphics[width=0.8\textwidth]{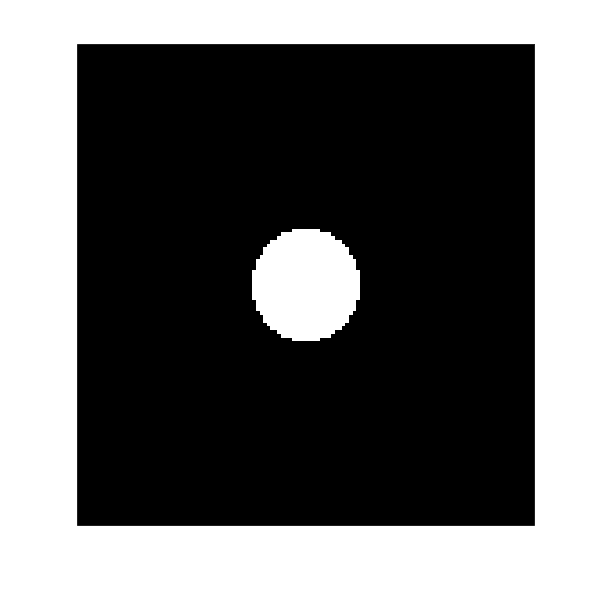}
\label{focal}
\end{subfigure}
\hfill
\begin{subfigure}[b]{0.5\textwidth}
\centering
\includegraphics[width=0.8\textwidth]{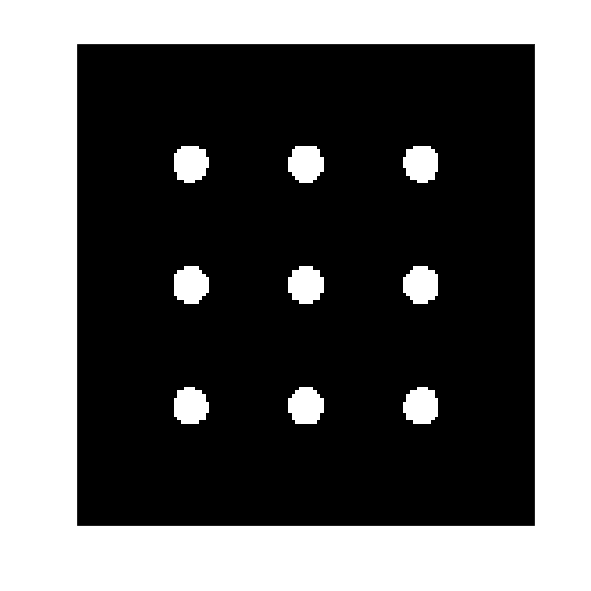}
\label{dist}
\end{subfigure}
\caption{2D simulated signal illustration. Focal signal (left) with one large circle in the middle; distributed signal (right) with nine identical circular regions.}
\label{fig: signal}
\end{figure}

\subsection{Results}

Figure \ref{fig: size} shows the average size of the clusters found, illustrating the qualitative difference between the two signal amplitudes. Here, the cluster size is standardized to a percentage based on the true signal size. At the high amplitude ($d=0.1$) the clusters are consistent for the signal, with clusters converging to the true signal as the sample size increases. In contrast, at the low amplitude the clusters capture a vanishing fraction of the true signal. 

Figure \ref{fig: FWER} shows the error rate of the method, which is well controlled at $\alpha=0.05$ for all settings. The lower bound is conservative for large and for small sample sizes, while the heuristic algorithm is only conservative for large sample size. We explain this for small sample size by the compactness of the chosen signal regions, for which the lower bound method tends to underestimate TDP. For large sample size, conservativeness is due to discreteness of $k_M$, so that the $\alpha$-level in (\ref{eq CI weak FWER}) is not exhausted. The heuristic algorithm also controls its error rate quite well in this simulation, despite the lack of a theoretical guarantee.

Figure \ref{fig: TDP} shows the TDP bounds found by the method. Displayed is the average value of the TDP over all significant clusters, i.e.\ over all clusters with $\text{TDP}>0$. Note that the number of such clusters is much smaller for the low signal amplitude setting than for the high amplitude setting, and much larger for the distributed configuration of signal than for the focal one. We see that in all settings the TDP of significant clusters goes to 1 as sample size increases. This is because the value of $k_M$ decreases with the sample size, eventually reaching $k_M=0$. The difference in TDP between the lower bound and the heuristic algorithm is appreciable but not overly large, almost never exceeding 10\%.

\begin{figure}
\centering
\includegraphics[width=\textwidth]{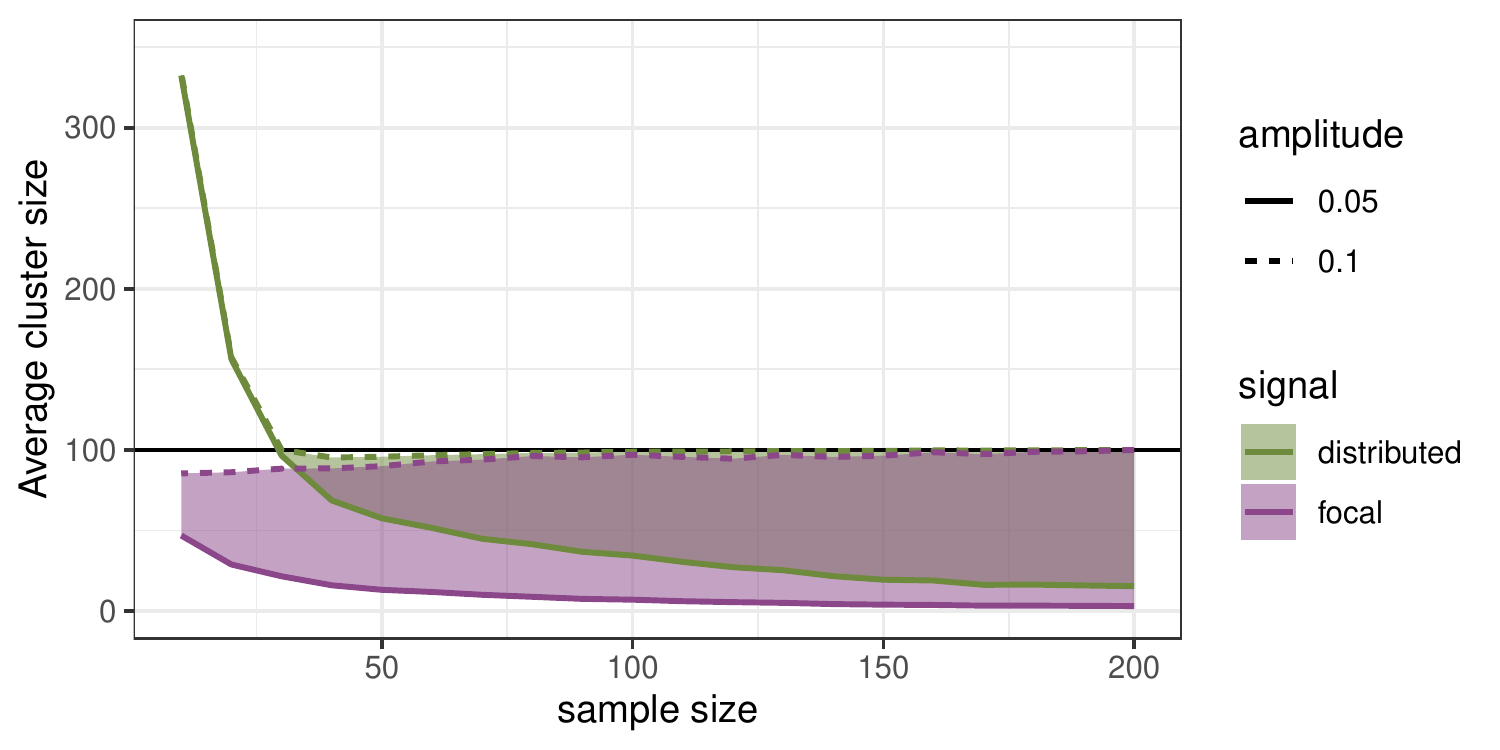}
\caption{Average cluster sizes (standardized and expressed in percent) for focal (purple) and distributed (green) signals with the amplitudes of $d=0.1$ (dashed line) and $d=0.05$ (solid line).}
\label{fig: size}
\end{figure}

\begin{figure}
\centering
\includegraphics[width=0.9\textwidth]{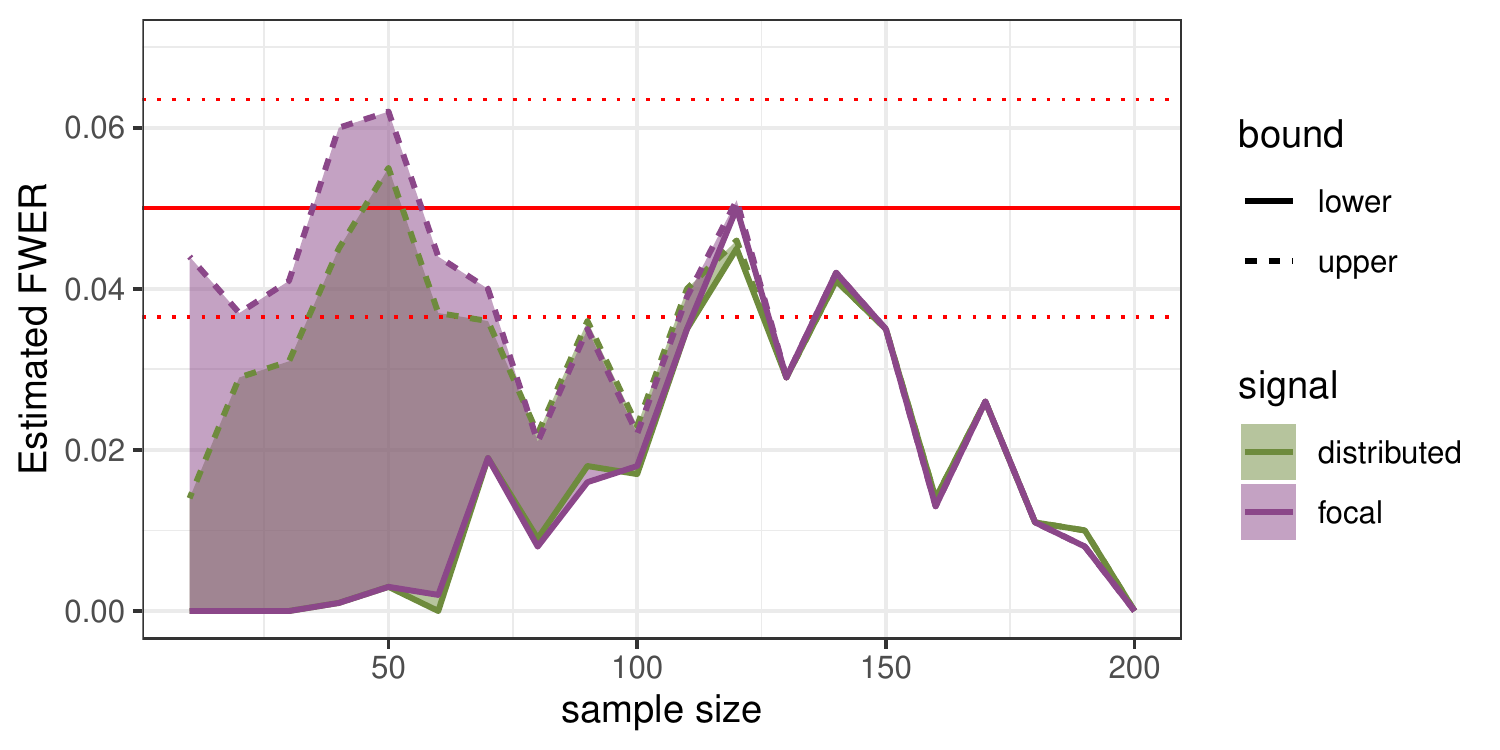}
\par\medskip
\centering
\caption{Estimated family-wise error rates (FWER) for focal (purple) and distributed (green) signals with the amplitudes of $d=0.1$. The red dotted horizontal lines represent the binomial confidence intervals for the FWER at $\alpha=0.05$ (solid horizontal line). Shown are the results for lower-bound (solid line) and upper-bound (dashed line) based on the heuristic algorithm. The results for $d=0.05$ (not shown) are almost identical, since the same realization of the noise field was used for both simulations.}
\label{fig: FWER}
\end{figure}

\begin{figure}
\centering
\subfloat[high signal amplitude of $d=0.1$.]{\label{fig_tdp_01}\includegraphics[width=0.9\textwidth]{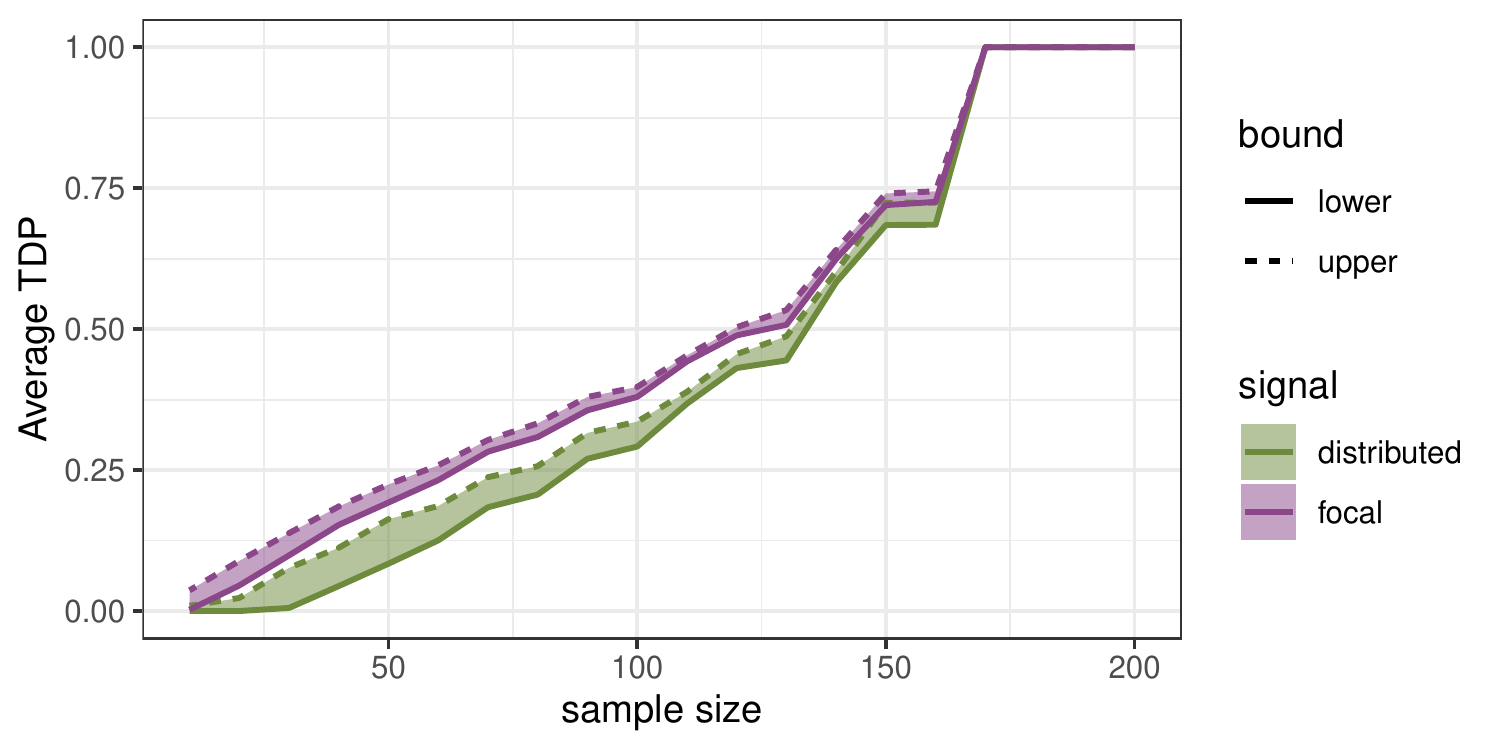}}
\par\medskip
\centering
\subfloat[low signal amplitude of $d=0.05$.]{\label{fig_tdp_005}\includegraphics[width=0.9\textwidth]{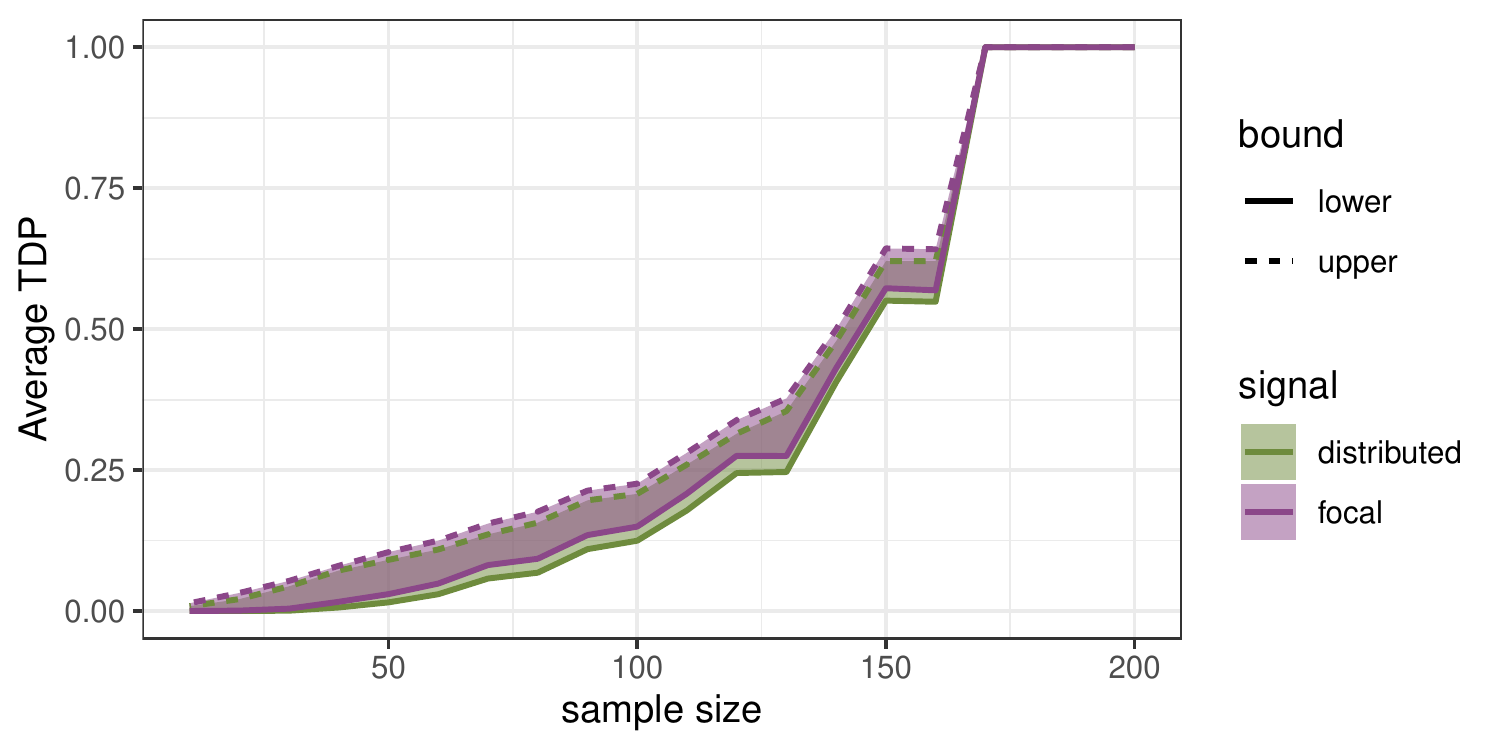}}
\caption{Average TDP bounds for all significant clusters for focal (purple) and distributed (green) signals with the amplitudes of $d=0.1$ and $d=0.05$. Shown are the results for lower-bound (solid line) and upper-bound (dashed line) based on the heuristic algorithm.}
\label{fig: TDP}
\end{figure}

\section{Application: Human Connectome Project $n$-back task revisited} \label{sec appl}

We illustrate the use of the new method using a more extensive analysis of the data set introduced in Section \ref{sec motivate}.

A $z$-score threshold $z$ and cluster extent threshold $k_M$ can be defined in any way that satisfies \eqref{eq CI weak FWER}; that is, fixing one threshold, the smallest value of the other still satisfying \eqref{eq CI weak FWER} can be calculated. We present the permutation-based thresholds in this Section, using the fast algorithm for finding $z$ as a function of $k_M$ using permutations given in the Supplemental Information, Section E. For comparison, the analysis with thresholds based on random field theory is given in the Supplemental Information, Section F.

We present two alternative permutation-based analyses. First, we fixed $z=3.1$, which corresponds to $k_M=72$ in this data (Table \ref{tbl: perm_k72}). Next, we fixed $k_M = 14$ and calculated the corresponding $z$-threshold $z=3.7$ (Table \ref{tbl: perm_k14}). Non-significant supra-threshold clusters were not displayed. The TDP bounds for relevant overlapping anatomical regions are also displayed. 

TDP was calculated both using heuristic algorithms and using the lower bound of Theorem \ref{thm final}. Our heuristic algorithms were run for several hours on a cluster to produce these results, and we believe that these results are sufficiently close to the true minimum. Shorter running times of 20--60 seconds would give TDP results up to only 5\% higher than the reported values. Comparing the heuristic results and the lower bound, the lower bound was closest to the heuristic solution for large clusters and small $k_M$, as expected from the theory.

Comparing the $z=3.1$ and $k_M=14$ settings, the results clearly show a trade-off between detection and TDP. The lower cluster extent threshold $k_M$, that corresponds to a higher $z$-threshold, returns smaller clusters with larger TDP, while the high $k_M$ results in larger clusters with smaller TDP. For anatomical regions it is not a priori clear whether larger TDP would be found with high or low values of $k_M$. In this data set, increased TDP bounds were perceived when $k_M$ was small, i.e.\ when the $z$-threshold was large. Corresponding anatomical regions of the clusters were identified using the Harvard-Oxford cortical structural atlas and MNI structural atlas as available in FSL \citep{Jenkinson2012}.

\begin{table}[!ht]
\caption{Results for supra-threshold clusters, defined by the cluster-forming $z$-threshold of $Z > 3.1$ and the resulting minimal cluster extent threshold $k_M = 72$ based on permutation. The results from the heuristic algorithms are indicated by TDP, the lower bound of Theorem \ref{thm final} by LB. \label{tbl: perm_k72}}
\begin{tabular}{@{\extracolsep{\fill}} c rcc r rrcc cccc}
\toprule
\multicolumn{4}{c}{Cluster} & \multicolumn{5}{c}{Anatomical region} & \multicolumn{3}{c}{Location} & \\
\cmidrule(r){1-4} \cmidrule(lr){5-9} \cmidrule(l){10-12}
ID & size & TDP & LB & Region & size & overlap & TDP & LB & $x$ & $y$ & $z$ & $Z_\text{max}$ \\
\midrule
1 & 8870 & 0.368 & 0.265 & MFG & 18250 & 4049 & 0.082 & 0.061 & 44 & 72 & 60 & 8.87 \\
& & & & FP & 33571 & 2021 & 0.020 & 0.013 & & & & \\
& & & & IC & 6591 & 564 & 0.025 & 0.016 & & & & \\
2 & 8526 & 0.402 & 0.307 & sLOC & 27121 & 5142 & 0.069 & 0.049 & 19 & 42 & 61 & 9.51 \\
& & & & AG & 13689 & 4260 & 0.117 & 0.089 & & & & \\
& & & & pSMG & 14829 & 3804 & 0.097 & 0.074 & & & & \\
& & & & Precuneous & 18119 & 2491 & 0.051 & 0.037 & & & & \\
3 & 7956 & 0.332 & 0.201 & Cerebellum & 39724 & 6551 & 0.057 & 0.037 & 63 & 33 & 20 & 9.20 \\
4 & 6652 & 0.372 & 0.265 & MFG & 18250 & 4035 & 0.083 & 0.061 & 31 & 67 & 64 & 9.73 \\
& & & & FP & 33571 & 2587 & 0.026 & 0.018 & & & & \\
& & & & IC & 6591 & 589 & 0.026 & 0.017 & & & & \\
5 & 350 & 0.191 & 0.037 & pMTG & 11420 & 310 & 0.006 & 0.001 & 15 & 46 & 28 & 5.18 \\
& & & & tMTG & 9735 & 271 & 0.005 & 0.000 & & & & \\
6 & 100 & 0.140 & 0.010 & Cerebellum & 39724 & 100 & 0.000 & 0.000 & 49 & 35 & 10 & 6.56 \\
\cmidrule(lr){1-13}
Total & 32454 & 0.367 & 0.257 & MFG & 18250 & 8084 & 0.165 & 0.122 & & & & \\
& & & & Cerebellum & 39724 & 6651 & 0.058 & 0.037 & & & & \\
& & & & sLOC & 27121 & 5142 & 0.069 & 0.049 & & & & \\
& & & & FP & 33571 & 4608 & 0.046 & 0.031 & & & & \\
& & & & AG & 13689 & 4260 & 0.117 & 0.089 & & & & \\
& & & & pSMG & 14829 & 3804 & 0.097 & 0.074 & & & & \\
& & & & Precuneous & 18119 & 2491 & 0.051 & 0.037 & & & & \\
& & & & IC & 6591 & 1153 & 0.051 & 0.033 & & & & \\
& & & & pMTG & 11420 & 310 & 0.006 & 0.001 & & & & \\
& & & & tMTG & 9735 & 271 & 0.005 & 0.000 & & & & \\
\bottomrule
\end{tabular}
\end{table}

\begin{table}[!ht]
\caption{Results for supra-threshold clusters, defined by cluster extent threshold $k_M = 14$ and the resulting cluster-forming $z$-threshold of $Z > 3.7$, based on permutation. The results from the heuristic algorithms are indicated by TDP, the lower bound of Theorem \ref{thm final} by LB.}  \label{tbl: perm_k14}
\begin{tabular}{@{\extracolsep{\fill}} c rcc r rrcc cccc}
\toprule
\multicolumn{4}{c}{Cluster} & \multicolumn{5}{c}{Anatomical region} & \multicolumn{3}{c}{Location} & \\
\cmidrule(r){1-4} \cmidrule(lr){5-9} \cmidrule(l){10-12}
ID & size & TDP & LB & Region & size & overlap & TDP & LB & $x$ & $y$ & $z$ & $Z_\text{max}$ \\
\midrule
1 & 7231 & 0.606 & 0.532 & sLOC & 27121 & 4293 & 0.091 & 0.078 & 19 & 42 & 61 & 9.51 \\
& & & & Precuneous & 18119 & 2123 & 0.067 & 0.058 & & & & \\
2 & 6899 & 0.577 & 0.487 & MFG & 18250 & 3224 & 0.102 & 0.087 & 44 & 72 & 60 & 8.87 \\
& & & & SFG & 18946 & 2880 & 0.085 & 0.073 & & & & \\
& & & & PCG & 9245 & 1558 & 0.096 & 0.084 & & & & \\
& & & & IC & 6591 & 494 & 0.040 & 0.034 & & & & \\
3 & 5345 & 0.546 & 0.438 & Cerebellum & 39724 & 4840 & 0.067 & 0.054 & 63 & 33 & 20 & 9.20 \\
4 & 5143 & 0.575 & 0.487 & MFG & 18250 & 3285 & 0.104 & 0.089 & 31 & 67 & 64 & 9.73 \\
& & & & FP & 33571 & 1893 & 0.031 & 0.026 & & & & \\
& & & & SFG & 18946 & 1745 & 0.052 & 0.043 & & & & \\
5 & 202	 & 0.391 & 0.158 & OP & 15486 & 156 & 0.004 & 0.001 & 39 & 22 & 36 & 5.72 \\
& & & & ICC & 7134 & 110 & 0.006 & 0.003 & & & & \\
6 & 128	 & 0.375 & 0.148 & pMTG & 11420 & 128 & 0.004 & 0.002 & 15 & 46 & 28 & 5.18 \\
7 & 66 & 0.379 & 0.182 & Cerebellum & 39724 & 66 & 0.001 & 0.000 & 49 & 35 & 10 & 6.56 \\
8 & 61 & 0.361 & 0.115 & FP & 33571 & 61 & 0.001 & 0.000 & 31 & 86 & 29 & 5.77 \\
9 & 56 & 0.321 & 0.143 & FP & 33571 & 56 & 0.001 & 0.000 & 57 & 88 & 29 & 5.16 \\
10 & 39 & 0.308 & 0.103 & OP & 15486 & 39 & 0.001 & 0.000 & 51 & 15 & 42 & 5.35 \\
11 & 22 & 0.182 & 0.045 & Thalamus & 4602 & 17 & 0.000 & 0.000 & 43 & 53 & 43 & 4.55 \\
12 & 21 & 0.095 & 0.048 & Cerebellum & 39724 & 21 & 0.000 & 0.000 & 42 & 36 & 10 & 4.85 \\
\cmidrule(lr){1-13}
Total & 25213 & 0.573 & 0.482 & MFG & 18250 & 6509 & 0.206 & 0.176 & & & & \\
& & & & Cerebellum & 39724 & 4927 & 0.068 & 0.054 & & & & \\
& & & & SFG & 18946 & 4625 & 0.137 & 0.116 & & & & \\
& & & & sLOC & 27121 & 4293 & 0.091 & 0.078 & & & & \\
& & & & Precuneous & 18119 & 2123 & 0.067 & 0.058 & & & & \\
& & & & FP & 33571 & 2010 & 0.032 & 0.026 & & & & \\
& & & & PCG & 9245 & 1558 & 0.096 & 0.084 & & & & \\
& & & & IC & 6591 & 494 & 0.040 & 0.034 & & & & \\
& & & & OP & 15486 & 195 & 0.004 & 0.001 & & & & \\
& & & & pMTG & 11420 & 128 & 0.004 & 0.002 & & & & \\
& & & & ICC & 7134 & 110 & 0.006 & 0.003 & & & & \\
& & & & Thalamus & 4602 & 17 & 0.000 & 0.000 & & & & \\
\bottomrule
\end{tabular}
\end{table}

\section{Application: Neurovault} \label{sec neurovault}
Next, we applied the new algorithm to a selection of 818 datasets from the Neurovault database (neurovault.org; \citet{Gorgolewski2015}). The Neurovault database consists of unthresholded maps from neuroimaging studies. We selected 818 representative functional MRI datasets containing group-level statistics maps. 
For the calculation of clusters we used two settings: a standard $z$-threshold of $z=3.1$, and a $k$-threshold of $k_M=14$. The corresponding $k_M$- and $z$-thresholds, respectively, were estimated using Gaussian Random Field Theory \citep{Forman1995}. As residual data were unavailable, we estimated smoothness of the random field on the $z$-statistics image. The $z=3.1$ setting produced values of $k_M$ ranging from 71 to 507 (1st and 9th decile). Details of the selected images and estimation procedures can be found in the Supplemental Information, Section D. 

For each dataset we estimated the TDP of each supra-threshold cluster obtained using $z=3.1$ and $k_M=14$. We then calculated for each TDP value how many supra-threshold voxels with at least that TDP were significant on average across all datasets. This allows us to visualize the relationship between the size of the clusters detected and the TDP of those clusters for different methods. We plot the theoretical lower-bound of both methods (according to Theorem \ref{thm final}), and the solution as estimated using the heuristic methods. For reference we also calculated the number of voxels above the Gaussian random field voxelwise threshold (equivalent to a $k_M=0$ setting). 

\begin{figure}[!ht]
\includegraphics[width=\textwidth]{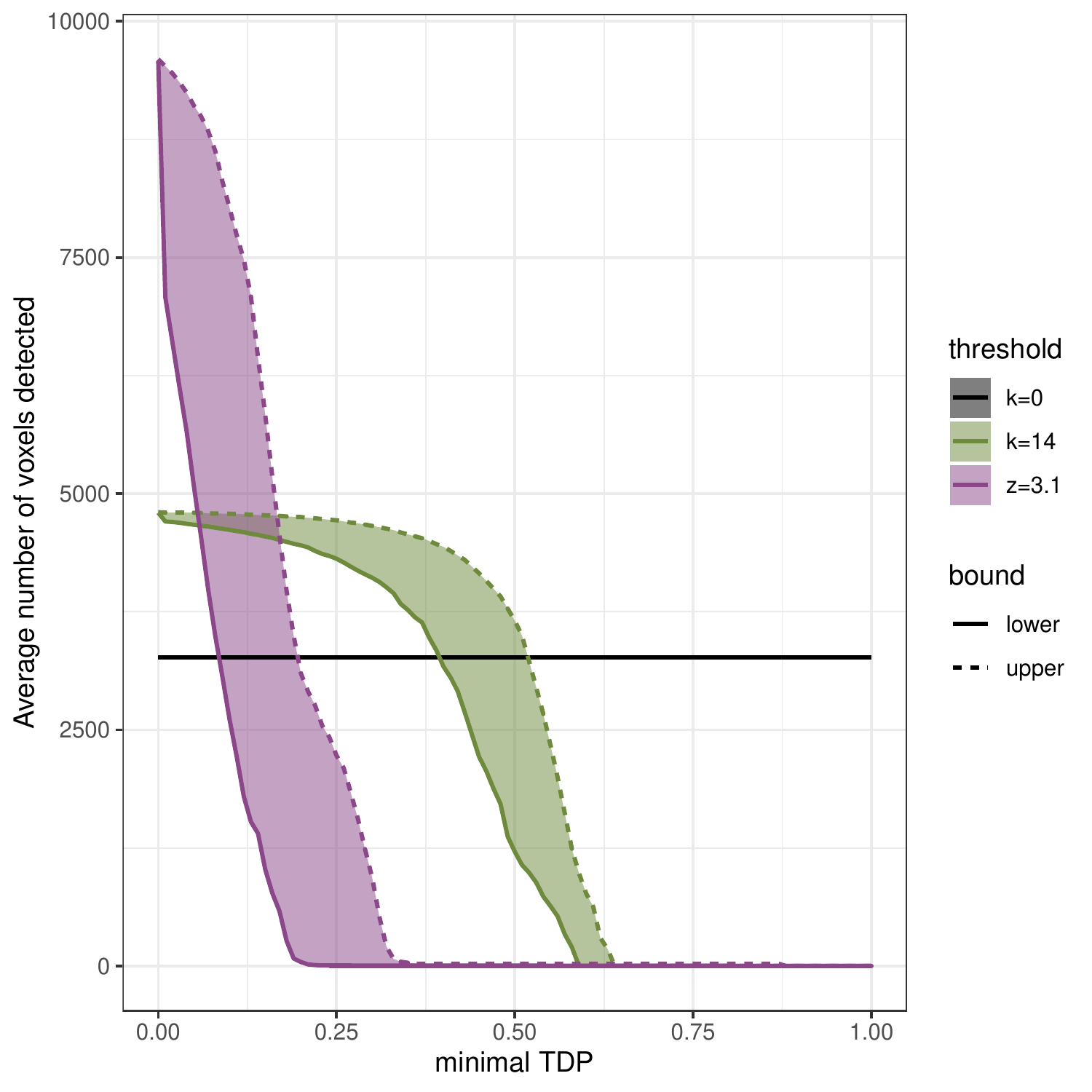}
\caption{Relation between TDP (x-axis) and the average number of voxels detected for three thresholds: $z=3.1$ (purple), $k_M=14$ (green), and $k_M=0$ (black). The results for lower-bound (solid line) and upper-bound (dashed line) are shown based on the  heuristic algorithm.} \label{fig tdp vs voxels} 
\end{figure}

Figure \ref{fig tdp vs voxels} shows the results of the analysis across all datasets. As can be seen the $z=3.1$ setting (purple) leads to larger cluster sizes but with low TDP's. For $k_M=14$ (green), the size of the clusters with low TDP's is smaller, but there are more clusters with a more reasonable (albeit still relatively small) TDP. Both methods detect larger regions than voxelwise inference ($k_M=0$, black line) at low TDP thresholds, but smaller regions at high TDP. The figure shows a clear trade-off between detection and TDP: at low $k_M$ settings, small regions are detected with large TDP; with high $k_M$, larger regions are detected, but TDP is (much) lower.

We note that the estimation of the smoothness using the $z$-statistics rather than the residuals tends to overestimate the smoothness if there is much signal. As a result, it is likely that we have overestimated values of $k_M$ when $z=3.1$ and overestimated $z$ when $k_M=14$. The TDP results in Figure \ref{fig tdp vs voxels} are therefore likely an underestimate of what would be found if the full datasets would have been available.

\section{Discussion}

We have presented a uniform improvement of classic cluster inference that allows much more meaningful and informative inference to be obtained from that method. In the first place, the new method allows inference on anatomical regions of interest and data-driven supra-threshold clusters within the same analysis. Moreover, regions of interest do not have to be specified before seeing the data. Secondly, rather than (only) a $p$-value, the new method provides a true discovery proportion (TDP) for every brain region. Quantifying the spatial extent of activation within the brain region, the TDP is much more informative than the $p$-value, which only quantifies the evidence for the presence of any signal at all. TDP is also less prone to overinterpretation than the $p$-value. In the Neurovault analysis we have found many examples of brain regions with a seemingly impressive $p < 0.001$ that had unremarkable TDPs of 20\% or less. We recommend that TDP is always reported with (or even instead of) the $p$-value in fMRI cluster inference.


Despite making these additional inferences, error control remains as strict as with classic cluster inference: with probability at least $1-\alpha$ no regions get an estimated TDP that is larger than the true value. To guarantee this error control, the method does not require any additional model assumptions. It can assume either that the $z$-scores of inactive voxels follow a Gaussian random field or that they are invariant under permutations. 

Inference on brain regions in terms of TDP can be said to solve the Spatial Specificity Paradox \citep{Woo2014}, but by doing so it makes the same paradox painfully visible. At the usual setting with a cluster-forming threshold of $z=3.1$ most significant brain regions have a TDP less than 20--30\%. Our analyses have made it clear that there is a trade-off involved in choosing the cluster-forming threshold. Low thresholds result in many large clusters but with low TDP; higher thresholds have less detection power but much higher TDP. In the extreme, voxel-wise inference was shown to be a special case of cluster extent inference that always returns a TDP of 100\%. In order to obtain TDP substantially over a reasonably minimal threshold of 50\%, we recommend cluster thresholding with $k_M=14$ or less, resulting in much larger $z$-thresholds than usually recommended in the field \citep{Eklund2016}.  

Computationally, the calculation of the TDP involves solving a $k$-separator problem. We presented two solutions to this problem: the lower bound retains the error control guarantee but is conservative; the heuristic solution is more accurate, but at the cost of losing error control if the method does not fully converge. Together, the two algorithms can be used to bracket the TDP lower confidence bound. We recommend the heuristic solution in practice provided enough computing power is available.

Inference for neuroimaging in terms of TDP rather than $p$-values has been proposed by several authors \citep{Rosenblatt2018, Blanchard2020, Andreella2020, Vesely2021}. None of the proposed methods is expected to outperform any of the others uniformly \citep{Goeman2020}. A systematic and careful inventory should be performed to find out when to prefer which TDP methods with which tuning parameters. This large project is beyond the scope of this paper. In such a comparison, the method proposed in this paper will serve as an important benchmark, representing classic cluster analysis, which it is designed to be consistent with.

\section*{Acknowledgements}

Data were provided in part by the Human Connectome Project, WU-Minn Consortium (Principal Investigators: David Van Essen and Kamil Ugurbil; 1U54MH091657) funded by the 16 NIH Institutes and Centers that support the NIH Blueprint for Neuroscience Research; and by the McDonnell Center for Systems Neuroscience at Washington University. This research was supported by Nederlandse Organisatie voor Wetenschappelijk Onderzoek, Grant Number: 639.072.412.

\appendix

\section{Proofs of Theorems and Lemmas}

\subsection{Proof of Lemma \ref{lem use h}}

\begin{replemma}{lem use h}
For every $V \subseteq M$, we have $\underline{\boldsymbol\psi}_V \leq{\boldsymbol\psi}_V$.
\end{replemma}

\begin{proof}
\begin{eqnarray*}
{\boldsymbol\psi}_V &=& \min\{{\boldsymbol\phi}_W\colon V \subseteq W \subseteq M\} \\
&=& \min\big\{\mathds{1}\{\chi_{V \cap \mathbf{Z}} > k_V\}\colon V \subseteq W \subseteq M\big\} \\
&\geq& \mathds{1}\big\{\min\{\chi_{V \cap \mathbf{Z}}\colon V \subseteq W \subseteq M\}
> \max\{k_V\colon V \subseteq W \subseteq M\}\big\} \\
&=& \mathds{1}\{\chi_{V \cap \mathbf{Z}} > k_M\}  \\
&=& \underline{\boldsymbol\psi}_V.
\end{eqnarray*}
\end{proof}

\subsection{Proof of Theorem \ref{thm first TDP bound}}

\begin{reptheorem}{thm first TDP bound}
Let $\check{\mathbf{a}}(V)= s_{k_M}(V \cap \mathbf{Z}),$
where $s_k(V) = \min\{|R|\colon \chi_{V\setminus R} \leq k\}$. Then, for all $\mathrm{P} \in\Omega$,
\[
\mathrm{P}(\textrm{$\check {\mathbf{a}}(V) \leq a_\mathrm{P}(V)$ for all $V \subseteq M$}) \geq 1-\alpha.   
\]
\end{reptheorem}

\begin{proof}
The minimum in (\ref{def aV}) is achieved when $R \cap \mathbf{Z} = \emptyset$, so we can rewrite
\[
\check{\mathbf{a}}(V) = \min\{|R|\colon R\subseteq V, \chi_{\mathbf{Z} \cap (V\setminus R)} \leq k_M\}. 
\] 
Setting $W = V\setminus R$, we obtain
\begin{eqnarray*}
\check{\mathbf{a}}(V) &=& \min\{|V\setminus W|\colon W \subseteq V, \chi_{\mathbf{Z} \cap W} \leq k_M\}\\
&=& \min\{|V\setminus W|\colon W \subseteq V, \underline{\boldsymbol{\psi}}_{W} = 0\} \\
&\leq& \min\{|V\setminus W|\colon W \subseteq V, \boldsymbol{\psi}_{W} = 0\} \\
&=& \mathbf{a}(V), 
\end{eqnarray*}
where the inequality uses Lemma \ref{lem use h}. The result now follows from (\ref{eq simultaneous tilde}). 
\end{proof}

\subsection{Proof of Theorem \ref{thm uniform improvement}}

\begin{reptheorem}{thm uniform improvement}
If $\mathbf{C} \subseteq (\mathbf{Z} \cap M)$, with $|\mathbf{C}|>k_M$, is a cluster, then $\check{\mathbf{a}}(\mathbf{C})>0$.
\end{reptheorem}

\begin{proof}
For $R = \emptyset$, we have 
\[
\chi_{(\mathbf{Z} \cap \mathbf{V})\setminus R} = \chi_{\mathbf{Z} \cap \mathbf{V}} = \chi_{\mathbf{V}} = |\mathbf{V}| > k_M,
\]
so the minimum in (\ref{def aV}) is attained when $|R|>0$.
\end{proof}

\subsection{Proof of Lemma \ref{lem decompose clusters}}

We first state, perhaps superfluously, that smaller voxel sets contain smaller clusters. 

\begin{lemma} \label{lem monotone chi}
If $V \subseteq W$, then $\chi_V \leq \chi_W$.
\end{lemma}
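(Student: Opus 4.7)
The plan is to argue directly from the definition $\chi_V = \max\{|C|\colon C \subseteq V \text{ is a cluster}\}$. Since the maximum is always attained (the empty set is a cluster, and $V$ is finite), I would pick a cluster $C^\star \subseteq V$ realizing the maximum, so that $|C^\star| = \chi_V$.

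Next I would observe the key point: being a cluster is an intrinsic property of a voxel set, namely that its induced subgraph under the neighbor relation is connected. In particular, whether $C^\star$ is a cluster depends only on $C^\star$ and the neighbor relation on $\mathbb{Z}^d$, not on any ambient set containing it. Hence, since $V \subseteq W$ gives $C^\star \subseteq W$, the set $C^\star$ is still a cluster that is a subset of $W$, and therefore its size is one of the values over which the maximum in $\chi_W$ is taken. This yields $\chi_W \geq |C^\star| = \chi_V$.

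The main (and only) obstacle is really a notational one: to be explicit that clusterhood is preserved under enlarging the ambient set, so that the witness $C^\star$ transfers unchanged from the definition of $\chi_V$ to the definition of $\chi_W$. Once this is observed, the inequality is immediate and the lemma follows in essentially one line.
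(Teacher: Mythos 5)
Your proposal is correct and is essentially identical to the paper's own proof: both take a maximal cluster $C \subseteq V$ witnessing $\chi_V$, note that it remains a cluster contained in $W$, and conclude $\chi_V = |C| \leq \chi_W$. Nothing further is needed.
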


\begin{proof}
Let $C \subseteq V$ be the largest cluster in $V$, then $C \subseteq V \subseteq W$ is a cluster in $W$. We have $$\chi_V = |C| \leq \chi_W.$$ 
\end{proof}

\begin{replemma}{lem decompose clusters}
If $V = C_1 \cup \cdots \cup C_n$, where $C_1\ldots, C_n$ are disconnected clusters, then
\[
s_k(V) = \sum_{i=1}^n s_k(C_i).
\]
\end{replemma}

\begin{proof}
Suppose $R$ is an optimal $k$-separator of $V$, so that $\chi_{V\setminus R} \leq k$ and $s_k(V) = |R|$. For $i=1\ldots,n$, define $R_i = R \cap C_i$. Since $C_i\setminus R_i \subseteq V \setminus R$, we have, by Lemma \ref{lem monotone chi},
\[
\chi_{V_i\setminus R_i} \leq \chi_{V\setminus R} \leq k,
\]
so $R_i$ separates $C_i$. Since $R_1, \ldots, R_n$ are disjoint, we have
\[
s_k(V) = |R| = \sum_{i=1}^n |R_i| \geq \sum_{i=1}^n s_k(C_i).
\]

Vice versa, suppose for $i=1,\ldots,n$ that $R_i$ is an optimal $k$-separator of $C_i$, so that $\chi_{C_i\setminus R_i} \leq k$ and $s_k(C_i) = |R_i|$. Define $R = R_1 \cup \cdots \cup R_n$. Let $C$ be any cluster in $V \setminus R$. Since $C_1, \ldots, C_n$ are disconnected, we must have $C \subseteq C_i$ for some $1\leq i \leq n$. We have
\[
\chi_{V\setminus R} \leq \max_{1\leq i \leq n} \chi_{C_i\setminus R_i} \leq k,
\]
so $R$ separates $V$. Since $R_1, \ldots, R_n$ are disjoint, we have
\[
\sum_{i=1}^n s_k(C_i) = \sum_{i=1}^n |R_i| = |R| \geq s_k(V).
\]
\end{proof}

\subsection{Proof of Lemma \ref{lem T to R}}

We start with a lemma that is the reason the positive neighbors definition is so useful: two voxels are neighbors if and only if they have a positive neighbor in common.

\begin{lemma} \label{lem positive neighbors}
Voxels $v,w \in \mathbb{Z}^d$ are neighbors if and only if $\{v\}^+ \cap \{w\}^+ \neq \emptyset$.
\end{lemma}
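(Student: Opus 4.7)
The plan is to unpack all three definitions (neighbor, positive neighbor, and $\{v\}^+$) coordinatewise and reduce both directions to a one-line arithmetic fact in $\{0,1\}$: the difference of two elements of $\{0,1\}$ lies in $\{-1,0,1\}$, and conversely any element of $\{-1,0,1\}$ can be written as such a difference. Working coordinate by coordinate is natural here because all the sets involved are Cartesian products.

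For the forward direction, I would assume $v-w\in\{-1,0,1\}^d$ and exhibit an explicit element of $\{v\}^+\cap\{w\}^+$, namely the coordinatewise maximum $u_i=\max(v_i,w_i)$. For each $i$, since $|v_i-w_i|\le 1$, we have $0\le u_i-v_i\le 1$ and $0\le u_i-w_i\le 1$, so both $u-v$ and $u-w$ lie in $\{0,1\}^d$. Hence $u\in\{v\}^+\cap\{w\}^+$.

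For the reverse direction, I would take any $u\in\{v\}^+\cap\{w\}^+$, so that $u-v\in\{0,1\}^d$ and $u-w\in\{0,1\}^d$. Then coordinatewise $v_i-w_i=(u_i-w_i)-(u_i-v_i)$ is a difference of two numbers in $\{0,1\}$ and therefore lies in $\{-1,0,1\}$, giving $v-w\in\{-1,0,1\}^d$, i.e., $v$ and $w$ are neighbors.

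I do not expect any real obstacle; the only thing to be careful about is keeping the coordinatewise arithmetic tidy and making sure the witness $u$ is shown to lie in both sets simultaneously rather than just in each separately. The argument is short and purely combinatorial, which is why the authors relegated it to a supporting lemma.
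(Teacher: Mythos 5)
Your proof is correct and is essentially the paper's argument: the paper's witness $u=w+(v-w)_+$ is exactly the coordinatewise maximum you use, and the reverse direction is the same coordinatewise subtraction of two elements of $\{0,1\}$. No issues.
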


\begin{proof}
Suppose $v$ and $w$ are neighbors. Consider $u = w + (v-w)_+$. Note that $x + (-x)_+ = x_+$. Then
\[
u_i - v_i = (w_i - v_i) + (v_i-w_i)_+ = (w_i-v_i)_+ \in \{0,1\},
\]
so $u \in \{v\}^+$, and $u_i - w_i = (v_i-w_i)_+ \in \{0,1\},$ so $u \in \{w\}^+$. Therefore $u\in \{v\}^+ \cap \{w\}^+ \neq \emptyset$. 

Next, suppose that $u \in \{v\}^+ \cap \{w\}^+$. Then $u = v+e$ and $u=w+h$, with $e,h \in \{0,1\}^d$. We have $v_i-w_i = h_i-e_i \in \{-1,0,1\},$
so $v$ and $w$ are neighbors.
\end{proof}

The next lemma translates the previous lemma to voxel sets: two voxel sets are disconnected if and only if their cover is disjoint.

\begin{lemma} \label{lem disjoint disconnected}
Voxel sets $V,W \subseteq \mathbb{Z}^d$ are disconnected if and only if $V^+$ and $W^+$ are disjoint.
\end{lemma}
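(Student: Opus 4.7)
The plan is to reduce this lemma to the voxel-level statement already proved as Lemma \ref{lem positive neighbors}, using only the set-theoretic identity $V^+ = \bigcup_{v\in V}\{v\}^+$ (and likewise for $W$). There is no real combinatorial obstacle; the argument is essentially a distribution of union over intersection combined with the previous lemma applied pointwise. I will present it as a chain of equivalences.

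First I would unfold the definition of disconnectedness: by the definition given in Section 3, $V$ and $W$ are disconnected precisely when there exist no $v\in V$, $w\in W$ that are neighbors. Equivalently, for every $v\in V$ and every $w\in W$, the voxels $v$ and $w$ fail to be neighbors. Applying Lemma \ref{lem positive neighbors} to each such pair, this is equivalent to saying $\{v\}^+\cap\{w\}^+=\emptyset$ for all $v\in V$ and $w\in W$.

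Next I would pass from the pointwise statement to the set-level statement. Since
\[
V^+\cap W^+ \;=\; \Big(\bigcup_{v\in V}\{v\}^+\Big)\cap\Big(\bigcup_{w\in W}\{w\}^+\Big) \;=\; \bigcup_{v\in V}\bigcup_{w\in W}\big(\{v\}^+\cap\{w\}^+\big),
\]
the union on the right is empty if and only if every term $\{v\}^+\cap\{w\}^+$ is empty. Combining this with the equivalence established in the previous step yields that $V$ and $W$ are disconnected if and only if $V^+\cap W^+=\emptyset$, which is exactly the claim.

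The only subtle point, and the closest thing to an obstacle, is making sure the quantifier handling is correct: Lemma \ref{lem positive neighbors} is a statement about individual pairs $(v,w)$, whereas the conclusion concerns whole sets, so one must check that the ``for all pairs'' form of the pointwise equivalence lifts cleanly to the union. The distributive identity above handles this in one line, so the proof reduces to a short citation of Lemma \ref{lem positive neighbors} together with that identity.
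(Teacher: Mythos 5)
Your proof is correct and follows essentially the same route as the paper: both reduce the claim to Lemma \ref{lem positive neighbors} via the identity $V^+=\bigcup_{v\in V}\{v\}^+$, the only difference being that you phrase it as a chain of equivalences with a distributive-law step while the paper argues the two directions by contraposition with explicit witnesses.
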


\begin{proof}
Suppose $V$ and $W$ are not separated. Then there exist $v \in V$ and $w\in W$ that are neighbors. By Lemma \ref{lem positive neighbors} there exists $u \in \{v\}^+ \cap \{w\}^+$. By definition of $V^+$ and $W^+$, we have $u \in V^+ \cap W^+$, so $V^+$ and $W^+$ are not disjoint. 

Suppose $V^+$ and $W^+$ are not disjoint. Then $u \in V^+ \cap W^+$ exists. By definition of of $V^+$ and $W^+$ there exist $v \in V$ and $w \in W$ such that $u \in \{v\}^+\cap \{w\}^+$. By Lemma \ref{lem positive neighbors}, $v$ and $w$ are neighbors, so $V$ and $W$ are not disconnected. 
\end{proof}

The cover and interior operations are not each other's inverse: the cover of the interior may be a smaller voxel set.

\begin{lemma} \label{lem -+}
$(V^-)^+ \subseteq V$.
\end{lemma}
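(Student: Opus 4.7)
The plan is to unfold the two set-theoretic definitions and observe that the conclusion is essentially immediate. I would take a generic element $u \in (V^-)^+$ and trace back to show $u \in V$.

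First I would use the definition of the cover: $u \in (V^-)^+$ means that $u = v + e$ for some $v \in V^-$ and some $e \in \{0,1\}^d$. Next I would invoke the definition of the interior: since $v \in V^-$, every positive neighbor of $v$ lies in $V$, i.e.\ $v + e' \in V$ for all $e' \in \{0,1\}^d$. Applying this with the specific $e'=e$ gives $v+e \in V$, hence $u \in V$.

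Since $u$ was arbitrary, we conclude $(V^-)^+ \subseteq V$. There is no real obstacle here; the statement is a direct consequence of the fact that the cover and interior operations are defined by quantifying over the same set $\{0,1\}^d$ of offsets, so the composition is trivially contained in $V$. The reverse inclusion need not hold (points of $V$ whose positive neighbors are not all in $V$ are shaved off by $(\cdot)^-$ and cannot be recovered by $(\cdot)^+$), which is exactly why the lemma is stated as a one-sided containment rather than an equality.
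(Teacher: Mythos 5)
Your proof is correct and follows essentially the same argument as the paper: unfold $u \in (V^-)^+$ as $u = v+e$ with $v \in V^-$, then use the definition of the interior to conclude $v+e \in V$.
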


\begin{proof}
Choose $v \in (V^-)^+$. By definition of the cover there must be a $w \in V^-$ such that $v \in \{w\}^+$. By definition of the interior, every positive neighbor of every $w \in V^-$ is in $V$. Therefore $v \in V$.
\end{proof}

Now we come to prof of the lemma itself.

\begin{replemma}{lem T to R}
For every tiling $T_1,\ldots,T_n$ of $V^+$ there exists a $k$-separator $R$ of $V$ such that
\[
|R| = t_k(T_1,\ldots,T_n).
\]
\end{replemma}

\begin{proof}
For $i=1,\ldots,n$, let $R_i = T_i^0 \cap V$ and let $R_i'$ be any subset of $T_i^- \cap V$ with $|R_i'| = (|T_i^- \cap V|-k)_+$. Then $R = R_1 \cup \cdots \cup R_n \cup
R_1' \cup \cdots \cup R_n'$ has $|R| = t_k(T_1,\ldots, T_n)$. 

We show that $R$ is a $k$-separator of $V$. For $i=1,\ldots,n$, let $C_i = (T_i^- \cap V) \setminus R_i'$. Then $C_i \cap R = \emptyset$ and $V = R \cup C_1 \cup \cdots \cup C_n$. Moreover, $|C_i| \leq k$ by definition of $R_i'$.
Since $C_i \subseteq T_i^-$, we have $C_i^+ \subseteq T_i$ by Lemma \ref{lem -+}. Therefore $C_1^+, \ldots, C_n^+$ are disjoint. By Lemma \ref{lem disjoint disconnected}, $C_1, \ldots, C_n$ are disconnected. It follows that 
\[
\chi_{V\setminus R} \leq \max_{1\leq i\leq n} |C_i| \leq k,
\]
so $R$ is a $k$-separator of $V$.
\end{proof}

\subsection{Proof of Lemma \ref{lem R to T}}

\begin{replemma}{lem R to T}
For every $k$-separator $R$ of $V$ there exists a tiling $T_1,\ldots,T_n$ of $V^+$ such that $T_1, \ldots, T_n$ are clusters, and
\[
|R| \geq t_k(T_1,\ldots,T_n).
\]
\end{replemma}

\begin{proof}
We write $V \setminus R = C_1 \cup \cdots \cup C_n$ with $C_1, \ldots, C_n$ non-empty disconnected clusters. Since $R$ is a $k$-separator of $V$, we have $|C_i|\leq k$ for $i=1,\ldots,n$. For $i=1,\ldots,n$ call $T_i = C_i^+ \subseteq V^+$. These are clusters since $C_1,\ldots,C_n$ are. Write 
\[
V^+ \setminus (T_1 \cup \cdots \cup T_n) = T_{n+1} \cup \cdots \cup T_{n+m},
\]
with $T_{n+1},\ldots,T_{n+m}$ disjoint clusters, so that $V^+ = T_1 \cup \cdots \cup T_{n+m}$. Call $C_{n+1} = \ldots = C_{n+m} = \emptyset$.

We will show that $t_k(T_1, \ldots, T_{n+m}) \leq |R|$. Since $C_i \subset T_i$ for all $1\leq i\leq n+m$, $v \in T_i^- \cap V \subseteq T_i$ cannot be in $C_j$ for $j\neq i$, so $T_i^- \cap V \subseteq C_i \cup R$. Since $C_i$ and $R$ are disjoint, $(T_i^- \cap V) \setminus R = C_i$. Therefore,  
\[
|T_i^- \cap R| \geq |(T_i^- \cap V) \cap R| = |T_i^- \cap V| - |C_i| \geq (|T_i^- \cap V| - k)_+.
\] 
If $v \in C_i$, then $\{v\}^+ \subseteq C_i^+ = T_i$, so $v \notin T_i^0$. It follows that $T_i^0 \cap V = T_i^0 \cap R$ for $i=1,\ldots, n$. The same holds by definition for $i=n+1,\ldots, n+m$. We have 
\begin{eqnarray*}
|R| &=& \sum_{i=1}^{n+m} |T_i^0 \cap R| + \sum_{i=1}^{n+m} |T_i^- \cap R| \\
&\geq& \sum_{i=1}^{n+m} |T_i^0 \cap V|  + \sum_{i=1}^{n+m} (|T_i^- \cap V| - k)_+ \\
&=& t_k(T_1, \ldots, T_{n+m}).
\end{eqnarray*}
\end{proof}

\subsection{Proof of Theorem \ref{thm tiling}}

\begin{reptheorem}{thm tiling}
We have
\[
s_k(V) = \min\{t_k(T_1,\ldots,T_n)\colon \textrm{\ $T_1,\ldots,T_n$  is a tiling of $V^+$}\}.
\]
The minimum is attained for a tiling for which $T_1,\ldots,T_n$ are all clusters.
\end{reptheorem}

\begin{proof}
Suppose $R$ is a $k$-separator of $V$ with $|R| = s_k(V)$. Then by Lemma \ref{lem R to T} a tiling $T_1,\ldots,T_n$ exists such that $t_k(T_1,\ldots, T_n) \leq |R| = s_k(V)$. It follows that
\[
s_k(V) \geq \min\{t_k(T_1,\ldots,T_n)\colon \textrm{\ $T_1,\ldots,T_n$  is a tiling of $V^+$}\}.
\]

Now suppose that $T_1, \ldots, T_n$ minimizes $t_k(T_1,\ldots, T_n)$. By Lemma \ref{lem T to R}, a $k$-separator exists such that $|R| = t_k(T_1,\ldots, T_n)$. It follows that 
\[
s_k(V) \leq \min\{t_k(T_1,\ldots,T_n)\colon \textrm{\ $T_1,\ldots,T_n$  is a tiling of $V^+$}\}.
\]
Combining the two inequalities, the result follows. By Lemma \ref{lem R to T} the optimal tiling can be taken as one that consists of clusters.
\end{proof}

\subsection{Proof of Theorem \ref{thm shortcut 2}}

Though Theorem \ref{thm tiling} allows tilings with a tile interior $>k$, we can always find an alternative solution that does no have such tiles.

\begin{lemma} \label{lem tidy}
Suppose $T_1, \ldots, T_n$ is a tiling of $V^+$. Then there exists a tiling $T_1', \ldots, T_{n'}'$ with $t_k(T_1, \ldots, T_n) = t_k(T_1', \ldots, T_{n'}')$ and $|T^-_i| \leq k$ for $i=1,\ldots, n'$.  
\end{lemma}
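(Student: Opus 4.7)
The plan is to refine the given tiling iteratively: whenever some tile $T$ satisfies $|T^-|>k$, I will replace it by two (or more) smaller tiles whose union is $T$, in such a way that the total value of $t_k$ does not change. Since each refinement strictly increases the number of tiles, and the number of tiles is bounded by $|V^+|$, the procedure terminates with every tile obeying $|T_i^-|\leq k$, proving the lemma.

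The basic refinement operation is a one-voxel peel. Given $T$ with $|T^-|>k$, I choose a voxel $v\in T^-$ and split $T$ into $\{v\}$ and $T\setminus\{v\}$. Because $v$ has $2^d\geq 2$ distinct positive neighbors in $\mathbb{Z}^d$, the singleton $\{v\}$ has $\{v\}^-=\emptyset$, so it contributes $[v\in V]$ to the first sum and $0$ to the second sum of $t_k$. On $T\setminus\{v\}$, the interior shrinks by exactly those voxels of $T^-$ whose positive neighborhood previously included $v$; writing $N_v^-=\{v-e:e\in\{0,1\}^d\}$ for the set of negative neighbors of $v$ and $\beta=|N_v^-\cap T^-\cap V|$, a short computation shows that the net change in $t_k$ induced by the split is
\[
\Delta \;=\; \beta \;-\; (|T^-\cap V|-k)_+ \;+\; (|T^-\cap V|-\beta-k)_+,
\]
which vanishes as soon as $|T^-\cap V|\geq k+\beta$.

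I therefore split into two cases, aiming to force $\Delta=0$. If $|T^-\cap V|>k$, I will choose $v\in T^-$ with $\beta\leq |T^-\cap V|-k$, using a counting argument: since $\sum_{v\in T^-}\beta_v\leq 2^d|T^-\cap V|$ (each $w\in T^-\cap V$ is counted at most $2^d$ times as a negative neighbor), the minimum $\beta$ over $v$ is small enough whenever $|T^-|$ is large—which is exactly the regime that forces refinement. If $|T^-\cap V|\leq k$ but $|T^-|>k$, the excess interior lives entirely in $T^-\setminus V\subseteq V^+\setminus V$; in this regime I pick $v\in T^-\setminus V$ with $\beta=0$, which can be found by taking a lex-maximal (or otherwise extremal) element of $T^-\setminus V$ whose negative neighbors all fall outside $T^-\cap V$.

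The main obstacle will be making the second case rigorous: guaranteeing the existence of an "isolated" exterior interior voxel with $\beta=0$ is the most delicate combinatorial step. If a single-voxel peel fails to yield $\beta=0$ for any choice of $v$, the fallback is to peel off a small cluster $D\subseteq T^-\setminus V$ instead of a single voxel and to redo the calculation of $\Delta$ for this enlarged peel; the relation between connectedness of a set and disjointness of its cover established by Lemma \ref{lem disjoint disconnected} ensures that such a cluster-valued peel can always be arranged so that the split preserves $t_k$, completing the induction.
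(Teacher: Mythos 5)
Your overall strategy---iteratively splitting single voxels off oversized tiles while preserving $t_k$---is the same as the paper's, and your expression for the change $\Delta$ caused by a one-voxel peel is correct. The first gap is in your Case 1. The averaging bound $\min_{v\in T^-}\beta_v\leq 2^d|T^-\cap V|/|T^-|$ does not deliver $\beta\leq|T^-\cap V|-k$: when $|T^-\cap V|=k+1$ you need a voxel with $\beta\leq 1$, but for a compact tile with $|T^-|=k+1$ the averaging bound only yields $\min_v\beta_v\leq 2^d$, and most voxels of such a tile really do have many negative neighbours inside $T^-$. The step that actually works---and is the crux of the paper's proof---is an extremal choice: take $v\in T^-$ minimizing $\sum_{i=1}^d v_i$. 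Any $w\neq v$ with $v\in\{w\}^+$ has $\sum_{i=1}^d w_i<\sum_{i=1}^d v_i$ and therefore cannot lie in $T^-$; hence $N_v^-\cap T^-=\{v\}$, so $\beta=\mathds{1}\{v\in V\}\leq 1$ and $\Delta=0$ whenever $|T^-\cap V|>k$. You invoke exactly this kind of extremal selection in your Case 2, but in Case 1, where it is what is needed, you replace it with the insufficient counting argument.

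The second gap is Case 2 itself, which you rightly flag as delicate but do not close; the fallback of peeling a small cluster $D\subseteq T^-\setminus V$ fails in general. Take $d=1$, $k=2$, $V=\{1,3\}$, so $V^+=\{1,2,3,4\}$, and the one-tile tiling $T=V^+$: then $T^-=\{1,2,3\}$, so $|T^-|=3>k$ while $|T^-\cap V|=2\leq k$ and $t_k(T)=0$. Peeling the unique voxel $2\in T^-\setminus V$ (as a singleton or as the cluster $D=\{2\}$) ejects $1$ from the interior of the remainder $\{1,3,4\}$ and raises $t_k$ to $1$; the only $t_k$-preserving refinement is to re-tile as $\{1,2\},\{3,4\}$, which is not a peel of a subset of $T^-\setminus V$ at all. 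A complete proof of the lemma as stated therefore needs a different construction in this regime. It is worth knowing how the paper handles this: it does not. Lemma \ref{lem tidy} is only applied in the proof of Theorem \ref{thm shortcut 2} after reducing to the case $(V^+)^-\subseteq V$, where Lemma \ref{lem no residue} gives $T^-\subseteq V$ for every tile, so $|T^-\cap V|=|T^-|$ and your Case 2 is vacuous; the paper's own written proof likewise only establishes preservation of $t_k$ in that situation.
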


\begin{proof}
Choose any $i$ such that $|T^-_i| > k$. We will construct $T'_i$ and $T'_{n+1}$ such that $T'_i \cup T'_{n+1} = T_i$, and $|(T'_i)^-| = |T^-_i|-1$ and $(T'_{n+1})^-= 0$. Repeatedly applying this construction for all tiles with $|T_i^-| > k$ will give us the tiling with the desired property since the newly constructed tiling has the same value of $t_k$, but the interior of tile $T_i$ is reduced in size by 1.

Choose any $v \in T_i^-$ that minimizes $\sum_{i=1}^d v_i$. Define $T'_i = T_i \setminus \{v\}$. Then all negative neighbors of $v$ are not in $(T'_i)^-$, so $$(T'_i)^- = T_i^- \setminus \{w \in T^+_i\colon v \in \{w\}^+\}.$$ Let $w \in \mathbb{Z}^d$ such that $v \in \{w^+\}$. Then either $w=v$ or $v=w+e$ with $e \in \{0,1\}^d$ and $\sum_{i=1}^d e_i \geq 1$. If $w \neq v$, then
$$
\sum_{i=1}^d w_i = \sum_{i=1}^d v_i -\sum_{i=1}^d e_i \leq \sum_{i=1}^d v_i -1,
$$
so $w \notin T_i^-$ since $v$ minimized $\sum_{i=1}^d v_i$ among $v\in T_i^-$. Therefore $\{w \in T^+_i\colon v \in \{w^\}+\} = \{v\}$ and $(T'_i)^- = T_i^- \setminus \{v\}$. So $|(T_i')^-| = |T_i^-| -1$. Defining $T_{n+1}' = \{v\}$, we have $(T_{n+1}')^- = 0$. This gives the required construction.
\end{proof}

If the interior of the cover of a voxel set $V$ is no larger than the original object, then the interior of all its tiles is in $V$.

\begin{lemma} \label{lem no residue}
If $(V^+)^- \subseteq V$, then for every $T \subseteq V^+$, we have $T^- \subseteq V$.
\end{lemma}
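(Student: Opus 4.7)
The plan is to unwind the definitions and use monotonicity of the interior operation. Concretely, I would take an arbitrary $v \in T^-$ and show $v \in V$ in two steps.

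First, I would observe that the interior operation $(\cdot)^-$ is monotone: if $T \subseteq V^+$, then $T^- \subseteq (V^+)^-$. This follows directly from the definition: if $v \in T^-$, then $v+e \in T$ for every $e \in \{0,1\}^d$; since $T \subseteq V^+$, we have $v+e \in V^+$ for every such $e$, which is precisely the condition for $v \in (V^+)^-$. Note that I need to check that $v$ itself lies in $V^+$, which is immediate from $v = v+0 \in T \subseteq V^+$.

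Second, I would invoke the hypothesis $(V^+)^- \subseteq V$ to conclude $v \in V$, so $T^- \subseteq V$.

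There is essentially no obstacle here; the lemma is a direct consequence of the monotonicity of $(\cdot)^-$ combined with the hypothesis. The only mild subtlety is being careful that the definition of $V^-$ requires $v \in V$ as well as $v+e \in V$ for all $e \in \{0,1\}^d$, but the case $e=0$ of the containment $T \subseteq V^+$ handles this automatically.
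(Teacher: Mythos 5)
Your proof is correct and is essentially the paper's own argument: the paper's one-line proof is $T \subseteq V^+ \Rightarrow T^- \subseteq (V^+)^- \subseteq V$, which is precisely the monotonicity-plus-hypothesis route you describe (you just spell out the monotonicity step explicitly).
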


\begin{proof}
Since $T \subseteq V^+$, we have $T^- \subseteq (V^+)^- \subseteq V$.
\end{proof}  

The next lemma is a special case of a property that holds in general for closed testing procedures \citep[][Lemma 3]{Goeman2020}. We prove it in context here. 

\begin{lemma} \label{lem coherent}
If $V,W \subseteq \mathbb{Z}^d$ are disjoint, then $s_k(V \cup W) \leq s_k(V) + |W|$.
\end{lemma}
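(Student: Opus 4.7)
The plan is to construct a $k$-separator of $V \cup W$ out of an optimal $k$-separator of $V$, by simply adding all of $W$ to it.

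First, I would take an optimal $k$-separator $R$ of $V$, so that $|R| = s_k(V)$ and $\chi_{V \setminus R} \leq k$. Without loss of generality I may assume $R \subseteq V$, since replacing $R$ with $R \cap V$ leaves $V \setminus R$ unchanged and cannot increase $|R|$. Then I would set $R' = R \cup W$ and check two things: (i) $R'$ is a $k$-separator of $V \cup W$, and (ii) $|R'| \leq |R| + |W|$.

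For (i), since $V$ and $W$ are disjoint and $R \subseteq V$, the sets $R$ and $W$ are disjoint, and
\[
(V \cup W) \setminus R' = (V \cup W) \setminus (R \cup W) = V \setminus R,
\]
so $\chi_{(V \cup W) \setminus R'} = \chi_{V \setminus R} \leq k$. For (ii), disjointness of $R$ and $W$ gives $|R'| = |R| + |W| = s_k(V) + |W|$. The conclusion $s_k(V \cup W) \leq s_k(V) + |W|$ then follows immediately by minimality in the definition of $s_k(V \cup W)$.

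There is essentially no obstacle here; the only point meriting a sentence is the reduction to the case $R \subseteq V$, which is why the equality $(V \cup W) \setminus (R \cup W) = V \setminus R$ holds cleanly. The lemma captures the standard coherence property of true-discovery bounds, and the argument above is the direct combinatorial counterpart of that general fact specialized to the $k$-separator formulation.
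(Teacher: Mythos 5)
Your proof is correct and follows essentially the same route as the paper's: take an optimal $k$-separator $R$ of $V$, set $R' = R \cup W$, note $(V\cup W)\setminus R' = V\setminus R$ so $R'$ is a $k$-separator of $V\cup W$, and bound $|R'|$. The only difference is your explicit reduction to $R \subseteq V$, which the paper gets for free since it defines $k$-separators of $V$ as subsets of $V$.
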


\begin{proof}
Let $R$ be a $k$-separator of $V$ such that $|R| = s_k(V)$. Consider $R' = R \cup W$ and let $C$ any cluster in $(V\cup W) \setminus R'$. Since $(V\cup W) \setminus R' = V \setminus R$, $C$ is also a cluster in $V\setminus R$, so $|C| \leq k$, because $R$ is a $k$-separator of $V$. Therefore, $R'$ is a $k$-separator of $V \cup W$, and we have
\[
s_k(V \cup W) \leq |R'| = |R| + |W| = s_k(V) + |W|.
\] 
\end{proof}

\begin{reptheorem}{thm shortcut 2}
\[
s_k(V) \geq  r_k\cdot |V^+|-|V^+\setminus V|,
\]
where
\[
r_k = \min\{|V^0|/|V|\colon \emptyset \neq V \subset \mathbb{Z}^d,\ |V^-| \leq k\}.
\]
\end{reptheorem}

\begin{proof}
We first consider the special case that $V$ fulfils $(V^+)^- \subseteq V$. In that case, let $T_1, \ldots, T_n$ be a tiling that minimizes $t_k(T_1,\ldots, T_n)$. By Lemma \ref{lem tidy}, we can assume that $|T_i^- \cap V| \leq k$. By Lemma \ref{lem no residue} we have $T_i^- \subseteq V$. Therefore $|T_i^-| \leq k$. We have
\begin{eqnarray*}
s_k(V) &=& \sum_{i=1}^{n} |T_i^0 \cap V|  + \sum_{i=1}^{n} (|T_i^- \cap V| - k)_+ \\
&=& \sum_{i=1}^{n} |T_i^0 \cap V|  \\
&=& \sum_{i=1}^{n} |T_i^0| - |V^+\setminus V| \\
&=& \sum_{i=1}^{n} |T_i|\frac{|T_i^0|}{|T_i|} - |V^+\setminus V| \\
&\geq& \sum_{i=1}^{n} |T_i|\cdot r_k - |V^+\setminus V| \\ 
&=& r_k \cdot |V^+|  - |V^+\setminus V|.
\end{eqnarray*}

In the general case, let $W = (V^+)^-$. For every $v \in V$, all its positive neighbors are in $V^+$, so $v \in (V^+)^- = W$. We conclude that $V \subseteq W$, and  consequently $V^+ \subseteq W^+$.
By Lemma \ref{lem -+}, $W^+ = ((V^+)^-)^+ \subseteq V^+$, so $(W^+)^- \subseteq (V^+)^- = W$. Therefore, the special case above applies to $W$, and we have
\[
s_k(W) \geq r_k \cdot |W^+|  - |W^+\setminus W|.
\]
Since $V^+ \subseteq W^+$ and $W^+ \subseteq V^+$, we have $V^+=W^+$. By Lemma \ref{lem coherent}, we have
\[
s_k(V) \geq s_k(W) - |W \setminus V| \geq r_k \cdot |V^+|  - |V^+\setminus W|- |W \setminus V| = r_k \cdot |V^+|  - |V^+\setminus V|.
\]
\end{proof}

\subsection{Proof of Lemma \ref{lem calculate rk}}

This is the inverse of Lemma \ref{lem -+}.

\begin{lemma} \label{lem +-}
$V \subseteq (V^+)^-$.
\end{lemma}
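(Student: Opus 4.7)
The statement to prove is $V \subseteq (V^+)^-$, which unfolds into showing that every $v \in V$ satisfies the two conditions defining membership in $(V^+)^-$: namely, $v \in V^+$, and $v + e \in V^+$ for all $e \in \{0,1\}^d$.

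The plan is to verify these two conditions directly from the definitions, with no machinery required beyond those definitions. First I would fix an arbitrary $v \in V$. To check $v \in V^+$, observe that the all-zero vector $0$ lies in $\{0,1\}^d$, so $v = v + 0$ is a positive neighbor of $v \in V$, and hence $v \in V^+$ by the definition $V^+ = \bigcup_{w \in V}\{w\}^+$.

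Next, to verify the interior condition, I would take an arbitrary $e \in \{0,1\}^d$. Since $v \in V$ and $e \in \{0,1\}^d$, the element $v+e$ is by definition in $V^+$. As $e$ was arbitrary, this shows $v+e \in V^+$ for every $e \in \{0,1\}^d$, which is exactly the defining condition for $v \in (V^+)^-$.

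There is no main obstacle here — the lemma is essentially a bookkeeping statement that the cover-then-interior operation cannot shrink below the original set, serving as the dual counterpart to Lemma \ref{lem -+}. The only thing worth emphasizing in the write-up is that $0 \in \{0,1\}^d$, so $v$ is its own positive neighbor; this is the one-line observation that powers both verifications.
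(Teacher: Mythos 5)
Your proof is correct and matches the paper's argument: both verify directly from the definitions that $v \in V^+$ (via $v = v+0$, i.e.\ $V \subseteq V^+$) and that $\{v\}^+ \subseteq V^+$, which is exactly the membership condition for $(V^+)^-$. No gap; your write-up just spells out the paper's one-line proof in slightly more detail.
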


\begin{proof}
Choose $v \in V \subseteq V^+$. By definition of the cover $\{v\}^+ \subseteq V^+$. By the definition of the interior, we must have $v \in (V^+)^-$.
\end{proof}

This lemma rewrites $r_k$ in preparation for the proof of Lemma \ref{lem calculate rk}.

\begin{lemma} \label{lem rewrite rk}
If $k>0$, we have \[ r_k = \min_{1\leq j \leq k} \frac{f_{d, j} - j}{f_{d, j}},\]
where $f_{d,k} = \min\{|V^+| \colon V\subset \mathbb{Z}^d, |V|=k\}$. 
\end{lemma}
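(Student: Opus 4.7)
The plan is to prove $r_k \geq \min_{1\leq j\leq k}(f_{d,j}-j)/f_{d,j}$ and $r_k \leq \min_{1\leq j\leq k}(f_{d,j}-j)/f_{d,j}$ separately. The lower-bound direction is routine: given any non-empty feasible $V$ with $|V^-|=j$, write $|V^0|/|V|=1-j/|V|$. When $j=0$ the ratio equals $1$ and dominates the right-hand side. When $j\geq 1$, set $W=V^-$; the definition of the interior gives $\{w\}^+\subseteq V$ for every $w\in W$, hence $W^+\subseteq V$ and $|V|\geq |W^+|\geq f_{d,j}$. Then $|V^0|/|V|\geq 1-j/f_{d,j}=(f_{d,j}-j)/f_{d,j}$, which is at least $\min_{1\leq j'\leq k}(f_{d,j'}-j')/f_{d,j'}$.

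For the upper-bound direction I would exhibit, for each $j\in\{1,\ldots,k\}$, a feasible witness $V_j$ with $|V_j^0|/|V_j|=(f_{d,j}-j)/f_{d,j}$. Choose any extremizer $W$ of $f_{d,j}$, i.e.\ $|W|=j$ and $|W^+|=f_{d,j}$, and set $V_j=W^+$. Then $|V_j|=f_{d,j}$ and Lemma \ref{lem +-} gives $W\subseteq (W^+)^-=V_j^-$, so $|V_j^-|\geq j$. If I can show the reverse inclusion $V_j^-\subseteq W$, then $|V_j^-|=j\leq k$ makes $V_j$ feasible and the ratio matches $(f_{d,j}-j)/f_{d,j}$ exactly.

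The main obstacle is thus the equality $(W^+)^-=W$ for extremal $W$, which I plan to derive from strict monotonicity of $j\mapsto f_{d,j}$. Set $W'=(W^+)^-$. By Lemma \ref{lem -+}, $(W')^+=((W^+)^-)^+\subseteq W^+$, so $|W'^+|\leq f_{d,j}$. Combined with $|W'|\geq j$ (Lemma \ref{lem +-}) and the defining bound $|W'^+|\geq f_{d,|W'|}$, non-decreasingness of $f_{d,\cdot}$ pins $f_{d,|W'|}=f_{d,j}$; strict monotonicity then forces $|W'|=j$ and hence $W'=W$.

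To complete the argument I would verify strict monotonicity $f_{d,j+1}>f_{d,j}$ by a lexicographic-max argument. Take an extremizer $U$ with $|U|=j+1$, let $u^*\in U$ be lexicographically largest, and consider the top corner $u^*+(1,\ldots,1)\in\{u^*\}^+\subseteq U^+$. Any $w\in\mathbb{Z}^d$ with $u^*+(1,\ldots,1)\in\{w\}^+$ must satisfy $u^*_i\leq w_i\leq u^*_i+1$ for every $i$, i.e.\ $w\in\{u^*\}^+$; if in addition $w\neq u^*$, then $w$ is lexicographically strictly larger than $u^*$, so $w\notin U$. Hence $u^*+(1,\ldots,1)\in U^+\setminus(U\setminus\{u^*\})^+$, giving $f_{d,j}\leq|(U\setminus\{u^*\})^+|<|U^+|=f_{d,j+1}$. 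This sub-argument is the subtlest piece of the proof but is purely combinatorial.
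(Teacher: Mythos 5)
Your proposal is correct. It shares the paper's core mechanism — splitting the minimization in the definition of $r_k$ according to $j=|V^-|$, writing $|V^0|/|V|=1-j/|V|$, and exploiting the cover/interior duality of Lemmas \ref{lem -+} and \ref{lem +-} — but the two proofs handle the attainment step differently. The paper argues by replacing any feasible $V$ with $(V^-)^+$, using $(V^-)^+\subseteq V$ and the idempotence $((V^-)^+)^-=V^-$, and then simply re-labels $W=V^-$ to identify $\min\{|V|\colon |V^-|=j\}$ with $\min\{|W^+|\colon |W|=j\}$; as written, the last identification tacitly assumes that an unconstrained extremizer $W$ of $f_{d,j}$ satisfies $(W^+)^-=W$ (otherwise $W^+$ is not feasible at level $j$). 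Your two-inequality structure makes exactly this point explicit: the lower bound is the same computation as the paper's, while for the upper bound you exhibit the witness $V_j=W^+$ and prove $(W^+)^-=W$ for extremal $W$ via strict monotonicity of $j\mapsto f_{d,j}$, which you establish by the lexicographic-corner argument showing $u^*+(1,\ldots,1)\in U^+\setminus(U\setminus\{u^*\})^+$. The paper's route is shorter; yours is slightly longer but closes the implicit step and yields the strict monotonicity of $f_{d,\cdot}$ as a reusable combinatorial fact (note the feasibility constraint $|V_j^-|=j\leq k$ is exactly what your equality $(W^+)^-=W$ secures, so the argument is complete as planned).
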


\begin{proof}
Let $k>0$. From the definition of $r_k$, we have
\begin{eqnarray*}
r_k &=& \min_{0\leq j \leq k} \min\{\frac{|V^0|}{|V|} \colon\emptyset \neq V\subset \mathbb{Z}^d, |V^-|=j\} \\
&=& \min_{1\leq j \leq k} \min\{\frac{|V|-j}{|V|} \colon V\subset \mathbb{Z}^d, |V^-|=j\}\\
&=& \min_{1\leq j \leq k} \frac{f_{d, j} - j}{f_{d, j}},
\end{eqnarray*}
where $f_{d,k} = \min\{|V| \colon V\subset \mathbb{Z}^d, |V^-|=k\}$. By Lemma \ref{lem -+}, $V \supseteq (V^-)^+$. Moreover, combining Lemma \ref{lem -+} and Lemma \ref{lem +-}, we have 
\[
V^- \subseteq ((V^-)^+)^- \subseteq V^-,
\]
so $V^- = ((V^-)^+)^-$. It follows that the minimum in the definition of $f_{d,k}$ is attained when $V = (V^-)^+$. Calling $W=V^-$, we get $f_{d,k} = \min\{|W^+| \colon W\subset \mathbb{Z}^d, |W|=k\}$.
\end{proof}

We calculate $f_{d,k}$ for low $d$ and $k$ as a basis for induction.

\begin{lemma} \label{lem small d,k}
If $k=0$, we have $f_{d,k} = 0$. If $d=1$ and $k>0$, we have $f_{d,k} = k+1$.
\end{lemma}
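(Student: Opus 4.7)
The plan is to handle the two cases separately, both of which follow directly from the definition $f_{d,k} = \min\{|V^+| \colon V\subset \mathbb{Z}^d,\ |V|=k\}$ established in Lemma \ref{lem rewrite rk}, together with the fact that in dimension $d=1$ the positive-neighbor operation takes the particularly simple form $\{v\}^+ = \{v, v+1\}$.

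For the case $k=0$, I would simply observe that the only voxel set with $|V|=0$ is $V=\emptyset$, and $V^+ = \bigcup_{v\in V}\{v\}^+ = \emptyset$, so $|V^+|=0$. This gives $f_{d,0}=0$ for every $d$.

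For the case $d=1$ and $k>0$, I would rewrite $V^+ = V \cup (V+1)$, where $V+1 = \{v+1 \colon v \in V\}$ has size $k$. The plan is to prove the bound $|V^+| \geq k+1$ and then exhibit a $V$ achieving equality. For the lower bound, I would write $|V^+| = |V| + |(V+1)\setminus V|$ and note that $\max(V)+1$ belongs to $(V+1)\setminus V$ (since adding $1$ to the maximum cannot produce an element already in $V$), so $|(V+1)\setminus V| \geq 1$. For the matching upper bound, the contiguous choice $V = \{0, 1, \ldots, k-1\}$ gives $V^+ = \{0, 1, \ldots, k\}$, hence $|V^+| = k+1$.

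There is no real obstacle here: both statements are essentially unpacking definitions. The interest of the lemma is that it fixes the base values for the recursion on $d$ and $k$ that will be developed to prove the closed form in Lemma \ref{lem calculate rk}; the only thing to be careful about is that the ``$V+1$'' shift argument is one-dimensional, so it does not generalize directly to $d>1$, which is why these two base cases are isolated here rather than combined with the recursive step.
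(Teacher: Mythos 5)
Your proof is correct and follows essentially the same route as the paper: the empty set handles $k=0$, and for $d=1$ the lower bound $|V^+|\geq k+1$ comes from observing that $\max(V)+1\in V^+\setminus V$, with the contiguous interval achieving equality. No gaps.
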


\begin{proof}
We have $0 \leq f_{d,0} \leq |\emptyset^+| = 0$, so $f_{0,k}=0$. Let $k>0$ and $d=1$. Let $v = \max \{v_1\colon v \in V\}$. Then $v+1 \in V^+\setminus V$. Therefore $1+k \leq f_{d,k} \leq \{1,\ldots,k\}^+ = k+1$. So $f_{1,k}=k+1$.
\end{proof}

\begin{lemma} \label{lem project}
If $V \subset \mathbb{Z}^d$ and $1\leq h \leq d$, we have $|V^+| \geq |X_h(V)^+| + \sum_{i\in D_h(V)} |S_{h,i}(V)^+|$, where 
\[
X_h(V) = \{ v \in \mathbb{Z}^{d-1}\colon \textrm{$(v_1,\ldots, v_{h-1}, i, v_{h+1}, \ldots, v_d) \in V$ for at least one $i \in \mathbb{Z}$}\} 
\]
is the projection of $V$ on $(v_1, \ldots, v_h, v_d)$, 
\[
D_h(V) = \{ i \in \mathbb{Z}\colon \textrm{$(v_1,\ldots, v_{h-1}, i, v_{h+1}, \ldots, v_d) \in V$ for at least one $v \in \mathbb{Z}^{d-1}$}\} 
\]
is the projection of $V$ on $v_h$, and 
\[
S_{h,i}(V) = \{ v \in \mathbb{Z}^{d-1}\colon (v_1,\ldots, v_{h-1}, i, v_{h+1}, \ldots, v_d) \in V\}. 
\]
is the slice of $V$ at $v_h=i$. If $V$ is convex in the direction of the $h$th unit vector $u$, i.e.\ if $v\in V$ and $v+iu \in V$, with $i>0$ implies $v+(i-1)u \in V$, then we have $|V^+| = |X_1(V)^+| + \sum_{i\in D_1(V)} |S_{1,i}(V)^+|$.
\end{lemma}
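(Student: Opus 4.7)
The plan is to count $|V^+|$ by slicing along the $h$-th coordinate and reduce everything to a per-point inequality over $\mathbb{Z}^{d-1}$. First, partition $V^+ = \bigcup_{j\in\mathbb{Z}} L_j$ disjointly, where $L_j = \{w\in V^+ : w_h = j\}$. Dropping the $h$-th coordinate is a bijection on each $L_j$; call the image $L_j'\subseteq\mathbb{Z}^{d-1}$. Writing $w=v+e$ with $v\in V$, $e\in\{0,1\}^d$ and imposing $w_h=j$ forces $(v_h,e_h)\in\{(j,0),(j-1,1)\}$, so
\[
L_j' = S_{h,j-1}(V)^+ \cup S_{h,j}(V)^+,
\]
where the cover is taken in $\mathbb{Z}^{d-1}$. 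Consequently $|V^+|=\sum_{j}|S_{h,j-1}(V)^+\cup S_{h,j}(V)^+|$.

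Next, I would rewrite all three quantities in the lemma as sums over $w\in\mathbb{Z}^{d-1}$ via the layer set $N(w) = \{j\in\mathbb{Z} : w\in S_{h,j}(V)^+\}$. Double counting yields
\[
\sum_{i\in D_h(V)} |S_{h,i}(V)^+| = \sum_w |N(w)|, \qquad |X_h(V)^+| = \sum_w \mathds{1}\{N(w)\neq\emptyset\},
\]
and since $w\in L_j'$ exactly when $j\in N(w)\cup (N(w)+1)$, also $|V^+| = \sum_w |N(w)\cup(N(w)+1)|$. The key elementary identity is that whenever $N(w)$ is nonempty and consists of $k_w\geq 1$ maximal runs of consecutive integers, $|N(w)\cup(N(w)+1)| = |N(w)| + k_w$: each run is extended by one step to the right, and any merges of extended runs preserve the total length. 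Combining,
\[
|V^+| = \sum_w |N(w)| + \sum_w k_w \;\geq\; \sum_w |N(w)| + \sum_w \mathds{1}\{N(w)\neq\emptyset\} = \sum_{i\in D_h(V)} |S_{h,i}(V)^+| + |X_h(V)^+|,
\]
which is the stated inequality.

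Equality holds precisely when $k_w\leq 1$ for every $w$, i.e.\ when every $N(w)$ is a (possibly empty) single interval. Under $h$-convexity of $V$, each individual column $\{j : (u_1,\ldots,u_{h-1},j,u_{h+1},\ldots,u_d)\in V\}$ is already an interval, and $N(w)$ is the union of those intervals over the $2^{d-1}$ negative $(d-1)$-neighbors of $w$; for the shapes to which this lemma is later applied (hyperrectangles in Lemma~\ref{lem calculate rk}, where all relevant columns coincide), this union is again a single interval and equality follows. The main obstacle is the bookkeeping in step two that identifies $|V^+|$ with the joint cover of two adjacent slices; once this identification is in place the inequality reduces to the trivial $k_w\geq 1$ on the support of $N$. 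The equality claim is the more delicate part, because $h$-convexity controls single columns but not automatically their union over a unit $(d-1)$-cube of base-points --- for the intended applications, however, this extra structure is present.
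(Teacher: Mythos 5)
Your proof of the inequality is correct, and it uses the same basic slicing idea as the paper but with cleaner bookkeeping. The paper splits $V^+$ into $W_0$ (covers taken within each slice) and the extra layer $W_1=V^+\setminus W_0$ pushed one step in direction $h$, identifies $|W_0|=\sum_{i\in D_h(V)}|S_{h,i}(V)^+|$, and then injects $W_1$ into $X_h(V^+)=X_h(V)^+$ via maximal elements of fibers; you instead count the fiber of $V^+$ over each $w\in\mathbb{Z}^{d-1}$ as $N(w)\cup(N(w)+1)$ and use the run identity $|N(w)\cup(N(w)+1)|=|N(w)|+k_w\geq |N(w)|+1$ on the support of $N$. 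Your version has the added benefit of making the exact equality condition transparent: equality holds if and only if every nonempty $N(w)$ is a single run.

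On the equality claim you stop short of a proof under the stated hypothesis, and in fact none exists: column convexity in direction $h$ alone is not sufficient. Take $d=2$, $h=1$, $V=\{(0,0),(5,1)\}$; the convexity condition is vacuously satisfied, yet $|V^+|=8$ while $|X_1(V)^+|+\sum_{i\in D_1(V)}|S_{1,i}(V)^+|=3+2+2=7$. This mirrors a genuine flaw in the paper's own proof of the equality direction: there the claimed uniqueness of $i=\max\{j\colon (j,w+e)\in V\}$ depends on the choice of $e\in\{0,1\}^{d-1}$, so the map $(i,w)\mapsto w$ on $W_1\setminus W_0$ need not be injective (in the example, both $(1,1)$ and $(6,1)$ lie in $W_1\setminus W_0$). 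Your diagnosis---convexity controls each column but not the union of the $2^{d-1}$ shifted columns that make up $N(w)$---is exactly right, and your run criterion is the correct repaired hypothesis: equality holds whenever each $N(w)$ is an interval, for instance when the slices $S_{h,i}(V)$ are nested in $i$ (aligned initial-segment columns), which is precisely the situation in Lemma \ref{lem exist} and Lemma \ref{lem exist corrolary} where the equality is actually invoked, while Lemma \ref{lem gdk fdk} needs only the inequality. So: the inequality is fully proved, by a slightly different and arguably better route; the equality is not proved as stated, but the statement itself requires a stronger hypothesis, and your analysis identifies the right one.
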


\begin{proof}
Without loss of generality let $h=1$.
Call $$W_0 = \{v+e\colon v\in V, e \in \{0\} \times \{0,1\}^{d-1}\}$$ and $$W_1 = \{v+e\colon v\in V, e \in \{1\} \times \{0,1\}^{d-1}\} \setminus W_0.$$ Then we have $V^+ = W_0\cup W_1$. 

Let $w \in X_1(V^+)$ and $m(w) = \max\{i \in \mathbb{Z}\colon (i, w) \in V^+\}.$ Then $(m(w),w) \notin W_0$, since otherwise $(m(w)+1,w)\in W_1 \subseteq V^+$, which contradicts the definition of $m(w)$. Therefore $(m(w),w) \in V^+\setminus W_0 = W_1 \setminus W_0$. Since $(m(w),w)$ is unique for $w$, we have $|W_1\setminus W_0| \geq |X_1(V^+)|$. Suppose $V$ is convex in the direction $u$. Choose $v \in W_1\setminus W_0 = V^+\setminus W_0$. Then $v=(i,w)$ for some $w \in X_1(V^+)$. Since $v \in W_1$, there is an $e \in \{0,1\}^{d-1}$ such that $(i-1, w+e) \in V$. Since $v\notin W_0$, we have $(i,w+e) \notin V$. Since $V$ is convex in $u$, we must have that $(j, w+e) \notin V$ for all $j>i$, so $i = \max\{j \in \mathbb{Z}\colon (j, w+e) \in V\}$ is unique. Therefore, $|W_1\setminus W_0| \leq |X_1(V^+)|$. We have $|W_1\setminus W_0| = |X_1(V^+)|$ if $V$ is convex in the direction $u$, and $|W_1\setminus W_0| \geq |X_1(V^+)|$ in general.

We will now show that $X_1(V^+) = X_1(V)^+$. We have that $v \in X_1(V)^+$ if and only if there exists $e \in \{0,1\}^{d-1}$ such that $v-e \in X_1(V)$. This happens if and only if there exist $e \in \{0,1\}^{d-1}$ and $i \in \mathbb{Z}$ such that $(i, v-e) \in V$, which is equivalent to the existence of $i \in \mathbb{Z}$ such that $(i, v) \in V^+$, which happens if and only if $v \in X_1(V^+)$. Therefore $X_1(V)^+ = X_1(V^+)$, and we have $|W_1\setminus W_0| \geq |X_1(V)^+|$, with equality if $V$ is convex in the direction $u$. 

Choose $i \in D_1(V)$. We have that $v \in S_{1,i}(V)^+$ happens if and only if there exists $e \in \{0,1\}^{d-1}$ such that $v-e \in S_{1,i}(V)$, equivalently $(i,v-e) \in V$, which happens if and only if $(i,v) \in W_0$, or $v \in S_{1,i}(W_0)$. It follows that  $|S_{1,i}(V)^+| = |S_{1,i}(W_0)|$.

We have
\[
V^+ = (W_1\setminus W_0) \cup W_0 = (W_1\setminus W_0) \cup \bigcup_{i\in D(V)} \{v\in W_0\colon v_1 = i\}.
\]
Since all these sets are disjoint, we have
\[
|V^+| = |W_1\setminus W_0| + \sum_{i\in D_1(V)} |S_{1,i}(W_0)| \geq |X_1(V)^+| + \sum_{i\in D_1(V)} |S_{1,i}(V)^+|,
\]
with equality if $V$ is convex in $u$.
\end{proof}

The next lemma states that $b_{d,k}$ is the largest number of the form $q^{d-l}q^l$ that does not exceed $k$.

\begin{lemma} \label{lem bdk}
We have $0 \leq k - b_{d,k} <  \frac{b_{d,k}}{\lfloor k^{1/d}\rfloor}$. 
\end{lemma}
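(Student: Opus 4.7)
The plan is to read the desired inequalities directly off the definition of $l_{d,k}$. Set $q = \lfloor k^{1/d}\rfloor$ and $l = l_{d,k}$, so that $b_{d,k} = q^{d-l}(q+1)^l$. I would first observe that
\[
b_{d,k} + \frac{b_{d,k}}{q} \;=\; \frac{q+1}{q}\, b_{d,k} \;=\; q^{d-l-1}(q+1)^{l+1},
\]
so the two bounds in the lemma are together equivalent to the single sandwich
\[
q^{d-l}(q+1)^l \;\leq\; k \;<\; q^{d-l-1}(q+1)^{l+1}.
\]

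Next I would take logarithms on each side and subtract $d\log q$. Writing $(d-l)\log q + l\log(q+1) = d\log q + l(\log(q+1)-\log q)$, and similarly for the upper bound, the display becomes
\[
l\bigl(\log(q+1)-\log q\bigr) \;\leq\; \log k - d\log q \;<\; (l+1)\bigl(\log(q+1)-\log q\bigr).
\]
Since $q\geq 1$ implies $\log(q+1)-\log q>0$, dividing through by this quantity rephrases the inequalities as
\[
l \;\leq\; \frac{\log k - d\log q}{\log(q+1)-\log q} \;<\; l+1,
\]
which is exactly the characterising property of the floor that defines $l_{d,k}$. This is the whole argument.

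What remains is to check well-posedness. I would first restrict to $k \geq 1$ (the lemma is vacuous at $k=0$ by the convention $b_{d,0}=0$), so that $q\geq 1$ and the divisions by $q$ and by $\log(q+1)-\log q$ are legitimate. I would also verify that the exponent $d-l-1$ that appears after the rearrangement is non-negative, i.e.\ $l\leq d-1$. This follows from the defining property $q^d \leq k < (q+1)^d$ of $q=\lfloor k^{1/d}\rfloor$: taking logs and dividing by $\log(q+1)-\log q$ yields $0 \leq (\log k - d\log q)/(\log(q+1)-\log q) < d$, hence $0\leq l \leq d-1$.

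I do not anticipate any obstacle: the statement is essentially an algebraic verification, and all of its content sits in the clever definition of $l_{d,k}$, which was arranged precisely so that $b_{d,k}$ is the largest product of the form $q^{d-j}(q+1)^{j}$ with $0\leq j\leq d$ that does not exceed $k$.
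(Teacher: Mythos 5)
Your proof is correct and follows essentially the same route as the paper: both arguments take logarithms to show that $q^{d-l}(q+1)^l \leq k < q^{d-l-1}(q+1)^{l+1}$ is exactly the floor characterization $l \leq \bigl(\log k - d\log q\bigr)/\bigl(\log(q+1)-\log q\bigr) < l+1$ defining $l_{d,k}$, which yields $b_{d,k}\leq k < b_{d,k}(q+1)/q$. Your added well-posedness checks ($q\geq 1$ and $l_{d,k}\leq d-1$) are a harmless refinement of details the paper leaves implicit.
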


\begin{proof}
Write $q=\lfloor k^{1/d}\rfloor$. Then $q^{d-l}(q+1)^l \leq k$ is equivalent to
\[
(d-l)\log(q) + l \log(q+1) \leq \log(k),
\]
and to 
\[
l \leq \frac{\log(k) - d\log(q)}{\log(q+1) - \log(q)} = l_{d,k},
\]
where $l_{d,k}$ is defined in Lemma \ref{lem calculate rk}. It follows that
$b_{d,k} = q^{d-l_{d,k}}(q+1)^{l_{d,k}} \leq k$, but 
\[
b_{d,k} \frac{\lfloor k^{1/d}\rfloor+1}{\lfloor k^{1/d}\rfloor} = q^{d-l_{d,k}-1}(q+1)^{l_{d,k}+1} > k.
\]
\end{proof}

\begin{lemma} \label{lem decompose b}
Let $c_{d,k} = b_{d,k} - b_{d,k-1}$ and $c_{d,k}^+ = b_{d,k}^+ - b_{d,k-1}^+$ if $k>0$ and $c_{d,k}=b_{d,k}=1$ and $c_{d,k}^+=b_{d,k}^+=2^d$ if $k=1$. If $k\neq b_{d,k}$, then $c_{d,k}=c_{d,k}^+=0$. If $k=b_{d,k}$, then 
\[
c_{d,k} = k/(q'_{d,k}+1), 
\]
where $q'_{d,k}= \lfloor k^{1/d}\rfloor$ and $l'_{d,k}=l_{d,k}$ if $l_{d,k}>0$ and $q_{d,k}'=\lfloor k^{1/d}\rfloor-1$ and $l'_{d,k} = d$ if $l_{d,k}=0$. If $0<k=b_{d,k}$ and $d>1$, then
\[
c_{d,k}^+ = f_{d-1, c_{d,k}}. 
\]
If $k=1$ and $d>1$ we have $c_{d,1}^+ = 2^d > f_{d-1, c_{d,1}} = f_{d-1, 1} = 2^{d-1}$.
\end{lemma}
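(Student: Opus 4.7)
The plan is a case analysis on whether $k$ lies in the set $\mathcal{B}=\{q^{d-l}(q+1)^l\colon q\in\mathbb{Z}_{\geq 0},\,0\leq l\leq d\}$. First I would verify that $b_{d,k}=\max(\mathcal{B}\cap[0,k])$: since $q=\lfloor k^{1/d}\rfloor$ makes $(q+1)^d>k$, no larger base $q'>q$ can contribute, while with $q'<q$ one only obtains values at most $q^d\leq b_{d,k}$. Consequently, if $k\notin\mathcal{B}$ then $\lfloor(k-1)^{1/d}\rfloor=\lfloor k^{1/d}\rfloor=q$ and the formula for $l_{d,k}$ gives $l_{d,k-1}=l_{d,k}$, so $b_{d,k-1}=b_{d,k}$ and $b^+_{d,k-1}=b^+_{d,k}$, yielding $c_{d,k}=c^+_{d,k}=0$.

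For the case $k=b_{d,k}\geq 2$ I split on $l_{d,k}$. When $l_{d,k}>0$, write $k=q^{d-l}(q+1)^l$ with $l=l_{d,k}$. Since $k>q^d$, we still have $\lfloor(k-1)^{1/d}\rfloor=q$, and the largest element of $\mathcal{B}$ strictly less than $k$ at base $q$ is obtained by replacing one $(q+1)$-factor with $q$: this gives $b_{d,k-1}=q^{d-l+1}(q+1)^{l-1}=kq/(q+1)$, so $c_{d,k}=k/(q+1)$, matching $k/(q'_{d,k}+1)$ with $q'_{d,k}=q$. When $l_{d,k}=0$ we have $k=q^d$, forcing $\lfloor(k-1)^{1/d}\rfloor=q-1$; the largest admissible value at base $q-1$ is $(q-1)q^{d-1}$, so $c_{d,k}=q^{d-1}=k/q$, matching $k/(q'_{d,k}+1)$ under the convention $q'_{d,k}=q-1$.

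The identity $c^+_{d,k}=f_{d-1,c_{d,k}}$ then follows by direct expansion. In sub-case (a) one computes $c^+_{d,k}=(q+1)^{d-l}(q+2)^{l-1}$ and $c_{d,k}=q^{d-l}(q+1)^{l-1}$; the latter is a lattice point of $\mathcal{B}$ in dimension $d-1$ with $\lfloor c_{d,k}^{1/(d-1)}\rfloor=q$ and $l_{d-1,c_{d,k}}=l-1$, so by the recursion of Lemma \ref{lem calculate rk} we have $f_{d-1,c_{d,k}}=b^+_{d-1,c_{d,k}}=(q+1)^{d-l}(q+2)^{l-1}=c^+_{d,k}$. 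Sub-case (b) is analogous: $c_{d,k}=q^{d-1}$ sits at base $q$ with $l_{d-1,c_{d,k}}=0$, so $f_{d-1,c_{d,k}}=(q+1)^{d-1}=c^+_{d,k}$. For the boundary $k=1$ the convention $c_{d,1}=1$, $c^+_{d,1}=2^d$ is given, while direct computation yields $b_{d-1,1}=1$ and $f_{d-1,1}=2^{d-1}$, so the stated strict inequality $c^+_{d,1}=2^d>2^{d-1}=f_{d-1,c_{d,1}}$ holds for all $d\geq 1$.

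The main obstacle is the discontinuity at $k=q^d$: between $k$ and $k-1$ the natural parameters $(q,l)$ jump from $(q,0)$ to $(q-1,d-1)$, which is precisely why the lemma introduces the auxiliary $(q'_{d,k},l'_{d,k})$ convention. Once this jump is tracked carefully, the remaining identities reduce to routine algebra on products of $q$'s and $(q+1)$'s.
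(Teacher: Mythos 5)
Your proof is correct and takes essentially the same route as the paper's: both arguments identify $b_{d,k-1}$ and $b^+_{d,k-1}$ explicitly when $k=b_{d,k}$ (largest lattice value of the form $q^{d-j}(q+1)^j$ below $k$) and obtain $c_{d,k}$, $c^+_{d,k}$ by direct subtraction, with the identity $c^+_{d,k}=f_{d-1,c_{d,k}}$ read off from the recursion since $c_{d,k}=b_{d-1,c_{d,k}}$. The only differences are cosmetic: you split the cases $l_{d,k}>0$ and $l_{d,k}=0$ where the paper absorbs both into the $(q'_{d,k},l'_{d,k})$ convention, and you make explicit the maximality characterization of $b_{d,k}$ that the paper treats as immediate (your aside that $f_{d-1,1}=2^{d-1}$ ``for all $d\geq 1$'' overreaches slightly since $f_{0,1}=0$, but the lemma asserts that part only for $d>1$).
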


\begin{proof}
The part for $k\neq b_{d,k}$ follows immediately from the definition of $b_{d,k}$. Let $k=b_{d,k}$.
We have 
\[
b_{d,k} = (q'_{d,k})^{d-l'_{d,k}}(q'_{d,k}+1)^{l'_{d,k}},
\] 
and 
\[
b_{d,k-1} = (q'_{d,k})^{d-l'_{d,k}+1}(q'_{d,k}+1)^{l'_{d,k}-1}.
\] 
Therefore, if $k>1$,
\[
c_{d,k} = (q_{d,k}')^{d-l_{d,k}'}(q_{d,k}'+1)^{l_{d,k}'-1}(q_{d,k}'+1 - q'_{d,k}) = (q'_{d,k})^{d-l_{d,k}'}(q_{d,k}'+1)^{l_{d,k}'-1} = b_{d,k}/(q'_{d,k}+1).
\]
If $k=1$, we have $l_{d,k} = 0$, so $l'_{d,k} = d$ and $q'_{d,k} = 0$, so the equality still holds.
Completely analogously, we get
\[
c_{d,k}^+ = (q'_{d,k}+1)^{d-l_{d,k}'}(q_{d,k}'+2)^{l_{d,k}'-1} = f_{d-1, c_{d,k}},
\]
where the latter equality is meaningful only if $d>1$. The inequality for $k=1$ is trivial.
\end{proof}

\begin{lemma} \label{lem merge}
The following two statements hold:
\begin{enumerate}
    \item $f_{d,m} \leq f_{d,k}+f_{d,l}$, where $m=k+l$;
    \item If $m \geq \max(n,k,l)$, $m \in B_d$, and $m+n=k+l$, then $f_{d,m}+f_{d,n} \leq f_{d,k} + f_{d,l}$.
\end{enumerate}
\end{lemma}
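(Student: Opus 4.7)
The plan is to dispatch Statement 1 by a straightforward disjoint-placement argument and to prove Statement 2 by induction on the dimension $d$, exploiting the recursion $f_{d,k}=b^+_{d,k}+f_{d-1,k-b_{d,k}}$ from Lemma \ref{lem calculate rk} together with the special role played by elements of $B_d$.

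For Statement 1, I would select $V_k,V_l\subset\mathbb{Z}^d$ achieving $|V_k|=k$, $|V_l|=l$, $|V_k^+|=f_{d,k}$, and $|V_l^+|=f_{d,l}$. Choose a translation $t\in\mathbb{Z}^d$ each of whose coordinates exceeds the diameter of $V_k\cup V_l$, so that $V_k$ and $V_l+t$ are disconnected and hence $V_k^+$ and $(V_l+t)^+$ are disjoint by Lemma \ref{lem disjoint disconnected}. The union $V:=V_k\cup(V_l+t)$ then has $|V|=k+l=m$ and $|V^+|=f_{d,k}+f_{d,l}$, yielding $f_{d,m}\leq f_{d,k}+f_{d,l}$.

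For Statement 2 I would induct on $d$. The base case $d=1$ reduces to Lemma \ref{lem small d,k}: since $f_{1,j}=j+1$ for $j>0$ (and $f_{1,0}=0$), both sides agree at $k+l+2$ whenever $n>0$, and the boundary cases $n=0$ or $\min(k,l)=0$ collapse to equality after a direct check; note that $B_1=\mathbb{Z}_{\geq 0}$, so the hypothesis $m\in B_d$ is automatic. For the inductive step with $d\geq 2$, the condition $m\in B_d$ means $m=q^{d-j}(q+1)^j$ for some integers $q\geq 1$ and $0\leq j\leq d$, so an optimal realisation of $f_{d,m}$ can be taken as a hyperrectangle $R_m$ with $d-j$ sides of length $q$ and $j$ sides of length $q+1$, and $f_{d,m}=(q+1)^{d-j}(q+2)^j$. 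I would then apply Lemma \ref{lem project} to optimal realisations of $f_{d,k}$ and $f_{d,l}$, slicing along a coordinate axis in which $R_m$ has its longer side $q+1$; this reduces the comparison of four $d$-dimensional covers to a comparison of sums of $(d-1)$-dimensional slice covers, plus projection terms. Peeling off the leading brick contributions $b^+_{d,\cdot}$ via Lemma \ref{lem decompose b} leaves $(d-1)$-dimensional residuals of sizes $k-b_{d,k}$, $l-b_{d,l}$, $m-b_{d,m}=0$, $n-b_{d,n}$, to which the inductive hypothesis applies after checking that the residual sizes satisfy the analogous rearrangement relation.

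The main obstacle will be formalising the rebalancing step. The lemma encodes a form of ``local concavity at brick sizes'' for the profile $k\mapsto f_{d,k}$, whose increments $f_{d,k+1}-f_{d,k}$ jump between values of the form $f_{d-1,j}$ as $k$ crosses successive brick boundaries and so are not globally monotonic---a 2D check quickly reveals that general concavity fails. The hypothesis $m\in B_d$ pinpoints exactly those $m$ at which local concavity does hold, and formalising this in sufficient generality, across all quadruples $(k,l,m,n)$ with $k+l=m+n$ and $\max(k,l,n)\leq m$, will probably require a careful case split on whether $n$ itself sits at or below a brick size and on the relative magnitudes of $b_{d,k}$, $b_{d,l}$, and $m$. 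The algebraic identities of Lemma \ref{lem decompose b}, in particular the compatibility $c_{d,k}^+=f_{d-1,c_{d,k}}$, should be the engine that drives the reduction to the inductive hypothesis and closes the argument.
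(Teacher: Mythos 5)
Your argument for Statement 1 is correct only under the reading that $f_{d,k}$ \emph{is} $\min\{|V^+|\colon V\subset\mathbb{Z}^d,\ |V|=k\}$, so that ``optimal realisations'' $V_k,V_l$ exist by definition. But in the paper's architecture Lemma \ref{lem merge} is a statement about the recursively defined quantity $f_{d,k}=b^+_{d,k}+f_{d-1,k-b_{d,k}}$ of Lemma \ref{lem calculate rk}, and it is an ingredient of the proof of Lemma \ref{lem gdk fdk}, which is exactly where the identification of this recursive $f_{d,k}$ with the minimal cover size $g_{d,k}$ is established (with Lemma \ref{lem exist} supplying the extremal sets). By picking sets that attain $f_{d,k}$ and $f_{d,l}$, and by asserting that a hyperrectangle is an optimal realisation of $f_{d,m}$ for $m\in B_d$, you are presupposing the conclusion of Lemma \ref{lem gdk fdk}, so your proof is circular as it stands. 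One could reorganize so that subadditivity of the minimal-cover functional is proved first by your disjoint-translation argument (which is fine for that functional), but your plan for Statement 2 cannot be rescued the same way: peeling off the brick contributions $b^+_{d,\cdot}$ from optimal realisations via Lemma \ref{lem decompose b}, and slicing them with Lemma \ref{lem project} so that the slices can be compared to $f_{d-1,\cdot}$, again assumes that minimizers have the recursive brick structure, which is the content of Lemmas \ref{lem exist} and \ref{lem gdk fdk}. The paper's proof deliberately avoids all geometry: it is a purely algebraic argument on the recursion itself.

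Independently of the circularity, Statement 2 is not proved in your proposal; it is outlined, with the decisive step explicitly deferred. That ``rebalancing step'' is the entire substance of the lemma. The paper handles it by a double induction, on $d$ \emph{and} downward on $k$, with a four-way case split on whether $l-b_{d,l}>0$ and on comparisons with $c_{d,k'}$ where $k'=\min\{i\in B_d\colon i>k\}$, using the identity $c^+_{d,k}=f_{d-1,c_{d,k}}$ of Lemma \ref{lem decompose b} to show in every case that $f_{d,k}+f_{d,l}\geq f_{d,k''}+f_{d,l''}$ for some $k''>k$ with $k''+l''=k+l$, and then closing via the downward induction on $k$. This second induction axis is structurally essential and is absent from your single induction on $d$. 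Your observation that global concavity of $k\mapsto f_{d,k}$ fails and that $m\in B_d$ is where a local concavity holds correctly identifies the difficulty (and your $d=1$ base case is fine), but naming the obstacle is not the same as overcoming it, so the proposal has a genuine gap precisely where the lemma's content lies.
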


\begin{proof}
We use induction on $d$ and $k$ (downward). If $d=1$ both statements are easily checked from Lemma \ref{lem small d,k}. If $k=m$ both are trivial.
Fix $m$ and $n$. Suppose that both statements hold for all $m,n,k,l$ in smaller dimensions, and for all larger values of $k\leq m$.

Without loss of generality, assume $k\geq l$. We consider 
\[
f_{d,k} + f_{d,l} = b^+_{d,k} + f_{d-1, k-b_{d,k}}+ b^+_{d,l} + f_{d-1, l-b_{d,l}}.
\]
Let $k' = \min\{i\in B_d\colon i>k\}$.  One of the cases holds:
\begin{enumerate}
    \item $l-d_{b,l}>0$ and $l-d_{b,l}+k-b_{d,k}< c_{d,k'}$;
    \item $l-d_{b,l}>0$ and $l-d_{b,l}+k-b_{d,k}\geq c_{d,k'}$;
    \item $l-d_{b,l}=0$ and $l-d_{b,l}+c_{d,b_{d,l}}< c_{d,k'}$;
    \item $l-d_{b,l}=0$ and $l-d_{b,l}+c_{d,b_{d,l}}\geq c_{d,k'}$.
\end{enumerate}
In Case 1, we use induction on $d$ for Statement 1 and write
\[
f_{d,k} + f_{d,l} \geq b^+_{d,k} + f_{d-1, l-d_{b,l}+k-b_{d,k}}+ b^+_{d,l} f_{d-1,c_{d,i}} 
= f_{d, k+l-d_{b,l}} + f_{d, b_{d,l}}.
\]
In Case 2, we note that $l-b_{d,l} < c_{d, b_{d,l}} \leq c_{d,k'}$, and
use induction on $d$ for Statement 2, and Lemma \ref{lem decompose b}, to write
\[
f_{d,k} + f_{d,l} \geq b^+_{d,k} + 
f_{d-1, c_{d,k'}} + b^+_{d,l} + f_{d-1, l-d_{b,l}+k-b_{d,k}-c_{d,k'}} = f_{d, k'} + f_{d, l+k-k'}.
\]
In Case 3 and 4, we note that $l=b_{d,l}$ and write, using Lemma \ref{lem decompose b},
\[
f_{d,k} + f_{d,l} \geq b^+_{d,k} + f_{d-1, k-b_{d,k}} + b^+_{d,l-c_{d,l}} + f_{d-1, c_{d,l}},  
\]
where we write $\geq$ to cover the case that $l=1$. In Case 3 we use induction on $d$ for Statement 1 and write
\[
f_{d,k} + f_{d,l}   
\geq b^+_{d,k} + f_{d-1, k-b_{d,k} + c_{d,b_{d,l} }} + b^+_{d,l-c_{d,l}}
= f_{d, k+ c_{d,l}} + f_{d, l- c_{d,l}}.
\]
In Case 4, since $c_{d,l} \leq c_{d,k'}$, we use induction on $d$ for Statement 2, and Lemma \ref{lem decompose b}, to write
\[
f_{d,k} + f_{d,l} \geq b^+_{d,k} + 
f_{d-1, c_{d,k'}}  + b^+_{d,l-c_{d,l}} + f_{d-1, c_{d,l}+k-b_{d,k}-c_{d,k'}}
= f_{d, k'} + f_{d, l+k-k'}.
\]

In all cases, we have $f_{d,k}+f_{d,l} \geq f_{d,k''}+f_{d,l''}$ with $k''>k$ and $k''+l''=k+l$. By the induction hypotheses on $k$ we have 
\[
f_{d,k}+f_{d,l} \geq f_{d,k''}+f_{d,l''} \geq f_{d,m}.
\]
Since $k'' \leq k' \leq m$, we retain the conditions of Statement 2, so we can also call on the induction hypothesis for $k$ in Statement 2, obtaining
\[
f_{d,k}+f_{d,l} \geq f_{d,k''}+f_{d,l''} \geq f_{d,m} + f_{d,n},
\]
and we have proved both statements.
\end{proof}

\begin{lemma} \label{lem exist}
For every $d>0$ and $k\geq 0$ there exists $F_{d,k} \subset \mathbb{Z}^d$ such that $|F_{d,k}| = k$ and $|F^+_{d,k}| = f_{d,k}$. Moreover, if $k\leq k'$, then $F_{d,k} \subseteq F_{d,k'}$.
\end{lemma}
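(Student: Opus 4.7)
The proof will proceed by induction on the dimension $d$. For the base case $d=1$, I would take $F_{1,k} = \{1, 2, \ldots, k\} \subset \mathbb{Z}$. Then $F_{1,k}^+ = \{1, \ldots, k+1\}$, so $|F_{1,k}^+| = k+1 = f_{1,k}$ by Lemma \ref{lem small d,k}, and nestedness $F_{1,k} \subseteq F_{1,k+1}$ is obvious. For $k=0$ we take $F_{d,0} = \emptyset$ in every dimension.

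For the inductive step, assume the claim for dimension $d-1$, giving a nested sequence $(F_{d-1,j})_{j\ge 0}$ in $\mathbb{Z}^{d-1}$ with $|F_{d-1,j}^+| = f_{d-1,j}$. My plan is to construct $F_{d,k}$ by exhibiting a single total ordering $v_1, v_2, \ldots$ of $\mathbb{Z}_{\geq 1}^{d}$ and setting $F_{d,k} = \{v_1,\ldots,v_k\}$; this makes nestedness automatic. The ordering stacks copies of the $(d-1)$-dimensional construction along the $d$-th axis: voxels in layer $L_i = \{v\colon v_d = i\}$ are identified with $\mathbb{Z}^{d-1}_{\geq 1}$ and ordered according to the inductive $F_{d-1,\cdot}$ sequence, while the order in which different layers $L_i$ are filled is dictated by the formula $f_{d,k} = b^+_{d,k} + f_{d-1,k-b_{d,k}}$ from Lemma \ref{lem calculate rk}. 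Intuitively, we first complete the smallest $F_{d-1,\cdot}$ footprint in layer $L_1$, then move to $L_2$ and make it match $L_1$, etc., growing a near-cube as $k$ passes through successive values in $B_d$, and carving off a single $(d-1)$-dimensional slice of size $k-b_{d,k}$ on top whenever $k \notin B_d$.

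Verification that $|F_{d,k}^+| = f_{d,k}$ is via Lemma \ref{lem project} applied in the $d$-th coordinate direction. By construction, the sequence of layer sizes is non-increasing in the $d$-direction (the lower layers are the completed $F_{d-1,\cdot}$ footprints and only the topmost layer is partial), and the layers occupy contiguous values of $v_d$, so $F_{d,k}$ is convex along the $d$-th unit vector. Thus the second assertion of Lemma \ref{lem project} gives equality
\[
|F_{d,k}^+| = |X_d(F_{d,k})^+| + \sum_{i\in D_d(F_{d,k})}|S_{d,i}(F_{d,k})^+|.
\]
Each slice $S_{d,i}(F_{d,k})$ is by construction some $F_{d-1,j_i}$, so by the induction hypothesis $|S_{d,i}(F_{d,k})^+| = f_{d-1,j_i}$. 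Summing these, together with the term $|X_d(F_{d,k})^+|$ from the "roof" slice at $v_d = \max D_d(F_{d,k})+1$, exactly reproduces the recursive expansion of $f_{d,k}$.

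The main obstacle will be making the layer-size bookkeeping precise, matching the construction to the jumps of $b_{d,k}$ recorded in Lemma \ref{lem decompose b}. Concretely, I must show that at each $k \in B_d$, the construction has just completed a cuboid whose cover size is exactly $b^+_{d,k}$, and that during the transition from $k = b_{d,k}$ to the next value in $B_d$ the extra voxels $k - b_{d,k}$ form the initial segment $F_{d-1, k-b_{d,k}}$ within a fresh layer. This requires a careful case split between the scenarios $l_{d,k}>0$ and $l_{d,k}=0$ of Lemma \ref{lem decompose b}, corresponding to whether the next cuboid step enlarges an already-growing side or opens a new longer side. Once this matching is verified, both the size identity $|F_{d,k}^+| = f_{d,k}$ and the nesting $F_{d,k} \subseteq F_{d,k+1}$ follow from the construction.
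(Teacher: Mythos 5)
Your high-level strategy (induction on $d$, an explicit construction, and verification of $|F_{d,k}^+|=f_{d,k}$ through the equality case of Lemma \ref{lem project}) is the same as the paper's, but the one design decision you fixed in advance --- always stacking along the $d$-th axis and always attaching the partial slice of size $k-b_{d,k}$ ``on top'' in that direction --- is precisely where the construction breaks, and the layer-size bookkeeping you defer cannot be repaired for it. In the paper's construction the extra copy of $F_{d-1,\,k-b_{d,k}}$ is glued onto the face of the near-cube $B_{d,k}$ in coordinate direction $l_{d,k}+1$, i.e.\ the direction in which the cuboid will be elongated next, and this direction cycles through all $d$ coordinates as $k$ grows; it is not a single fixed axis. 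With a fixed stacking axis and a single total order (which is what makes your nestedness ``automatic''), you cannot be at a near-cube at every checkpoint $k\in B_d$, and optimality of the cover is lost. Concretely, for $d=2$ your schedule reaches the $2\times 3$ box at $k=6$ and then, adding top slices, the $2\times 4$ box at $k=8$ (cover $15=f_{2,8}$, still consistent); but at $k=9$ every $9$-voxel superset of the $2\times 4$ box has cover at least $17$ (any added voxel contributes at least two new points to the cover), while $f_{2,9}=16$, attained only by the $3\times 3$ square. So at $k=9$ you must give up either $|F_{2,9}^+|=f_{2,9}$ or $F_{2,8}\subseteq F_{2,9}$. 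The correct intermediate sets between the $2\times 3$ and $3\times 3$ cuboids attach the partial slice on the side (the coordinate-$1$ face), which is exactly what the paper's rotating choice of direction $l_{d,k}+1$ encodes. (Under the alternative reading of your schedule, where the second voxel already opens a new layer, the same failure occurs even earlier, at $k=4$.)

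Once the attachment direction rotates, nestedness is no longer automatic; it becomes the genuinely nontrivial part of the proof. The paper establishes $F_{d,k}\subseteq F_{d,k+1}$ by a case split on whether $b_{d,k}<b_{d,k+1}$, using Lemma \ref{lem decompose b} to check that the face completed when passing a checkpoint is exactly the partial slice accumulated just before it, together with the induction hypothesis in dimension $d-1$. Your verification of the cover size via Lemma \ref{lem project} (convexity in the slicing direction, each slice an optimal $(d-1)$-dimensional set, the projection supplying the extra $f_{d-1}$ term) is the right tool and matches the paper, but it has to be applied with slicing direction $l_{d,k}+1$ rather than the fixed direction $d$; as written, your argument's central construction does not exist.
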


\begin{proof}
We use induction on $d$. For $d=1$, $F_{d,k} = \{1,\ldots, k\}$ satisfies $|F_{d,k}| = k$ and $|F^+_{d,k}| = k+1 = f_{d,k}$ by Lemma \ref{lem small d,k}. It is immediate that $F_{d,k} \subseteq F_{d,k'}$ if $k\leq k'$.

Suppose the Lemma holds for all smaller values of $d$. 
Define 
\[
B_{d,k} = \{v\in \mathbb{Z}^d\colon \textrm{$1\leq v_i \leq \lfloor k^{1/d} \rfloor+1$ for $1\leq i \leq l_{d,k}$, and $1\leq v_i \leq \lfloor k^{1/d} \rfloor$ for $l_{d,k}< i \leq d$}\},
\]
and
\[
F_{d,k} = B_{d,k} \cup \{(w_1, \ldots, w_{l_{d,k}}, \lfloor k^{1/d} \rfloor+1, w_{l_{d,k}+1}, \ldots, w_{d-1})\colon w \in F_{d-1, k-b_{d,k}}\}.
\]
Clearly, $|B_{d,k}| = b_{d,k}$ and $B^+_{d,k} = b^+_{d,k}$.
We have $|F_{d,k}| = b_{d,k} + (k-b_{d,k})=k$. Since 
\[
b-b_{d,k} < b_{d,k}/\lfloor k^{1/d} \rfloor = q_{d,k}^{d-l_{d,k}-1}(q_{d,k}+1)^{l_{d,k}},
\] 
by Lemma \ref{lem bdk}, we have 
\begin{eqnarray*}
F_{d-1, b_{d,k}} &\subset& F_{d-1, b_{d,k}/\lfloor k^{1/d}\rfloor} \\
&=& B_{d-1, b_{d,k}/\lfloor k^{1/d}\rfloor} \\
&=& \{v\in \mathbb{Z}^{d-1}\colon \textrm{$1\leq v_i \leq \lfloor k^{1/d} \rfloor+1$ for $1\leq i \leq l_{d,k}$, and $1\leq v_i \leq \lfloor k^{1/d} \rfloor$ for $l_{d,k}< i \leq d-1$}\} \\
&=& X_{l_{d,k}+1}(B_{d,k}).
\end{eqnarray*}
Therefore, by the induction hypothesis and Lemma \ref{lem project} we have $|F^+_{d,k}| = b^+_{d,k} + f_{d-1, k-b_{b,k}} = f_{d,k}$.

To show the inclusion it suffices to take $k'=k+1$. If $b_{d,k} < b_{d,k+1}$, then $F_{d,k+1} = B_{d,k+1} \supset B_{d,k}$, and $|B_{d,k+1} \setminus B_{d,k}| = k+1 - b_{d,k}$ by Lemma \ref{lem decompose b}. We have 
\begin{eqnarray*}
F_{d,k+1} \setminus B_{d,k} 
&=& \{(w_1, \ldots, w_{l_{d,k}}, \lfloor k^{1/d} \rfloor+1, w_{l_{d,k}+1}, \ldots, w_{d-1})\colon w \in B_{d-1, k + 1 - b_{d,k}}\} \\
&=& \{(w_1, \ldots, w_{l_{d,k}}, \lfloor k^{1/d} \rfloor+1, w_{l_{d,k}+1}, \ldots, w_{d-1})\colon w \in F_{d-1, k+1-b_{d,k}}\}.
\end{eqnarray*}
The same holds immediately from the definition if $b_{d,k} = b_{d,k+1}$. The result that $F_{d,k} \subset F_{d,k+1}$ now follows from the induction hypothesis.
\end{proof}

\begin{lemma} \label{lem exist corrolary}
Let $d>1$, and suppose we have $k_1, \ldots, k_m\geq 0$ with $k_0=\max_{1\leq i\leq m} k_i$. Then there exists $V \subset \mathbb{Z}^d$ such that $|V| = \sum_{i=1}^m k_i$ and $|V_+| = \sum_{i=0}^m f_{d-1, k_i}$. 
\end{lemma}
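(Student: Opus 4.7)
The plan is to build $V$ by stacking $(d{-}1)$-dimensional witness sets along the last coordinate axis, arranging them in nondecreasing order of size so that the nesting property of Lemma \ref{lem exist} automatically controls the overlap between adjacent covers.

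First I would invoke Lemma \ref{lem exist} in dimension $d{-}1$ to obtain sets $F_{d-1,k}\subset\mathbb{Z}^{d-1}$ with $|F_{d-1,k}|=k$, $|F_{d-1,k}^+|=f_{d-1,k}$, and $F_{d-1,k}\subseteq F_{d-1,k'}$ whenever $k\leq k'$. Reindex so that $k_{\sigma(1)}\leq\cdots\leq k_{\sigma(m)}$; in particular $k_{\sigma(m)}=k_0$. Define
\[
V \;=\; \bigcup_{j=1}^{m}\bigl(F_{d-1,k_{\sigma(j)}}\times\{j\}\bigr)\subset\mathbb{Z}^{d}.
\]
The $m$ slices are pairwise disjoint since they live at distinct values of the last coordinate, so immediately $|V|=\sum_{j=1}^{m}k_{\sigma(j)}=\sum_{i=1}^{m}k_i$.

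Next I would compute $|V^+|$ slice by slice in the last coordinate. Writing $S_{d,j}(V)=\{w\in\mathbb{Z}^{d-1}:(w,j)\in V\}$ in the notation of Lemma \ref{lem project}, an easy unpacking of the positive-neighbor definition gives the identity
\[
S_{d,j}(V^+) \;=\; S_{d,j}(V)^{+}\,\cup\, S_{d,j-1}(V)^{+},
\]
because $(w,j)=u+e$ with $u\in V$ and $e\in\{0,1\}^{d}$ forces $u_d\in\{j-1,j\}$, after which the first $d-1$ coordinates of $e$ exhibit $w$ as a positive neighbor in $\mathbb{Z}^{d-1}$ of the corresponding slice. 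With our construction $S_{d,j}(V)=F_{d-1,k_{\sigma(j)}}$ for $1\leq j\leq m$ and empty otherwise, so the inclusion $F_{d-1,k_{\sigma(j-1)}}^{+}\subseteq F_{d-1,k_{\sigma(j)}}^{+}$ collapses each intermediate union to the cover of the larger slice. Thus $|S_{d,1}(V^+)|=f_{d-1,k_{\sigma(1)}}$, $|S_{d,j}(V^+)|=f_{d-1,k_{\sigma(j)}}$ for $2\leq j\leq m$, and $|S_{d,m+1}(V^+)|=f_{d-1,k_{\sigma(m)}}=f_{d-1,k_0}$, while all other slices are empty. Summing gives
\[
|V^+| \;=\; \sum_{j=1}^{m} f_{d-1,k_{\sigma(j)}} \;+\; f_{d-1,k_0} \;=\; \sum_{i=0}^{m} f_{d-1,k_i},
\]
which is the claim.

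The one step that needs real care is the slicing identity $S_{d,j}(V^+)=S_{d,j}(V)^{+}\cup S_{d,j-1}(V)^{+}$: this is what transfers the $(d{-}1)$-dimensional nesting structure into a $d$-dimensional cover count, and it is precisely why an extra summand $f_{d-1,k_0}$ appears (the topmost empty layer at height $m+1$ still inherits the cover of the largest slice). Everything else is bookkeeping with the inclusion $F_{d-1,k}\subseteq F_{d-1,k'}$.
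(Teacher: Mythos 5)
Your construction is correct, and it is essentially the paper's construction: stack the witness sets $F_{d-1,k_i}$ from Lemma \ref{lem exist} as parallel slices along one coordinate axis, ordered monotonically in size so that the nesting $F_{d-1,k}\subseteq F_{d-1,k'}$ controls how the covers of adjacent slices interact. The only real difference is in how you count $|V^+|$. The paper orders the slices in \emph{decreasing} size along the first axis, observes that the nesting makes $V$ convex in that direction, and then invokes the equality case of Lemma \ref{lem project}, so the extra summand $f_{d-1,k_0}$ arises as the projection term $|X_1(V)^+|$. You instead order the slices increasingly, prove directly the slice identity $S_{d,j}(V^+)=S_{d,j}(V)^+\cup S_{d,j-1}(V)^+$ (which is a correct and easy consequence of the definition of positive neighbors), and let the nesting collapse each union, with the extra $f_{d-1,k_0}$ appearing as the spill-over layer at height $m+1$. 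Your route is self-contained and avoids Lemma \ref{lem project} altogether (in effect re-deriving its equality case in this special convex situation), at the cost of reproving a small piece of machinery the paper already has; the paper's route is shorter given that lemma. Both are valid, and the edge cases (some $k_i=0$, giving empty slices at the bottom of your ordering) cause no trouble in either version.
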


\begin{proof}
Without loss of generality, assume $k_1 \geq \ldots \geq k_m$. For $i=1,\ldots, m$, let $V_i = \{(i, v)\colon v\in F_{d-1, k_i}\}$, where $F_{d,k}$ is defined in Lemma \ref{lem exist}, and let $V = \bigcup{i=1}^m V_i$. Then $|V| = \sum_{i=1}^m k_i$ by Lemma \ref{lem exist}. Moreover, since  $F_{d-1, k_i} \subseteq F_{d-1, k_1}$ for $1\leq i\leq m$, we have $X_1(V) = V_1$, and $V$ is convex in $(1,0,\ldots,0)$. By Lemma \ref{lem project} and Lemma \ref{lem exist},
\[
|V^+| = |X_1(V)| + \sum_{i=1}^m |F_+{d-1, k_i}| = \sum_{i=0}^m f_{d-1, k_i}.
\]
\end{proof}

\begin{lemma} \label{lem gdk fdk}
We have $g_{d,k} = \min\{|V^+|\colon V \subset \mathbb{Z}^d, |V|=k\} = f_{d,k}$. 
\end{lemma}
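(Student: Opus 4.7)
The plan is to prove the two inequalities separately.

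For the direction $g_{d,k} \leq f_{d,k}$, I would invoke Lemma \ref{lem exist} directly: it constructs a specific voxel set $F_{d,k} \subset \mathbb{Z}^d$ with $|F_{d,k}| = k$ and $|F^+_{d,k}| = f_{d,k}$, which immediately witnesses $g_{d,k} \leq f_{d,k}$.

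For the direction $g_{d,k} \geq f_{d,k}$, I would induct on $d$. The base case $d=1$ is immediate from Lemma \ref{lem small d,k}: for any $V \subset \mathbb{Z}$ with $|V|=k>0$, the element $\max\{v_1 \colon v \in V\} + 1$ lies in $V^+ \setminus V$, so $|V^+| \geq k+1 = f_{1,k}$. For the inductive step, fix $V \subset \mathbb{Z}^d$ with $|V|=k$ and slice along the first coordinate via Lemma \ref{lem project}:
\[
|V^+| \;\geq\; |X_1(V)^+| + \sum_{i \in D_1(V)} |S_{1,i}(V)^+|.
\]
Set $k_0 = |X_1(V)|$ and $k_i = |S_{1,i}(V)|$ for $i \in D_1(V)$. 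Since each slice embeds injectively into the projection, $k_0 \geq \max_i k_i$, and clearly $\sum_i k_i = k$. Applying the induction hypothesis in dimension $d-1$ to the projection and to each slice, I would obtain
\[
|V^+| \;\geq\; f_{d-1, k_0} + \sum_{i \in D_1(V)} f_{d-1, k_i}.
\]

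It then remains to show that the right-hand side is at least $f_{d,k}$ for every admissible tuple $(k_0; k_1, \ldots, k_m)$ satisfying $k_0 \geq \max_i k_i$ and $\sum_i k_i = k$. Here I would invoke Lemma \ref{lem merge} Statement 2 (in dimension $d-1$) iteratively: whenever the configuration is not canonical, pick a pair $(k_i, k_j)$ and replace it by $(m, n)$ with $m \in B_{d-1}$, $m \geq \max(n, k_i, k_j)$, and $m + n = k_i + k_j$. Each such substitution cannot increase $\sum_i f_{d-1, k_i}$, and repeated application drives the configuration toward the one corresponding to the recursive decomposition of $F_{d,k}$ in Lemma \ref{lem exist corrolary}: a dominating projection together with a near-cube slab of size $b_{d-1, b_{d,k}}$ and a residual slice of size $k - b_{d,k}$. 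At this terminal point the sum collapses to $b^+_{d,k} + f_{d-1, k - b_{d,k}} = f_{d,k}$.

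The main obstacle will be the last compression step. One must verify that at each stage an $m \in B_{d-1}$ with the required dominance property actually exists, that the dominance constraint $k_0 \geq \max_i k_i$ is preserved (or restored by swapping $k_0$ with a larger $k_i$, which again uses Statement 1 of Lemma \ref{lem merge}), and that the residual piece assembles correctly against the recursion $f_{d,k} = b^+_{d,k} + f_{d-1, k-b_{d,k}}$. This requires careful bookkeeping of the numerology of $b_{d,k}$ and $l_{d,k}$ via Lemma \ref{lem decompose b}, linking the structural decomposition of the canonical near-cube to the arithmetic decomposition encoded by $b_{d,k}$.
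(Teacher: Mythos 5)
Your first half matches the paper: the direction $g_{d,k}\leq f_{d,k}$ via Lemma \ref{lem exist}, the base case $d=1$, and the slicing step via Lemma \ref{lem project} plus the induction hypothesis in dimension $d-1$ are exactly the paper's opening moves. The divergence, and the gap, is in how you handle the resulting arithmetic claim that $f_{d-1,k_0}+\sum_i f_{d-1,k_i}\geq f_{d,k}$. Your plan is to compress the whole tuple by iterated pairwise applications of Lemma \ref{lem merge}, Statement 2, until it reaches the canonical configuration. As stated this is not a proof, and the obstacles are concrete: Statement 2 requires a target $m\in B_{d-1}$ with $m\geq\max(n,k_i,k_j)$ and $m+n=k_i+k_j$, and such a move can fail to make progress (e.g.\ when the larger entry is already in $B_{d-1}$ and the next element of $B_{d-1}$ exceeds $k_i+k_j$, the only admissible choice is the identity move); merges can create an entry exceeding the projection term $k_0$, so the dominance structure you rely on is not preserved without further argument; and nothing in the sketch guarantees that the process terminates at the canonical configuration (whose value is $f_{d,k}$) rather than stalling at some non-canonical tuple whose value has never been compared to $f_{d,k}$. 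You acknowledge this bookkeeping as ``the main obstacle,'' but it is precisely the combinatorial heart of the lemma, so leaving it at the level of a plausible compression heuristic leaves the proof incomplete.

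The paper avoids the multi-term compression entirely by a different device that your proposal misses: a double induction on $d$ \emph{and} $k$. Having sliced, it peels off only the smallest slice $n$, uses Lemma \ref{lem exist corrolary} to realize the remaining partial sum $\sum_{i=0}^{m-1}f_{d-1,n_i}$ as $|W^+|$ for an actual $d$-dimensional set $W$ with $|W|=k-n$, and then invokes the induction hypothesis on $k$ to get $\sum_{i=0}^{m-1}f_{d-1,n_i}\geq f_{d,k-n}$. This reduces everything to the two-term inequality $f_{d,k-n}+f_{d-1,n}\geq f_{d,k}$, which is then settled by the case analysis with Lemmas \ref{lem merge} and \ref{lem decompose b}. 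Crucially, that case analysis needs the quantitative constraint $n\leq k/(q+1)$ with $q=\lfloor k^{1/d}\rfloor$, which the paper obtains by first arguing that one may slice along a coordinate direction taking at least $q+1$ distinct values (handling separately the case where $V$ is a perfect $d$-cube). Your sketch contains no analogue of either ingredient — neither the induction on $k$ with the realization lemma, nor the slice-count bound — so the ``remains to show'' step would have to be proved from scratch, and the route you indicate for it does not yet work.
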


\begin{proof}
We use induction. If $k=0$, We have $0 \leq g_{d,0} \leq |\emptyset^+| = 0$, so $g_{0,k}=0=f_{d,k}$. Let $k>0$ and $d=1$. Let $v = \max \{v_1\colon v \in V\}$. Then $v+1 \in V^+\setminus V$. Therefore $1+k \leq g_{d,k} \leq \{1,\ldots,k\}^+ = k+1$. If $d=1$, we have $l_{d,k}=0$, so $f_{1,k}=k+1 =g_{1,k}$.

Choose, $k,d$ and assume the statement of the Lemma holds for all lower values of $d$ and $k$. 

Let $V$ be such that $|V|=k$. For $1\leq i \leq d$, let 
\[
m_i = \{j \in \mathbf{Z}\colon \textrm{$v_i=j$ for some $v\in V$}\},
\]
the number of coordinates in dimension $v_i$ that $V$ uses. Remark that if $\max_{1\leq i\leq d} m_i \geq q$, where $q= \lfloor k^{1/d} \rfloor$, or otherwise 
\[
|V| \leq \prod_{i=1}^d m_i \leq (\max_{1\leq i\leq d} m_i)^d < q^d \leq k.
\]
If $\max_{1\leq i\leq d} m_i = k^{1/d}$, then $V$ must be a $d$-cube of dimension $k^{1/d}$. Then $|V^+| = (k^{1/d} +1)^d = b^+_{d,k} = f_{d,k}$, and the result of the lemma follows immediately. In the rest of the proof, we can therefore assume that $\max_{1\leq i\leq d} m_i > q$

Remark also that $|V^+|$ is invariant by definition to a permutation of the dimensions, so without loss of generality we can assume that $m = m_1 \geq q+1$.

By Lemma \ref{lem project}  and the induction hypothesis on $d$ we have 
\[
|V^+| \geq |X_1(V)^+| + \sum_{i=1}^m |S_{1,i}(V)^+| \geq \sum_{i=0}^m f_{d-1,n_i},
\]
where $n_0 = |X_1(V)|$ and $n_i = |S_{1,i}(V)^+|$. Without loss of generality, let $n=n_m = \min_{1\leq i \leq d} n_i$. We have 
\[
k = |V| = \sum_{i=1}^{m} n_i \geq m n \geq (q+1) n,
\]
so $n \leq k/(q+1)$.

By Lemma \ref{lem exist corrolary} there exists a $W$ such that $|W| = \sum_{i=1}^{m-1} n_i$ and $|W^+| = \sum_{i=0}^{m-1} f_{d-1,n_i}$. By the induction hypothesis on $k$ we have
\[
|V^+| \geq |W^+| + f_{d-1,n} \geq f_{d, k-n}+ f_{d-1,n} =
b^+_{d, k-n} + f_{d-1, k-n-b_{d, k-n}} + f_{d-1,n}.
\]
If $b_{d,k-n} = b_{d,k}$, we have, by Lemma \ref{lem merge},
\[
|V^+| \geq b^+_{d, k} + f_{d-1, k-n-b_{d,k}} + f_{d-1,n} \geq b^+_{d, k} + f_{d-1, k-b_{d, k}} = f_{d,k}.
\]
Suppose $b_{d,k-n} < b_{d,k}$. Then, by Lemma \ref{lem decompose b},
\[
k-n \geq b_{d,k} - b_{d,k}/(q+1) \geq b_{d,k} - b_{d,k}/(q'+1) = b_{d,k} - c_{d,k} = b_{d,k-1},
\]
so $b_{d,k-n} \geq b_{d,k} - c_{d,k}$. Then $k-n < b_{d,k}$, so that, also by Lemma \ref{lem decompose b},
\[
k-n - b_{d,k-n} < b_{d,k} - (b_{d,k} - c_{d,k}) = c_{d,k}.
\]
Moreover, $n \leq k/(q+1) \leq k/(q'+1) = c_{d,k}$, by Lemma \ref{lem decompose b}, where $q'=q'_{d,k}$ and $c_{d,k}$ are defined in that lemma. Finally, $k-b_{d,k-n} -c_{d,k} = k-b_{d,k}< c_{d,k}$ by definition of $b_{d,k}$. By Lemma \ref{lem merge} and Lemma \ref{lem decompose b}, we have
\[
|V^+| \geq b^+_{d, k-n} + f_{d-1, c_{d,k}} + f_{d-1,k-c_{d,k}-b_{d, k-n}} = b^+_{d,k} + f_{d-1,k-b_{d,k}} = f_{d,k}. 
\]
Since $|V^+| \geq f+{d,k}$ holds for all $V \subseteq \mathbb{Z}^d$ with $|V|=k$, it follows that $g_{d,k} \geq f_{d,k}$.

By Lemma \ref{lem exist} there exists a $W \subseteq \mathbb{Z}^d$ such that $|W|=k$ and
\[
|W^+| = f_{d,k} = b^+_{d,k} + f_{d-1,k-b_{d,k}} = (q+1)(b_{d-1, b_{d,k}/q}) + f_{d-1,k-b_{d,k}},
\]
so $g_{d,k} \leq f_{d,k}$. Combining, we have $g_{d,k} = f_{d,k}$, as desired.
\end{proof}

\begin{replemma}{lem calculate rk}
If $k=0$, we have $r_k=1$. If $k>0$, we have 
\[
r_k = \min_{1\leq j \leq k} \frac{f_{d, j} - j}{f_{d, j}},
\]
where, $f_{1,k}=(k+1)\mathds{1}\{k>0\}$ and, for $d>1$, we have recursively \[ f_{d,k} = b^+_{d,k} + f_{d-1, k- b_{d,k}}.\] Here, 
\[
b_{d,k} = \big(\lfloor k^{1/d}\rfloor\big)^{d-l_{d,k}} \big(\lfloor k^{1/d}\rfloor+1\big)^{l_{d,k}},
\]
and 
\[
b^+_{d,k} = \big(\lfloor k^{1/d}\rfloor+1\big)^{d-l_{d,k}} \big(\lfloor k^{1/d}\rfloor+2\big)^{l_{d,k}},
\]
where 
\[
l_{d,k} = \Big\lfloor \frac{\log(k) - d\log(\lfloor k^{1/d}\rfloor)}{\log(\lfloor k^{1/d}\rfloor+1) - \log(\lfloor k^{1/d}\rfloor)} \Big\rfloor.
\]
\end{replemma}

\begin{proof}
Combine Lemma \ref{lem rewrite rk} with Lemma \ref{lem gdk fdk}.
\end{proof}

\subsection{Proof of Lemma \ref{lem prune stop}}

The following lemma generalizes Lemma \ref{lem -+}.

\begin{lemma} \label{lem prune}
$V^{(i)} \subseteq V$.
\end{lemma}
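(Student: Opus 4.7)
The plan is to prove $V^{(i)} \subseteq V$ by a short induction on $i$, leveraging Lemma \ref{lem -+} (namely $(W^-)^+ \subseteq W$ for any $W$) as the essential ingredient. The base case $i=0$ is trivial since no operations have been applied, and the case $i=1$ is exactly Lemma \ref{lem -+}.

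The key preliminary I would establish first is the monotonicity of the cover: if $A \subseteq B$, then $A^+ \subseteq B^+$. This is immediate from the definition $V^+ = \bigcup_{v\in V}\{v\}^+$, since enlarging the index set only adds more building blocks to the union.

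For the inductive step, rather than relate $V^{(i+1)}$ to $V$ directly, it is cleaner to prove the stronger chain $V^{(i+1)} \subseteq V^{(i)}$ and telescope. Let $W$ denote the voxel set obtained by applying $(\cdot)^-$ to $V$ exactly $i$ times, and write $W^{+j}$ for $j$ successive applications of $(\cdot)^+$. Then $V^{(i)} = W^{+i}$ and $V^{(i+1)} = (W^-)^{+(i+1)} = ((W^-)^+)^{+i}$. Applying Lemma \ref{lem -+} to $W$ gives $(W^-)^+ \subseteq W$; applying the cover operation $i$ more times and invoking monotonicity yields $((W^-)^+)^{+i} \subseteq W^{+i}$, i.e., $V^{(i+1)} \subseteq V^{(i)}$. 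Telescoping from $V^{(0)} = V$ gives $V^{(i)} \subseteq V$ for every $i$.

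I do not expect any serious obstacle here. The only delicate point is the bookkeeping of exponents: since $V^{(i)}$ is defined as $i$ minuses followed by $i$ pluses rather than alternating $(V^-)^+$ compositions, the argument works precisely because one can peel off the innermost minus, combine it with its adjacent plus to invoke Lemma \ref{lem -+}, and then push the resulting inclusion through the remaining $i$ applications of $(\cdot)^+$ using monotonicity.
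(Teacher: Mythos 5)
Your proof is correct, and it takes a different route from the paper's. The paper argues directly at the level of coordinates: it observes that a voxel of $V^{(i)}$ has the form $w+e$ with $e\in\{0,\ldots,i\}^d$ for some $w$ in the $i$-fold interior $W$, and that membership of $w$ in $W$ forces every such translate $w+e$ to lie in $V$; the inclusion then falls out in one line. You instead induct on $i$, peeling off the innermost minus--plus pair via Lemma \ref{lem -+} and pushing the resulting inclusion through the remaining $i$ cover operations using the (easily verified, but worth stating, as you do) monotonicity $A\subseteq B \Rightarrow A^+\subseteq B^+$. Your exponent bookkeeping is right: the $(i+1)$-fold interior of $V$ is $W^-$ with $W$ the $i$-fold interior, so $V^{(i+1)}=((W^-)^+)^{+i}\subseteq W^{+i}=V^{(i)}$, and telescoping gives the claim. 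Your argument is more abstract and modular -- it avoids the coordinate description of iterated covers and interiors entirely and yields the slightly stronger fact that the pruned sets are nested, $V^{(i+1)}\subseteq V^{(i)}$, which the paper never states. The paper's direct argument, on the other hand, produces the explicit translate description (every $w$ in the $i$-fold interior has all of $w+\{0,\ldots,i\}^d$ inside $V$), which is exactly the ingredient reused in the proof of Lemma \ref{lem prune stop}; with your version one would have to re-derive that counting fact separately there. Both proofs are complete and short.
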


\begin{proof}
Let $W = ((V^-)^{\cdots})^-$, where the interior operation $(\cdot)^-$ is done $i$ times. Choose $v \in V^{(i)}$. By definition of the cover there must be a $w \in W$ such that $v = w + e$ with $e \in \{0,\ldots, i\}^d$. By definition of the interior, $w\in W$ implies that every $u = w + e$ with $e \in \{0,\ldots, i\}^d \in V$, so in particular $v \in V$.
\end{proof}

\begin{replemma}{lem prune stop}
If $i \geq \lfloor |V|^{1/d}\rfloor$, then $V^{(i)} = \emptyset$.  
\end{replemma}

\begin{proof}
Let $W = ((V^-)^{\cdots})^-$, where the interior operation $(\cdot)^-$ is done $i$ times. If $W =\emptyset$, then $V^{(i)} = \emptyset$ and we are done. We will assume that $W \neq \emptyset$ and arrive at a contradiction. Let $w \in W$. Then $w+ e \in V^{(i)}$ for all $e \in \{0,\ldots, i\}^d$. Therefore $|V^{(i)}| \geq (i+1)^d > (|V|^{1/d})^d = |V|$, which contradicts Lemma \ref{lem prune}.
\end{proof}

\subsection{Proof of Theorem \ref{thm final}}

We first prove the relevant bound for $s_k(V)$.

\begin{lemma} \label{lem sqrt bound}
$\check s_k(V) \leq s_k(V)$
\end{lemma}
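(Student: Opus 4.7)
The plan is to bound each of the two pieces combined in the $\vee$ separately. Recall that
\[
\check s_k(V) = \mathds{1}\{\chi_V>k\} \vee \max \Big\{\big\lceil\underline  s_k\big(V^{(i)}\big)\big\rceil\colon i=0,1,\ldots, |V|^{1/d}\Big\},
\]
and $s_k(V)$ is a non-negative integer. It therefore suffices to show that (i) $\mathds{1}\{\chi_V>k\} \leq s_k(V)$ and (ii) $\lceil \underline s_k(V^{(i)}) \rceil \leq s_k(V)$ for every $i$ in the range of the max.

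For (i), if $\chi_V \leq k$ then the inequality holds trivially, since $s_k(V)\geq 0$. If $\chi_V > k$, then $R=\emptyset$ is not a $k$-separator of $V$ (because $\chi_{V\setminus \emptyset}= \chi_V > k$), so every $k$-separator has $|R|\geq 1$; hence $s_k(V) \geq 1 = \mathds{1}\{\chi_V>k\}$.

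For (ii), the key observation is that $V^{(i)}\subseteq V$ by Lemma \ref{lem prune}. I will use this to show the monotonicity $s_k(V^{(i)}) \leq s_k(V)$: if $R$ is an optimal $k$-separator of $V$ with $|R|=s_k(V)$, then $R':=R\cap V^{(i)}$ satisfies $V^{(i)}\setminus R' = V^{(i)}\setminus R \subseteq V\setminus R$, so Lemma \ref{lem monotone chi} gives $\chi_{V^{(i)}\setminus R'} \leq \chi_{V\setminus R}\leq k$, which means $R'$ is a $k$-separator of $V^{(i)}$; hence $s_k(V^{(i)}) \leq |R'| \leq |R| = s_k(V)$. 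Applying Theorem \ref{thm shortcut 2} to $V^{(i)}$ yields $\underline s_k(V^{(i)}) \leq s_k(V^{(i)})$, and since $s_k(V^{(i)})$ is an integer we may take the ceiling on the left to obtain $\lceil \underline s_k(V^{(i)})\rceil \leq s_k(V^{(i)}) \leq s_k(V)$, completing (ii).

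Combining (i) and (ii) shows that every argument of the $\vee$/max in the definition of $\check s_k(V)$ is at most $s_k(V)$, so $\check s_k(V) \leq s_k(V)$. There is no real obstacle here: both pieces are straightforward consequences of results already established, with the only mild subtlety being the casual use of the integrality of $s_k$ to absorb the ceiling.
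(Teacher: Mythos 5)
Your proof is correct and follows essentially the same route as the paper: bound the indicator term by noting $s_k(V)\geq 1$ when $\chi_V>k$, bound each $\underline s_k(V^{(i)})$ via Lemma \ref{lem prune}, the monotonicity of $s_k$ under inclusion (using Lemma \ref{lem monotone chi}) and Theorem \ref{thm shortcut 2}, and finish with integrality of $s_k$ to absorb the ceiling. Your explicit construction $R'=R\cap V^{(i)}$ just spells out the monotonicity step that the paper leaves implicit.
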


\begin{proof}
If $\chi_{V} >k$, then $s_k(V) >0$, as follows immediately from the definition of $s_k(V)$, so\[ \mathds{1}\{\chi_{V} >k\} \leq s_k(V). \] 

For any $i\geq 0$, by Lemma \ref{lem prune}, $V^{(i)} \subseteq C$. Therefore, by Theorem \ref{thm shortcut 2} and Lemma \ref{lem monotone chi},
\[
\underline s_k(V^{(i)}) \leq s_k(V^{(i)}) = \min\{|R|\colon \chi_{V^{(i)}\setminus R} \leq k\} \leq \min\{|R|\colon \chi_{V\setminus R} \leq k\} \leq s_k(V)
\]
Since $\mathds{1}\{\chi_{V} >k\}$ and $\underline s_k(V^{(i)})$ for $i=0, \ldots, |V|^{1/d}$ are all smaller than $s_k(V)$, so is their maximum. Since $s_k(V)$ is an integer, the result follows.
\end{proof}

\begin{reptheorem}{thm final}
For every $V \subseteq M$, let 
\[
\underline{\mathbf{a}}(V) = \sum_{i=1}^\mathbf{n} \check s_{k_M} (\mathbf{C}_i),
\]
where $\mathbf{C}_1,\ldots,\mathbf{C}_\mathbf{n}$ are disconnected clusters such that $\mathbf{C}_1 \cup \cdots \cup \mathbf{C}_\mathbf{n} = V \cap \mathbf{Z}$. Then, for all $\mathrm{P} \in\Omega$,
\[
\mathrm{P}(\textrm{$\underline {\mathbf{a}}(V) \leq a_\mathrm{P}(V)$ for all $V \subseteq M$}) \geq 1-\alpha.   
\]
\end{reptheorem}

\begin{proof}
By Lemma \ref{lem sqrt bound}, Lemma \ref{lem decompose clusters}, and Theorem \ref{thm first TDP bound}, we have
$$
\underline{\mathbf{a}}(V) \leq \sum_{i=1}^{\mathbf{n}} s_{k_M}(\mathbf{C}_i) = s_{k_M}(\mathbf{C}_1 \cup \cdots \cup \mathbf{C}_{\mathbf{n}}) = s_{k_M}(V \cap \mathbf{Z}) = \check{\mathbf{a}}(V).
$$
By Theorem \ref{thm first TDP bound}, we therefore have, for all $\mathrm{P} \in\Omega$,
\[
\mathrm{P}(\textrm{$\underline {\mathbf{a}}(V) \leq a_\mathrm{P}(V)$ for all $V \subseteq M$}) \leq
\mathrm{P}(\textrm{$\check {\mathbf{a}}(V) \leq a_\mathrm{P}(V)$ for all $V \subseteq M$}) \geq 1-\alpha.   
\]
\end{proof}

\subsection{Proof of Lemma \ref{lemma:heurbounds}}

\begin{replemma}{lemma:heurbounds}
Let $k=n^d$ and $c$ be a vector of $d$ positive integers. If the dimensions of a hyperrectangle $R$ are $(n+1)c_i - 1$ for $i=1,\ldots,d$, then the bound of Theorem \ref{thm shortcut 2} is exact, so that the optimal $k$-separator of $R$ has $|R|-n^d\Pi c_i$ voxels.
\end{replemma}

\begin{proof}
We first infer a $k$-separator $K$ of the size $|R|-n^d\Pi c_i$. Assume, the $\min(R)=1$ (as a vector). A voxel is an element of $K$ if and only if there are $i$ and $j \in \{1,2,\dots,c_i-1\}$ such that 
the $i$-th coordinate of the voxel is equal to $(n+1)j$. Then, $R \setminus K$ consists of $\Pi c_i$ separated $d$-cubes of the size $k=n^d$. Thus, $K$ is a $k$-separator of $R$. This completes the first part of the proof. 

Next, we show that the separator is optimal. It follows from Lemma~\ref{lem calculate rk}, that in our setting when $k=n^d$, $l_{d,k}=0$ and $f(d,k)=(n+1)^d$. Moreover, it follows from the definition of $r_k$ in Lemma~\ref{lem calculate rk} that $\frac{f_{d,j}-j}{f_{d,j}}$ is minimal when $j=k$. Thus, $r_k=r_{n^d}=\frac{(n+1)^d-n^d}{(n+1)^d}$. To complete the proof, it is sufficient to show that  
\[ \frac{(n+1)^d-n^d}{(n+1)^d}|R^+|-|R^+\setminus R| = |R| - n^d \Pi c_i. \]
The left-hand side of the above equation is the lower bound from Theorem~\ref{thm shortcut 2}, while the right hand side is the size our separator set. 
The rest follows by easy transformations, by  applying $|R^+|=\Pi (n+1)c_i = (n+1)^d \Pi c_i$ and $|R^+\setminus R|=|R^+|-|R|$.  

Since the lower bound is reached, the $k$-separator $K$ is optimal. This completes the proof.
\end{proof}

\subsection{Proof of Theorem \ref{thm kM0}}

\begin{reptheorem}{thm kM0}
If $k_M=0$, then for all $V \subseteq M$ we have 
\[
\underline{\mathbf{a}}(V) = \check{\mathbf{a}}(V) = \mathbf{a}(V) = |V \cap \mathbf{Z}|.
\]
\end{reptheorem}

\begin{proof}
We will first show that $\mathbf{a}(V) = \check{\mathbf{a}}(V)$, then that $\check{\mathbf{a}}(V) = |V \cap \mathbf{Z}|$ and $\underline{\mathbf{a}}(V) = |V \cap \mathbf{Z}|$.

From (\ref{def local test}) we have, for all $V \subseteq M$,  
\[
\boldsymbol{\phi}_V = \mathds{1}\{\chi_{V\cap\mathbf{Z}} > 0\} = \mathds{1}\{|V\cap\mathbf{Z}| > 0\}.
\]
Therefore, $\boldsymbol{\phi}_V \leq \boldsymbol{\phi}_W$ if $V \subseteq W$, and we have, for all $V \subseteq M$,
\[
\boldsymbol{\psi}_V = \min\{\boldsymbol{\phi}_W\colon V \subseteq W \subseteq M\}  = \boldsymbol{\phi}_V. 
\]
Therefore, for all $V \subseteq M$, $\underline{\boldsymbol{\psi}}_V = \mathds{1}\{\chi_{V\cap\mathbf{Z}} > 0\} = \boldsymbol{\psi}_V$, so that 
$\mathbf{a}(V) = \check{\mathbf{a}}(V)$.

Now by Theorem \ref{thm first TDP bound} we have $\check{\mathbf{a}}(V) = s_0(V \cap \mathbf{Z})$, and for any $W \subseteq M$, $s_0(W) = \min\{|R|\colon \chi_{W \setminus R} = 0\} = |W|$, so $\check{\mathbf{a}}(V) = |V \cap \mathbf{Z}|$.

If $k = 0$, then $r_k = 1$ by definition of $r_k$, so $\underline s_0(V) = |V^+| - |V^+\setminus V| = |V|$. It follows from Theorem \ref{thm final} that, if $\mathbf{C}_1, \ldots, \mathbf{C}_\mathbf{n}$ are defined as in that theorem, then 
\[
\underline{\mathbf{a}}(V) \geq \sum_{i=1}^\mathbf{n} \underline s_0(\mathbf{C_i}) = 
\sum_{i=1}^\mathbf{n} |\mathbf{C_i}| = |V \cap \mathbf{Z}|.
\]
We have, for all $V \subseteq M$,
\[
|V \cap \mathbf{Z}| \leq \underline{\mathbf{a}}(V) \leq \mathbf{a}(V) = \check{\mathbf{a}}(V) = |V \cap \mathbf{Z}|,
\]
so we must have equality and the statement of the theorem follows.
\end{proof}

\subsection{Proof of Theorem \ref{thm bound rk}}

We first prove an upper bound on $\underline s_k(V)$.

\begin{lemma} \label{lem upper bound of lower bound}
We have 
\[
\underline s_k(V) \leq \frac{r_k - r_{|V|}}{1- r_{|V|}} \cdot |V|.
\]
\end{lemma}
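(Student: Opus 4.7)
The plan is to rewrite $\underline s_k(V)$ so that the claim reduces to a direct inequality between $|V^+|$ and $|V|$, and then invoke the known relationship between $r_j$ and the extremal function $f_{d,j}$.

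First, I would note that $|V^+\setminus V| = |V^+| - |V|$ (since $V\subseteq V^+$), so that
\[
\underline s_k(V) = r_k|V^+| - |V^+\setminus V| = (r_k - 1)|V^+| + |V|.
\]
Plugging this into the claimed inequality and rearranging (using $r_k \leq 1$, and disposing of the trivial case $r_k = 1$), the statement to prove is equivalent to
\[
|V^+| \;\geq\; \frac{|V|}{1-r_{|V|}}.
\]

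Next, I would invoke the description of $r_j$ from Lemma \ref{lem rewrite rk} combined with Lemma \ref{lem gdk fdk}. Taking $j = |V|$ in the minimum defining $r_{|V|}$ gives
\[
r_{|V|} \;\leq\; \frac{f_{d,|V|} - |V|}{f_{d,|V|}},
\]
equivalently $1 - r_{|V|} \geq |V|/f_{d,|V|}$. By Lemma \ref{lem gdk fdk}, $f_{d,|V|}$ equals the minimum of $|W^+|$ over all $W\subset \mathbb{Z}^d$ with $|W|=|V|$, so in particular $|V^+|\geq f_{d,|V|}$. Multiplying these two inequalities gives
\[
(1-r_{|V|})\,|V^+| \;\geq\; \frac{|V|}{f_{d,|V|}} \cdot f_{d,|V|} \;=\; |V|,
\]
which is the required bound, and the lemma follows.

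The only subtle step is the reduction in the first paragraph: once one sees that the claim is just an upper bound on the ratio $|V|/|V^+|$, the rest is a direct substitution from already established results. No real obstacle remains, since both pieces (the inequality $r_{|V|} \leq (f_{d,|V|} - |V|)/f_{d,|V|}$ and the minimality $|V^+| \geq f_{d,|V|}$) are immediate consequences of earlier lemmas.
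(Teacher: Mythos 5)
Your proposal is correct and follows essentially the same route as the paper's proof: both rewrite $\underline s_k(V)=|V|-(1-r_k)|V^+|$, bound $|V^+|\geq f_{d,|V|}$ via the extremal characterization (Lemma \ref{lem gdk fdk}), and use $1-r_{|V|}\geq |V|/f_{d,|V|}$ from the definition of $r_{|V|}$ as a minimum; the only cosmetic difference is that you rearrange the target inequality first while the paper chains the bounds forward, and both handle the $k=0$ (i.e.\ $r_k=1$) case separately.
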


\begin{proof}
If $k=0$, we have $r_k=1$, so the statement of the Lemma reads $\underline s_k(V) \leq |V|$, which follows immediately from the definition. Let $k>0$. We have 
\begin{eqnarray*}
\underline s_k(V) &=& r_k \cdot |V^+| - |V^+ \setminus V| \\
&=& r_k \cdot |V^+| - |V^+| + |V| \\
&=& |V| - (1-r_k) \cdot |V^+| \\
&\leq& |V| - (1-r_k) \min\{|W^+|\colon W \subseteq \mathbb{Z}^d, |W| = |V|\} \\
&=& |V| - (1-r_k) f_{|V|},
\end{eqnarray*}
where $f_k = f_{d,k}$ is defined in Lemma \ref{lem calculate rk}, and we suppress the dependence on $d$ here. We have
\[
f_k = \frac{k}{1-\frac{f_k - k}{f_k}} \geq 
\frac{k}{1-\min_{1\leq j \leq k} \frac{f_j - j}{f_j}} = \frac{k}{1-r_k},
\]
so that 
\[
\underline s_k(V) \leq \Big(1 - \frac{1-r_k}{1-r_{|V|}}\Big) \cdot |V|,
\]
which rewrites to the statement of the Lemma.
\end{proof}

\begin{reptheorem}{thm bound rk}
For every cluster $\mathbf{C} \subseteq \mathbf{Z}$, we have 
\[
\underline{\mathbf{a}}(\mathbf{C}) \leq \Big\lceil \frac{r_k-r_{|\mathbf{C}|}}{1-r_{|\mathbf{C}|}} \cdot |\mathbf{C}| \Big\rceil \vee \mathds{1}\{|\mathbf{C}| > k\}.
\]
\end{reptheorem}

\begin{proof}
Choose any $i\geq 0$. By lemma \ref{lem prune}, $V^{(i)} \subseteq V$. By Lemma \ref{lem upper bound of lower bound}, we have, since $r_k$ is decreasing in $k$ by definition,
\[
\underline s_k(V^{(i)}) \leq \Big(1 - \frac{1-r_k}{1-r_{|V^{(i)}|}}\Big) \cdot |V^{(i)}| \leq \Big(1 - \frac{1-r_k}{1-r_{|V|}}\Big) \cdot |V|. 
\]
Therefore
\[
\check s_k(V) \leq \Big\lceil\Big(1 - \frac{1-r_k}{1-r_{|V|}}\Big) \cdot |V| \Big\rceil \vee \mathds{1}\{\chi_V > k\}.
\]
The result of the Proposition now follows directly from Theorem \ref{thm final}, remarking that if $\mathbf{C}$ is a cluster, that $\mathds{1}\{\chi_{\mathbf{C}} > k\} = \mathds{1}\{|C| > k\}$.
\end{proof}

\subsection{Proof of Lemma \ref{lem tilde r}}

\begin{replemma}{lem tilde r}
We have $s_k(V) \leq \tilde r_k \cdot |V|$, where $\tilde r_k = (b^+_{d,k} - b_{d,k})/b^+_{d,k}$.
\end{replemma}

\begin{proof}
By definition of $b^+_{d,k}$ and $b_{d,k}$ there are integers $q_1, \ldots, q_d$ such that $b_{d,k} = q_1q_2\cdots q_d \leq k$ and $b_{d,k} = (q_1+1)(q_2+1)\cdots (q_d+1)$. Let $R'_1 \subseteq \mathds{Z}^d$ be the set for which the $i$th coordinate is divisible by $q_i+1$, for $i=1,\ldots, d$. Then $R_1'$ is a $k$-separator of $\mathds{Z}^d$, so that $R_1 = R_1' \cap V$ is a $k$-separator of $V$. Let $R'_2, \ldots, R_{b^+-{d,k}}'$ be analogously defined as all translations of $R_1'$ by $\{0, \ldots, q_1\} \times \cdots \times \{0, \ldots, q_d\}$, and define $R_2, \ldots, R_{b^+-{d,k}}$ analogously. 

For every $v \in V$, there are exactly $b^+_{d,k} - b_{d,k}$ sets $i \in \{1\,ldots,b^+_{d,k}\}$ for which $v \in R'_i$, so $v \in R_i$ and $b_{d,k}$ for which it is in $\mathds{Z}^d \setminus R_i'$, so $v \in V \setminus R_i$. We have
\[
\sum_{i=1}^{b^+_{d,k}} |R_i| = (b^+_{d,k} - b_{d,k}) \cdot |V|.
\]
It follows that there exists an $R_i$ for which 
$|R_i| \leq (b^+_{d,k} - b_{d,k})/b^+_{d,k} \cdot |V|$. Since $R_i$ is a $k$-separator of $V$, we have $s_k(V) \leq \tilde r_k \cdot |V|$.
\end{proof}

\subsection{Proof or Theorem \ref{thm bound aV}}

We prove a slightly tighter bound in Lemma  \ref{lem step down}. We first define this bound. 
Let $\mathbf{Z} \cap M = \mathbf{C}_1 \cup \cdots \cup \mathbf{C}_{\mathbf{n}}$, where $\mathbf{C}_1, \ldots, \mathbf{C}_{\mathbf{n}}$ are disconnected clusters. Let $J_0 = \emptyset$, and for $j=1,2,\ldots$, let
\[
J_{j+1} = \{1\leq i \leq \mathbf{n}\colon |\mathbf{C}_i| > k_{M\setminus \mathbf{D}_j}\}, \]
and $\mathbf{D}_i = \bigcup_{j\in J_i} \mathbf{C}_j.$
Define $\mathbf{D} = \lim_{i\to\infty} \mathbf{D}_i$. To obtain $\mathrm{D}$, therefore, we find all significant clusters according to classical cluster-extent thresholding, update the cluster extent threshold by removing those clusters from the mask, and iterate. This procedure does not have the TDP guarantee (\ref{eq simultaneous tilde}) unless $k_M = 0$. Lemma \ref{lem step down} says that the closed testing procedure is at most as powerful as this anti-conservative procedure.

\begin{lemma} \label{lem step down}
Let $\overline{\mathbf{a}}(V) = s_{\underline{k}_{M\setminus \mathbf{D}}}(V\cap \mathbf{Z})$, then, for every $V \subseteq M$,
\[
\mathbf{a}(V) \leq \overline{\mathbf{a}}(V).
\]
\end{lemma}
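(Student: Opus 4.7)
\medskip
\noindent\textbf{Proof plan.}
My plan is to exhibit, for each $V\subseteq M$, an explicit set $W\subseteq V$ with $\boldsymbol{\psi}_W=0$ and $|V\setminus W|=\overline{\mathbf{a}}(V)$, which by the definition of $\mathbf{a}(V)$ immediately yields the claim. The natural candidate is $W=V\setminus R^*$, where $R^*$ is an optimal $k_{M\setminus\mathbf{D}}$-separator of $V\cap\mathbf{Z}$; since $R^*\subseteq V\cap\mathbf{Z}\subseteq V$ we then have $|V\setminus W|=|R^*|=\overline{\mathbf{a}}(V)$, so all the work goes into verifying $\boldsymbol{\psi}_W=0$.

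To witness $\boldsymbol{\psi}_W=0$ I would produce a single $U$ with $W\subseteq U\subseteq M$ and $\boldsymbol{\phi}_U=0$, and the candidate is $U=W\cup(M\setminus\mathbf{D})$. Monotonicity (\ref{ass h monotone}) then gives $k_U\geq k_{M\setminus\mathbf{D}}$, so it suffices to show $\chi_{U\cap\mathbf{Z}}\leq k_{M\setminus\mathbf{D}}$. The first step here is the easy observation that, at the convergence of the step-down sequence defining $\mathbf{D}$, the set $J$ stabilizes, so every cluster $\mathbf{C}_i$ with $i\notin J$ satisfies $|\mathbf{C}_i|\leq k_{M\setminus\mathbf{D}}$ by definition of $J_{j+1}$.

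The heart of the argument is decomposing $U\cap\mathbf{Z}$ into connected pieces. I would use that $\mathbf{D}$ is a union of whole clusters of $M\cap\mathbf{Z}$ and that the $\mathbf{C}_i$ are pairwise disconnected, so that any cluster inside $U\cap\mathbf{Z}$ is contained in a single $\mathbf{C}_i$. For $i\notin J$ the full cluster $\mathbf{C}_i\subseteq M\setminus\mathbf{D}\subseteq U$ is present, but has size $\leq k_{M\setminus\mathbf{D}}$ by the previous observation. For $i\in J$ one has $\mathbf{C}_i\cap U=\mathbf{C}_i\cap(V\setminus R^*)$, which is a subset of $(V\cap\mathbf{Z})\setminus R^*$; since $R^*$ is a $k_{M\setminus\mathbf{D}}$-separator of $V\cap\mathbf{Z}$, every connected component here also has size $\leq k_{M\setminus\mathbf{D}}$. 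Taking maxima across the two cases gives $\chi_{U\cap\mathbf{Z}}\leq k_{M\setminus\mathbf{D}}\leq k_U$, and the proof is complete.

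The main obstacle I expect is the last step: careful bookkeeping to ensure that enlarging $W$ by $M\setminus\mathbf{D}$ does not reintroduce a large cluster that neither $R^*$ nor the step-down removal of $\mathbf{D}$ has controlled. The cleanest way to handle this is to classify connected pieces of $U\cap\mathbf{Z}$ by which $\mathbf{C}_i$ they sit in and argue separately for $i\in J$ and $i\notin J$, leaning on the disconnectedness of the $\mathbf{C}_i$ so that the two reasons (cluster smallness and separator smallness) never need to be combined within a single connected component.
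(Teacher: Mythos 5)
Your proposal is correct and in substance the same as the paper's proof: the paper likewise works with the superset $V\cup(M\setminus\mathbf{D})$, uses that every cluster of $M\cap\mathbf{Z}$ of size exceeding $k_{M\setminus\mathbf{D}}$ is contained in $\mathbf{D}$ (the fixed-point property of the step-down construction) together with the pairwise disconnectedness of the $\mathbf{C}_i$, and then converts the resulting domination $\boldsymbol{\psi}_V\leq\mathds{1}\{\chi_{V\cap\mathbf{Z}}>k_{M\setminus\mathbf{D}}\}$ into the bound on $\mathbf{a}(V)$ exactly as in the proof of Theorem \ref{thm first TDP bound}. Your packaging as a direct witness construction ($W=V\setminus R^*$ with $U=W\cup(M\setminus\mathbf{D})$ and an explicit case analysis over components) rather than the paper's argument by contradiction is only a presentational difference.
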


\begin{proof}
Choose any $V \subseteq M$. Define $\overline{\boldsymbol{\psi}}_V = \mathds{1}\{\chi_{V \cap \mathbf{Z}} > k_{M\setminus\mathbf{D}} \}.$ We will show that $\overline{\boldsymbol{\psi}}_V \geq \boldsymbol{\psi}_V$ by contradiction. Suppose that $\overline{\boldsymbol{\psi}}_V=0$ and $\boldsymbol{\psi}_V=1$. Define $\mathbf{W}= V \cup (M \setminus \mathbf{D})$. Since $W \supseteq V$ and $\boldsymbol{\psi}_V=1$, we have $\chi_{W\cap \mathbf{Z}} > k_W \geq k_{M\setminus \mathbf{D}}$. The largest cluster in $W\cap \mathbf{Z}$ is therefore a subset of a cluster of $M \cap \mathbf{Z}$ of size at least $k_{M\setminus \mathbf{D}}$. All such clusters are in fully contained in $\mathbf{D}$ by definition of $\mathbf{D}$. Therefore, the largest cluster of $W\cap \mathbf{Z}$ is also a cluster of $V\cap \mathbf{Z}$. Therefore, since $\overline{\boldsymbol{\psi}}_V=0$,
\[
\chi_{W\cap \mathbf{Z}} \leq \chi_{V\cap \mathbf{Z}} \leq k_{M\setminus \mathbf{D}} \leq k_{V \cup (M\setminus \mathbf{D})} = k_W,
\]
whence $\boldsymbol{\psi}_V=0$ since $V \subseteq W \subseteq M$, and we have a contradiction.

Starting from $\overline{\boldsymbol{\psi}}_V \geq \boldsymbol{\psi}_V$, the rest of the proof is completely analogous to the proof of Theorem \ref{thm first TDP bound}.
\end{proof}

\begin{reptheorem}{thm bound aV}
Let $\overline{\mathbf{a}}(V) = s_{k_{M\setminus \mathbf{Z}}}(V\cap \mathbf{Z})$, then, for every $V \subseteq M$,
\[
\mathbf{a}(V) \leq \overline{\mathbf{a}}(V).
\]
\end{reptheorem}

\begin{proof}
This is an immediate consequence of Lemma \ref{lem step down} if we remark that $\mathbf{D} \subseteq \mathbf{Z}$, so $k_{M \setminus \mathbf{D}} \geq k_{M \setminus \mathbf{Z}}$.
\end{proof}

\section{Heuristic algorithms to minimize $k$-separators: pseudocode}

In this section we give the pseudocode of the algorithms described in Section \ref{sec simul anneal}.

\newcommand\mycommfont[1]{\textcolor{gray}{#1}}
\SetCommentSty{mycommfont}

\subsection{Sampling algorithm to minimize separator sets.}
\SetKwInput{KwInput}{Input}
\SetKwInput{KwOutput}{Output}
\SetKw{Break}{break}
\SetKwFor{RepTimes}{repeat}{times}{end}
\SetKwFor{For}{for (}{)}{}
\begin{algorithm}[H]
\SetNoFillComment
\small
\KwInput{A set of voxels $V$, 
the number of runs $r$,
the number of candidate clusters $s$, 
allowed missing voxels $a$, 
allowed missing voxels step $A$.}
\KwOutput{a clustering $Z$ of $V$}
\DontPrintSemicolon
\SetKwFunction{gcc}{gencandidatecluster}
\SetKwProg{Fn}{function}{}{} 
\Fn{\gcc{Z}}{   
 \For{$l := k$;\ \ $l \geq k-a$;\ \  $l := l - A$}
 {
	$C:=\{v\}$, where $v$ is free random (i.e., non-cluster or non-separator) voxel\;		
	\lWhile{there is a free $w$ adjacent to a voxel in $C$ and $|C|<l$}{$C:=C \cup \{w\}$}
 }
 \Return the candidate cluster $C$ that minimizes the separator size of $Z \cup \{C\}$\;
 }

\tcp{Main procedure starts here.}
\RepTimes{$r$}{ 
    $Z:=\emptyset$\;
    \While{there is a free voxel that is not adjacent to a cluster}{
	Infer $s$ candidate clusters by \gcc(Z)\;
	$Z:=Z \cup \{C^*\}$, where $C^*$ is the candidate that minimizes the separator size.	
    }
}
\Return the clustering $Z$ with minimal separator set.
\caption{Initial clustering generator: phase 1}\label{alg:heuristic1}
\end{algorithm}

\SetKwRepeat{DoRepeat}{repeat}{until}

\begin{algorithm}[H]
\SetNoFillComment
\small
\DontPrintSemicolon
\KwInput{A clustering $Z$ of a voxel set $V$ with a separator set $X$,
the number of big runs $E$, the number of searches in a fixed subgraph $e$,
the number of neighbouring clusters $M$, 
and  time limit threshold $T$}
\KwOutput{A clustering $Z$ of $V$ with a separator set smaller than $|X|$ (if found)}
\RepTimes{until time limit $T$ is not reached or $E$}{
Take $G \subset Q$ from a connected component $Q$ of the whole voxel graph $V$ 
such that there is a set $A$ of at most $M$ adjacent clusters such that 
$G$ contains every voxel whose neighbours are only in $X \cup \bigcup A$.
\DoRepeat{there is no improvement in the last $e$ steps}
	{
    	   Clear the voxels from $G$\;
    	   Run phase I with $r=1$ to fill the clustering and update $Z$ if the score is improved\;
    }
}
\caption{Local optimization: phase 2}\label{alg:heuristic2}
\end{algorithm}

The algorithm consists of two phases: inferring an initial clustering, and improving regions consisting of a small number of neighbouring clusters. In the first phase, the algorithm starts from an empty clustering. It generates $s$ candidate clusters, where $s$ is a small integer, usually between 1 and 10. Each candidate cluster is created starting from a randomly chosen available voxel by a sequence of insertions of adjacent voxels such that the induced size of its separator is kept small. Then, the best candidate cluster, i.e., the cluster with the separator's minimal size, is inserted into the current clustering. The procedure is repeated until there is no space to insert a new cluster. The second phase consists of repetitions of local improvements. The algorithm randomly takes a small number of neighbouring clusters, removes them from the current clustering, and applies a procedure similar to the first phase to find a better setting of clusters.
In testing, we used the following sets of parameters:
\begin{itemize}
    \item For the initial clustering in phase I: $r=1000$, $s=5$, $a=10$, $A=2$.
    \item For improving a given clustering in phase II: $T=120$ seconds, $s \in \{3 \dots 5\}$, 
       $a \in \{ k, \lfloor \frac{k}{2} \rfloor, \lfloor \frac{k}{3} \rfloor \}$ for $k<20$, and $a \in \{ \lfloor \frac{k}{10} \rfloor, \lfloor \frac{k}{20} \rfloor \}$ for larger $k$'s,
       $A \in \{3, 4\}$, $e=3$, $E=+\infty$, $M\in \{2,\dots,7\}$.
\end{itemize}
The algorithm is implemented in $C$ and allows 
fast inference of  clusterings with acceptable sizes of separator sets. 

\subsection{Simulated Annealing}

\begin{samepage}

Simulated Annealing (SA), \autoref{alg:sa}, is applied to the best effort clustering result found with the heuristic two-phase algorithm explained in \autoref{alg:heuristic1} and \autoref{alg:heuristic2}. The heuristic algorithm finds a good $k$-separator on $V$ upon which SA attempts to improve by $V+$ tiling. By moving voxels around in tiles or newly created tiles the algorithm stages a proposal with the corresponding target $t'$. This proposal can lead to an improved, no change or worse state. A proposal may be rejected if it leads to a worse state but not necessarily, this to allow exploring other minima in vicinity. The condition $U(0,1) < f$ with $f: 1/i^{((t'-t)/tp)}$ controls accepting bad proposals, with $i$ the iteration, $tp$ the tuning parameter and $U(0,1)$ the continuous uniform distribution.

\begin{figure}[h]
\includegraphics[width=5in, height=3in]{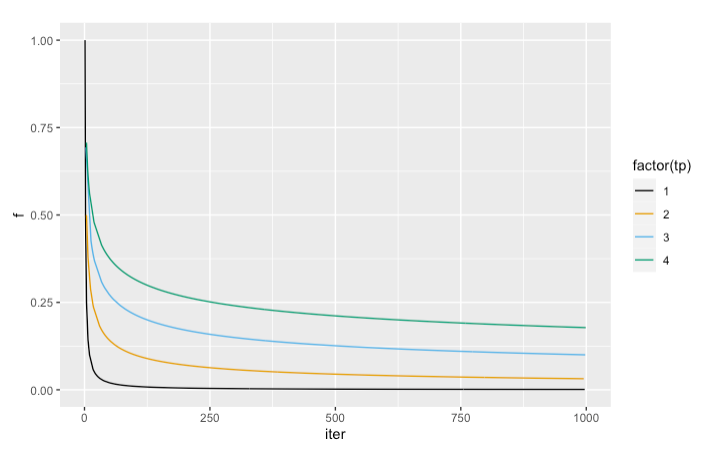}
\centering
\caption{function $f$ with tuning parameter 1 to 4. }
\label{fig:tp}
\end{figure}
Figure.\ref{fig:tp} shows $f$  when $t'>t$, i.e. a bad proposal. It shows that the probability of accepting a bad proposal diminishes with increasing number of iterations, and it does so more quickly with decreasing $tp$. The algorithm terminates when $tni$ (time no improvement) exceeds $iter$, maximum nr. of iterations allowed.
\end{samepage}

\begin{algorithm}[H]
\SetAlgoLined
  \texttt{// input parameters}\;
  $k$ \tcp*[l]{size k}
  tp \tcp*[l]{tuning parameter for acceptance}
  $T_1,\dots,T_n$ \tcp*[l]{construct tiles based on the best effort phase-one algorithm}
  $iter$  \tcp*[l]{max. nr. of iterations}
  \texttt{// init}\;
  $tni \gets 0$   \tcp*[l]{time no improvement}
  $i \gets 0$   \tcp*[l]{total nr. of iterations}
  $t \gets  t_k(T_1,\dots,T_n)$  \tcp*[l]{target }

\While{True} {
    $i++$\;
    $v \in V^+$ \tcp*[l]{random voxel $v$}
    $w \in V^+$ and $v-w \in \{-1,0,1\}^d$ \tcp*[l]{$w$ is a neighbour of $v$}
    \texttt{// $T_{\{v\}}$ : tile containing voxel $v$ }\;
    
    \eIf{( $T_{\{v\}} == T_{\{w\}} \;\;\&\;\; |T_{\{w\}}^- \cap V|>k )$ }{ 
        \texttt{// $v$ and $w$ are in the same tile $T$ and size interior $T$ is $>k$ }\;
        $t' \gets  t_k(T_1,\dots,[v],\dots, T_n)$ \tcp*[l]{ add tile $[v]$ and derive proposed target}
    }{
        $t' \gets  t_k(T_1,\dots,T_{\{w\}} \cup v,\dots, T_n)$ \tcp*[l]{ move $v$ to $T_{\{w\}}$ and derive proposed target}
    } 
    \texttt{//  accept/reject proposal? }\;
    \eIf{$U(0,1) < f(i,t,t',tp)$}{ 
        \texttt{// accept proposal}\;
        \eIf{$t' < t$}{
          \texttt{// $t'$ is the best so far}\;
          $t \gets t'$ \tcp*[l]{update target $t$}
          $tni \gets 0$ \tcp*[l]{reset time no improvement }
        }{
        tni++\;
        }
    }{
        \texttt{// reject proposal}\;
        tni++\; 
    }
    \If{$tni \geq iter$}{
    \textbf{break}\; 
    }
 } 
\caption{simulated annealing}
\label{alg:sa}
\end{algorithm}

\section{Further pruning illustration}

In Figure \ref{fig prune repeated} we illustrate the repeated pruning of the example voxel set $V$. 

\begin{figure}[!ht]
\centering
\begin{tikzpicture}[x=-1cm, scale=.5]

\path (14,12) node {$V$};
\begin{scope}[black]
\foreach \x in {4} \fill (\x,1) circle(.3cm);
\foreach \x in {3,4} \fill (\x,2) circle(.3cm);
\foreach \x in {3,4} \fill (\x,3) circle(.3cm);
\foreach \x in {3,4,5,8,10,11,12,13} \fill (\x,4) circle(.3cm);
\foreach \x in {2,3,...,6,8,9,...,12} \fill (\x,5) circle(.3cm);
\foreach \x in {2,3,...,10} \fill (\x,6) circle(.3cm);
\foreach \x in {2,3,...,10} \fill (\x,7) circle(.3cm);
\foreach \x in {1,2,...,13} \fill (\x,8) circle(.3cm);
\foreach \x in {3,4,...,13,14} \fill (\x,9) circle(.3cm);
\foreach \x in {3,4,...,13} \fill (\x,10) circle(.3cm);
\foreach \x in {3,4,5,12} \fill (\x,11) circle(.3cm);
\foreach \x in {3,4,5} \fill (\x,12) circle(.3cm);
\end{scope}

\begin{scope}[shift={(15 cm,0 cm)}]
\path (14,12) node {$V'$};
\begin{scope}[black]
\foreach \x in {3,4} \fill (\x,2) circle(.3cm);
\foreach \x in {3,4} \fill (\x,3) circle(.3cm);
\foreach \x in {3,4,5,10,11,12} \fill (\x,4) circle(.3cm);
\foreach \x in {2,3,...,6,8,9,...,12} \fill (\x,5) circle(.3cm);
\foreach \x in {2,3,...,10} \fill (\x,6) circle(.3cm);
\foreach \x in {2,3,...,10} \fill (\x,7) circle(.3cm);
\foreach \x in {2,...,13} \fill (\x,8) circle(.3cm);
\foreach \x in {3,4,...,13} \fill (\x,9) circle(.3cm);
\foreach \x in {3,4,...,13} \fill (\x,10) circle(.3cm);
\foreach \x in {3,4,5} \fill (\x,11) circle(.3cm);
\foreach \x in {3,4,5} \fill (\x,12) circle(.3cm);
\end{scope}

\begin{scope}[very thick]
\foreach \x in {4} \draw (\x,1) circle(.3cm);
\foreach \x in {8,13} \draw (\x,4) circle(.3cm);
\foreach \x in {1} \draw (\x,8) circle(.3cm);
\foreach \x in {14} \draw (\x,9) circle(.3cm);
\foreach \x in {12} \draw (\x,11) circle(.3cm);
\end{scope}
\end{scope}

\begin{scope}[shift={(0 cm,-14 cm)}]
\path (14,12) node {$V''$};
\begin{scope}[black]
\foreach \x in {3,4,5} \fill (\x,4) circle(.3cm);
\foreach \x in {2,3,...,6} \fill (\x,5) circle(.3cm);
\foreach \x in {2,3,...,10} \fill (\x,6) circle(.3cm);
\foreach \x in {2,3,...,10} \fill (\x,7) circle(.3cm);
\foreach \x in {2,...,13} \fill (\x,8) circle(.3cm);
\foreach \x in {3,4,...,13} \fill (\x,9) circle(.3cm);
\foreach \x in {3,4,...,13} \fill (\x,10) circle(.3cm);
\foreach \x in {3,4,5} \fill (\x,11) circle(.3cm);
\foreach \x in {3,4,5} \fill (\x,12) circle(.3cm);
\end{scope}

\begin{scope}[very thick]
\foreach \x in {3,4} \draw (\x,2) circle(.3cm);
\foreach \x in {3,4} \draw (\x,3) circle(.3cm);
\foreach \x in {4} \draw (\x,1) circle(.3cm);
\foreach \x in {8,10,11,12,13} \draw (\x,4) circle(.3cm);
\foreach \x in {8,9,...,12} \draw (\x,5) circle(.3cm);
\foreach \x in {1} \draw (\x,8) circle(.3cm);
\foreach \x in {14} \draw (\x,9) circle(.3cm);
\foreach \x in {12} \draw (\x,11) circle(.3cm);
\end{scope}
\end{scope}

\begin{scope}[shift={(15 cm,-14 cm)}]
\path (14,12) node {$V^{(3)}$};
\begin{scope}[black]
\foreach \x in {2,3,...,6} \fill (\x,5) circle(.3cm);
\foreach \x in {2,3,...,10} \fill (\x,6) circle(.3cm);
\foreach \x in {2,3,...,10} \fill (\x,7) circle(.3cm);
\foreach \x in {2,...,10} \fill (\x,8) circle(.3cm);
\foreach \x in {3,4,...,10} \fill (\x,9) circle(.3cm);
\foreach \x in {3,4,...,10} \fill (\x,10) circle(.3cm);
\end{scope}

\begin{scope}[very thick]
\foreach \x in {4} \draw (\x,1) circle(.3cm);
\foreach \x in {3,4} \draw (\x,2) circle(.3cm);
\foreach \x in {3,4} \draw (\x,3) circle(.3cm);
\foreach \x in {3,4,5,8,10,11,12,13} \draw (\x,4) circle(.3cm);
\foreach \x in {8,9,...,12} \draw (\x,5) circle(.3cm);
\foreach \x in {1,11,12,13} \draw (\x,8) circle(.3cm);
\foreach \x in {11,12,13,14} \draw (\x,9) circle(.3cm);
\foreach \x in {11,12,13} \draw (\x,10) circle(.3cm);
\foreach \x in {3,4,5,12} \draw (\x,11) circle(.3cm);
\foreach \x in {3,4,5} \draw (\x,12) circle(.3cm);
\end{scope}
\end{scope}

\begin{scope}[shift={(0 cm,-28 cm)}]
\path (14,12) node {$V^{(4)}$};
\begin{scope}[black]
\foreach \x in {3,...,10} \fill (\x,6) circle(.3cm);
\foreach \x in {3,...,10} \fill (\x,7) circle(.3cm);
\foreach \x in {3,...,10} \fill (\x,8) circle(.3cm);
\foreach \x in {3,4,...,10} \fill (\x,9) circle(.3cm);
\foreach \x in {3,4,...,10} \fill (\x,10) circle(.3cm);
\end{scope}

\begin{scope}[very thick]
\foreach \x in {4} \draw (\x,1) circle(.3cm);
\foreach \x in {3,4} \draw (\x,2) circle(.3cm);
\foreach \x in {3,4} \draw (\x,3) circle(.3cm);
\foreach \x in {3,4,5,8,10,11,12,13} \draw (\x,4) circle(.3cm);
\foreach \x in {2,...,6,8,9,...,12} \draw (\x,5) circle(.3cm);
\foreach \x in {2} \draw (\x,6) circle(.3cm);
\foreach \x in {2} \draw (\x,7) circle(.3cm);
\foreach \x in {1,2,11,12,13} \draw (\x,8) circle(.3cm);
\foreach \x in {11,12,13,14} \draw (\x,9) circle(.3cm);
\foreach \x in {11,12,13} \draw (\x,10) circle(.3cm);
\foreach \x in {3,4,5,12} \draw (\x,11) circle(.3cm);
\foreach \x in {3,4,5} \draw (\x,12) circle(.3cm);
\end{scope}
\end{scope}

\begin{scope}[shift={(15 cm,-28 cm)}]
\path (14,12) node {$V^{(5)}$};

\begin{scope}[very thick]
\foreach \x in {4} \draw (\x,1) circle(.3cm);
\foreach \x in {3,4} \draw (\x,2) circle(.3cm);
\foreach \x in {3,4} \draw (\x,3) circle(.3cm);
\foreach \x in {3,4,5,8,10,11,12,13} \draw (\x,4) circle(.3cm);
\foreach \x in {2,3,...,6,8,9,...,12} \draw (\x,5) circle(.3cm);
\foreach \x in {2,3,...,10} \draw (\x,6) circle(.3cm);
\foreach \x in {2,3,...,10} \draw (\x,7) circle(.3cm);
\foreach \x in {1,2,...,13} \draw (\x,8) circle(.3cm);
\foreach \x in {3,4,...,13,14} \draw (\x,9) circle(.3cm);
\foreach \x in {3,4,...,13} \draw (\x,10) circle(.3cm);
\foreach \x in {3,4,5,12} \draw (\x,11) circle(.3cm);
\foreach \x in {3,4,5} \draw (\x,12) circle(.3cm);
\end{scope}
\end{scope}

\end{tikzpicture}
\caption{Illustration of the repeated pruning of the voxel set $V$ from Figure \ref{fig tiling}.} \label{fig prune repeated}
\end{figure}
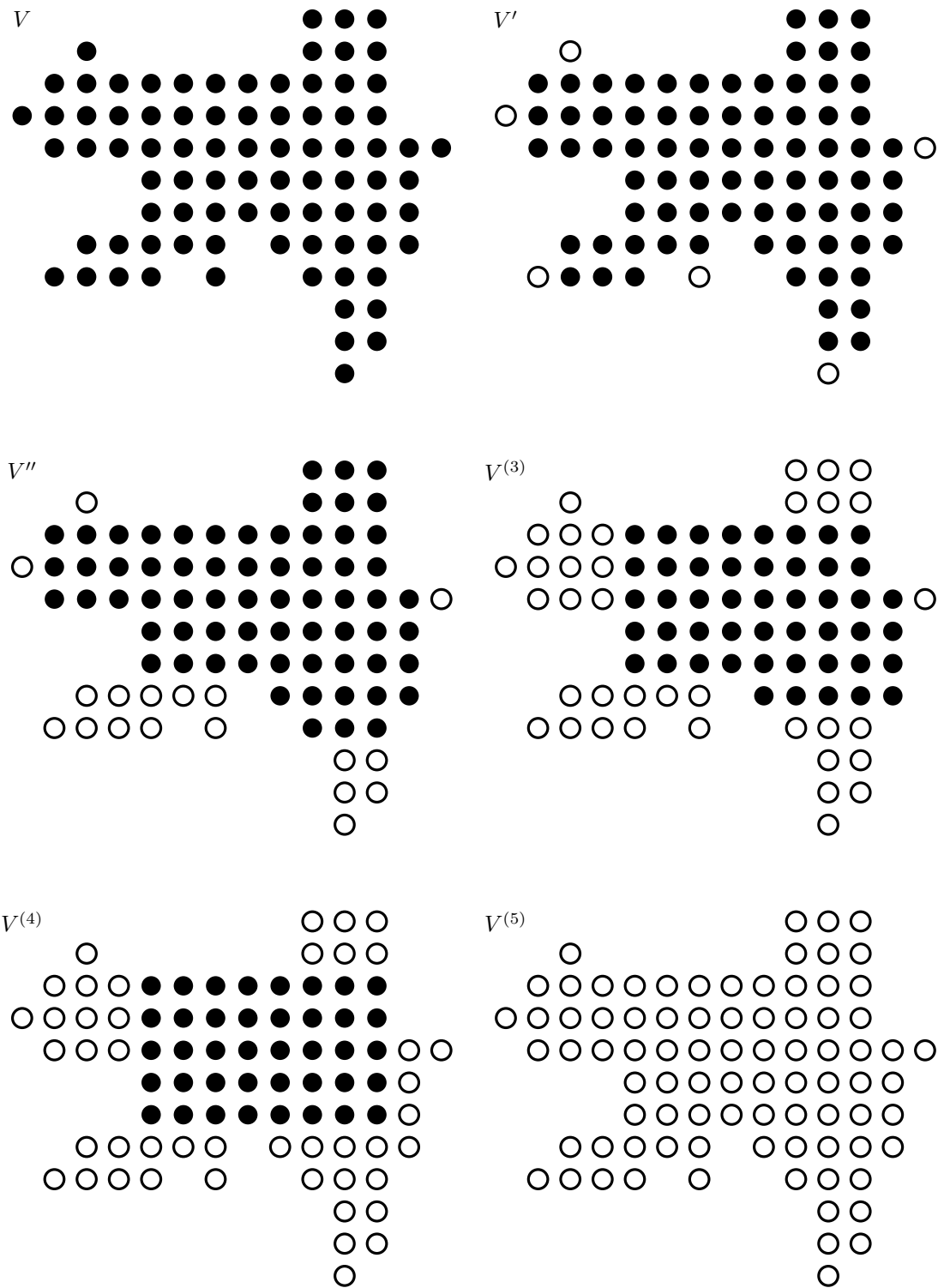

\section{Neurovault analysis}
For the Neurovault analysis we downloaded all 543 available collections (April 2019). From these collections we removed all empty collections (16), collections with no valid images (6), collections with no BOLD-fMRI images (127), collections with no statistics images (68), and collections with no group-level statistics images (114). This resulted in 218 valid collections containing 1909 statistics images.

To prevent extreme results we selected a representative sample of images with the following properties: brain size between 10k and 500k voxels, largest cluster size between 1 and 20k voxels, and estimated smoothness smaller than 13 voxels FWHM. This resulted in 1128 valid images for further analysis.

From these images we removed 310 images that were identical or did not contain any clusters with a size larger than the RFT-based cluster size at $\alpha=.05$. The final analyses were thus performed on 818 images.

For each image we performed the following analysis steps:

\begin{enumerate}
\item Check type of statistics image
\item If $t$-value image, check df, when available convert to $z$-scores, else leave as is.
\item Estimate smoothness on $z$-statistics image using 'smoothest'
\item Estimate contiguous clusters with $Z > 3.1$
\item Estimate $z$-threshold value associated with $k_M=14$
\item Estimate contiguous clusters with $z$-threshold when $k_M = 14$
\item Calculate cluster-extent $p$-values based on RFT for all clusters (both for $Z>3.1$ and $k_M=14$)
\item Estimate cluster True Discovery Proportion (cTDP) for each cluster (both for $Z>3.1$ and $k_M=14$)
\item Calculate voxelwise $p$-value based on RFT
\item Calculate the number of significant voxels based on voxelwise RFT threshold 
\end{enumerate}

\section{Pseudo code for permutations}

Here below we present the pseudo code for finding the $z$-score threshold with a given cluster extent threshold (see algorithm \ref{algZ}), or computing the cluster extent threshold with a given $z$-threshold (see algorithm \ref{algK}), using permutations. The problem of finding supra-threshold clusters for either threshold is equivalent to the standard incremental connectivity problem that can be solved efficiently using a disjoint-set data structure. For each permutation, we find sorted $z$-scores in $O(m\log m)$ time, and implementing the disjoint-set data structure using the optimized path compression and union by size takes linear time in the size of the output, i.e.,\ $O(v)$ for both algorithms. We suggest using at least 1000 permutations if the total number of permutations is too large.

We note that the permutation $z$-scores do not need to be pre-calculated; they may be calculated inside the for-loop if storage space is a consideration.\\

\SetKwInput{KwInput}{Input}
\SetKwInput{KwOutput}{Output}
\SetKwProg{Fn}{Function}{}{}
\SetKwFor{For}{for}{}{}

\SetKwFunction{FindZ}{FindZ}
\begin{algorithm}[!ht]
\SetNoFillComment
\small
\DontPrintSemicolon
\caption{Compute the permutation-based $z$-threshold, corresponding to a given cluster extent threshold, using a disjoint-set data structure.}
\label{algZ}
\KwInput{vectors of size $m$: $\mathbf{z}_{\pi_1}, \ldots, \mathbf{z}_{\pi_N}$ for $N$ permutations (the first permutation is equal to the identity);
a pre-specified cluster extent threshold $k_M$;
a list $\mathcal{N}$ of $m$ vectors, each storing the neighbours of a voxel;
a significance level $\alpha$.}
\KwOutput{a z-score threshold $Z_\alpha$.}
\BlankLine
\Fn{\FindZ{$\mathbf{z}_{\pi_1}, \ldots, \mathbf{z}_{\pi_N}, k_M, \mathcal{N}, \alpha$}}
{
    Initialize a $z$-score vector $\mathbf{Z}$ for $N$ permutations: $Z[j]=0$ for $j=1,\ldots,N$.\;
    \BlankLine
	\For{$j=1$ \KwTo $N$}
	{
		Sort $\mathbf{z}_{\pi_j}$ in descending order such that $z_{\pi_j}[1] \ge \cdots \ge z_{\pi_j}[m]$.\;
		\BlankLine
		Initialize disjoint sets $S_1=\{1\}, \ldots, S_m=\{m\}$.\;
		$v \leftarrow 1$\;
		\While{$|S_v| \le k_M$}
		{
		    $v \leftarrow v+1$\;
			\ForAll{$u \in \mathcal{N}[v]$ such that $u < v$}
			{
			    Merge $S_v$ and $S_u$ using union by size and path compression.\;
			}
		}
		$Z[j] \leftarrow z_{\pi_j}[v]$\;
	}
	\BlankLine
	Sort $\mathbf{Z}$ and find the $(1-\alpha)$-quantile $Z_\alpha = Z[ \lceil N(1-\alpha) \rceil ]$.\;
	\Return $Z_\alpha$
}
\end{algorithm}

\clearpage

\SetKwFunction{FindK}{FindK}
\begin{algorithm}[H]
\SetNoFillComment
\small
\DontPrintSemicolon
\caption{Compute the permutation-based cluster extent threshold for a given $z$-score threshold, using a disjoint-set data structure.}
\label{algK}
\KwInput{vectors of size $m$: $\mathbf{z}_{\pi_1}, \ldots, \mathbf{z}_{\pi_N}$ for $N$ permutations (the first permutation is equal to the identity);
a given $z$-threshold $z$;
a list $\mathcal{N}$ of $m$ vectors, each storing the neighbours of a voxel;
a significance level $\alpha$.}
\KwOutput{a cluster-extent  threshold $K_\alpha$.}
\BlankLine
\Fn{\FindK{$\mathbf{z}_{\pi_1}, \ldots, \mathbf{z}_{\pi_N}, z, \mathcal{N}, \alpha$}}
{
    Initialize a vector $\mathbf{K}$ for $N$ permutations: $K[j]=0$ for $j=1,\ldots,N$.\;
	\BlankLine
	\For{$j = 1$ \KwTo $N$}
	{
		Sort $\mathbf{z}_{\pi_j}$ in descending order such that $z_{\pi_j}[1] \ge \cdots \ge z_{\pi_j}[m]$.\;
		\BlankLine
		Initialize disjoint sets $S_1=\{1\}, \ldots, S_m=\{m\}$.\;
		$v \leftarrow 1$\;
		\While{$z_{\pi_j}[v] > z$}
		{
			\ForAll{$u \in \mathcal{N}[v]$ such that $u < v$}
			{
			 	Merge $S_v$ and $S_u$ using union by size and path compression.\;
			}
			$v \leftarrow v+1$\;
		}
		$K[j] \leftarrow \max \{|S_i| \colon i<v, i \in \mathbb{Z}^+\}$
	}
	\BlankLine
	Sort $\mathbf{K}$ and find the $(1-\alpha)$-quantile $K_\alpha = K[ \lceil N(1-\alpha) \rceil ]$.\; 
	\Return $K_\alpha$ 
}
\end{algorithm}

\section{Application: HCP Working Memory}

Based on the new method, TDP bounds were computed for supra-threshold clusters, formed by choosing either cluster-forming threshold or cluster-extent threshold and finding the other threshold based on the conventional Gaussian random field theory (RFT), and the overlapping anatomical regions. Table \ref{tbl: rft_k32} shows the results for a fixed $z$-threshold of $z=3.10$ and $k_M=32$, and Table \ref{tbl: rft_k14} shows the results for a given cluster-extent threshold $k_M=14$ and $z=3.61$. Consistent with what we observed for permutation inference, the results for RFT also suggest using the cluster-extent threshold $k_M=14$ instead of the standard $z$-threshold $z=3.10$ for better detection power. Similarly, decreasing $k_M$ leads to the increased $z$ and smaller clusters with higher TDP.

\begin{table}[!ht]
\caption{Results for supra-threshold clusters, defined by the cluster-forming $z$-threshold $Z > 3.10$ and minimal cluster extent threshold $k_M = 32$ based on RFT.}  \label{tbl: rft_k32}
\begin{tabular}{@{\extracolsep{\fill}} c rcc r rrcc cccc}
\toprule
\multicolumn{4}{c}{Cluster} & \multicolumn{5}{c}{Anatomical region} & \multicolumn{3}{c}{Position} & \\
\cmidrule(r){1-4} \cmidrule(lr){5-9} \cmidrule(l){10-12}
ID & size & TDP & LB & Region & size & overlap & TDP & LB & $x$ & $y$ & $z$ & $Z_\text{max}$ \\
\midrule
1 & 8870 & 0.479 & 0.384 & MFG & 18250 & 4049 & 0.106 & 0.087 & 44 & 72 & 60 & 8.87 \\
& & & & FP & 33571 & 2021 & 0.026 & 0.020 & & & & \\
& & & & IC & 6591 & 564 & 0.036 & 0.028 & & & & \\
2 & 8526 & 0.508 & 0.421 & sLOC & 27121 & 5142 & 0.089 & 0.071 & 19 & 42 & 61 & 9.51 \\
& & & & AG & 13689 & 4260 & 0.150 & 0.125 & & & & \\
& & & & pSMG & 14829 & 3804 & 0.125 & 0.104 & & & & \\
& & & & Precuneous & 18119 & 2491 & 0.065 & 0.053 & & & & \\
3 & 7956 & 0.444 & 0.323 & Cerebellum & 39724 & 6551 & 0.075 & 0.056 & 63 & 33 & 20 & 9.20 \\
4 & 6652 & 0.479 & 0.383 & MFG & 18250 & 4035 & 0.107 & 0.087 & 31 & 67 & 64 & 9.73 \\
& & & & FP & 33571 & 2587 & 0.035 & 0.027 & & & & \\
& & & & IC & 6591 & 589 & 0.037 & 0.028 & & & & \\
5 & 350 & 0.306 & 0.149 & pMTG & 11420 & 310 & 0.008 & 0.004 & 15 & 46 & 28 & 5.18 \\
& & & & tMTG & 9735 & 271 & 0.008 & 0.003 & & & & \\
6 & 100 & 0.270 & 0.110 & Cerebellum & 39724 & 100 & 0.001 & 0.000 & 49 & 35 & 10 & 6.56 \\
7 & 59 & 0.034 & 0.017 & Caudate & 4571 & 51 & 0.000 & 0.000 & 54 & 68 & 39 & 3.92 \\
8 & 58 & 0.069 & 0.017 & Cerebellum & 39724 & 58 & 0.000 & 0.000 & 42 & 36 & 10 & 4.85 \\
9 & 48 & 0.167 & 0.021 & Thalamus & 4602 & 34 & 0.000 & 0.000 & 43 & 53 & 43 & 4.55 \\
10 & 45 & 0.133 & 0.022 & Caudate & 4571 & 45 & 0.001 & 0.000 & 38 & 67 & 42 & 4.38 \\
11 & 35 & 0.086 & 0.029 & Cerebellum & 39724 & 35 & 0.000 & 0.000 & 44 & 41 & 25 & 4.20 \\
12 & 35 & 0.029 & 0.029 & Thalamus & 4602 & 35 & 0.000 & 0.000 & 42 & 52 & 35 & 5.02 \\
\cmidrule(lr){1-13}
Total & 32734 & 0.472 & 0.372 & MFG & 18250 & 8084 & 0.213 & 0.174 & & & & \\
& & & & Cerebellum & 39724 & 6744 & 0.076 & 0.056 & & & & \\
& & & & sLOC & 27121 & 5142 & 0.089 & 0.071 & & & & \\
& & & & FP & 33571 & 4608 & 0.061 & 0.047 & & & & \\
& & & & AG & 13689 & 4260 & 0.150 & 0.125 & & & & \\
& & & & pSMG & 14829 & 3804 & 0.125 & 0.104 & & & & \\
& & & & Precuneous & 18119 & 2491 & 0.065 & 0.053 & & & & \\
& & & & IC & 6591 & 1153 & 0.073 & 0.056 & & & & \\
& & & & pMTG & 11420 & 310 & 0.008 & 0.004 & & & & \\
& & & & tMTG & 9735 & 271 & 0.008 & 0.003 & & & & \\
& & & & Caudate & 4571 & 96 & 0.002 & 0.000 & & & & \\
& & & & Thalamus & 4602 & 69 & 0.001 & 0.000 & & & & \\
\bottomrule
\end{tabular}
\end{table}

\begin{table}[!ht]
\caption{Results for supra-threshold clusters, defined by the cluster-forming $z$-threshold of $Z > 3.61$, based on RFT, and minimal cluster extent threshold $k_M = 14$.}  \label{tbl: rft_k14}
\begin{tabular}{@{\extracolsep{\fill}} c rcc r rrcc cccc}
\toprule
\multicolumn{4}{c}{Cluster} & \multicolumn{5}{c}{Anatomical region} & \multicolumn{3}{c}{Position} & \\
\cmidrule(r){1-4} \cmidrule(lr){5-9} \cmidrule(l){10-12}
ID & size & TDP & LB & Region & size & overlap & TDP & LB & $x$ & $y$ & $z$ & $Z_\text{max}$ \\
\midrule
1 & 7415 & 0.607 & 0.534 & sLOC & 27121 & 4419 & 0.094 & 0.080 & 19 & 42 & 61 & 9.51 \\
& & & & AG & 13689 & 3846 & 0.166 & 0.147 & & & & \\
& & & & pSMG & 14829 & 3426 & 0.137 & 0.122 & & & & \\
& & & & Precuneous & 18119 & 2182 & 0.069 & 0.060 & & & & \\
2 & 7158 & 0.580 & 0.493 & MFG & 18250 & 3336 & 0.106 & 0.091 & 44 & 72 & 60 & 8.87 \\
& & & & SFG & 18946 & 2976 & 0.089 & 0.075 & & & & \\
& & & & poIFG & 8301 & 1410 & 0.092 & 0.077 & & & & \\
& & & & IC & 6591 & 500 & 0.041 & 0.034 & & & & \\
3 & 5655 & 0.550 & 0.440 & Cerebellum & 39724 & 5081 & 0.071 & 0.058 & 63 & 33 & 20 & 9.20 \\
4 & 5347 & 0.578 & 0.492 & MFG & 18250 & 3405 & 0.109 & 0.094 & 31 & 67 & 64 & 9.73 \\
& & & & FP & 33571 & 1960 & 0.032 & 0.027 & & & & \\
& & & & IC & 6591 & 526 & 0.042 & 0.036 & & & & \\
5 & 223 & 0.413 & 0.202 & OP & 15486 & 173 & 0.004 & 0.002 & 39 & 22 & 36 & 5.72 \\
& & & & ICC & 7134 & 121 & 0.007 & 0.004 & & & & \\
6 & 151 & 0.384 & 0.205 & pMTG & 11420 & 151 & 0.005 & 0.003 & 15 & 46 & 28 & 5.18 \\
7 & 69 & 0.377 & 0.188 & Cerebellum & 39724 & 69 & 0.001 & 0.000 & 49 & 35 & 10 & 6.56 \\
8 & 69 & 0.377 & 0.130 & FP & 33571 & 69 & 0.001 & 0.000 & 31 & 86 & 29 & 5.77 \\
9 & 61 & 0.344 & 0.164 & FP & 33571 & 61 & 0.001 & 0.000 & 57 & 88 & 29 & 5.16 \\
10 & 44 & 0.341 & 0.136 & OP & 15486 & 44 & 0.001 & 0.000 & 51 & 15 & 42 & 5.35 \\
11 & 27 & 0.222 & 0.037 & Thalamus & 4602 & 21 & 0.001 & 0.000 & 43 & 53 & 43 & 4.55 \\
12 & 23 & 0.087 & 0.043 & Cerebellum & 39724 & 23 & 0.000 & 0.000 & 42 & 36 & 10 & 4.85 \\
13 & 20 & 0.250 & 0.050 & Caudate & 4571 & 20 & 0.001 & 0.000 & 38 & 67 & 42 & 4.38 \\
14 & 19 & 0.053 & 0.053 & Cerebellum & 39724 & 19 & 0.000 & 0.000 & 42 & 32 & 28 & 4.19 \\
15 & 17 & 0.176 & 0.059 & tMTG & 9735 & 17 & 0.000 & 0.000 & 18 & 41 & 32 & 4.13 \\
16 & 16 & 0.125 & 0.063 & Thalamus & 4602 & 16 & 0.000 & 0.000 & 42 & 52 & 35 & 5.02 \\
\cmidrule(lr){1-13}
Total & 26314 & 0.574 & 0.484 & MFG & 18250 & 6741 & 0.215 & 0.185 & & & & \\
& & & & Cerebellum & 39724 & 5192 & 0.072 & 0.058 & & & & \\
& & & & sLOC & 27121 & 4419 & 0.094 & 0.080 & & & & \\
& & & & AG & 13689 & 3846 & 0.166 & 0.147 & & & & \\
& & & & pSMG & 14829 & 3426 & 0.137 & 0.122 & & & & \\
& & & & SFG & 18946 & 2976 & 0.089 & 0.075 & & & & \\
& & & & Precuneous & 18119 & 2182 & 0.069 & 0.060 & & & & \\
& & & & FP & 33571 & 2090 & 0.034 & 0.027 & & & & \\
& & & & poIFG & 8301 & 1410 & 0.092 & 0.077 & & & & \\
& & & & IC & 6591 & 1026 & 0.083 & 0.070 & & & & \\
& & & & OP & 15486 & 217 & 0.005 & 0.002 & & & & \\
& & & & pMTG & 11420 & 151 & 0.005 & 0.003 & & & & \\
& & & & ICC & 7134 & 121 & 0.007 & 0.004 & & & & \\
& & & & Thalamus & 4602 & 37 & 0.001 & 0.000 & & & & \\
& & & & Caudate & 4571 & 20 & 0.001 & 0.000 & & & & \\
& & & & tMTG & 9735 & 17 & 0.000 & 0.000 & & & & \\
\bottomrule
\end{tabular}
\end{table}

\end{document}